
\makeatletter
\def\UTFActivateAnd#1{\bgroup\def\UTFviii@defined##1{\egroup#1##1}}
\UTFActivateAnd\def定{\UTFActivateAnd\def}定令{\UTFActivateAnd\let}

\documentclass{IEEEtran}

\usepackage[english]{babel}

\usepackage[OT2,T1]{fontenc}
	\DeclareSymbolFont{CyrillicLetters}{OT2}{wncyr}{m}{n}
	\DeclareMathSymbol\Sha\mathalpha{CyrillicLetters}{"58}

\usepackage{mathtools,amssymb,bm,amsthm}
	\allowdisplaybreaks
	令Γ\Gamma 令Δ\Delta 令Θ\Theta 令Λ\Lambda 令Ξ\Xi 令Π\Pi 令Σ\Sigma 令Υ\Upsilon 令Φ\Phi
	令Ψ\Psi 令Ω\Omega 令α\alpha 令β\beta 令γ\gamma 令δ\delta 令ε\varepsilon 令ζ\zeta
	令η\eta 令θ\theta 令ι\iota 令κ\kappa 令λ\lambda 令μ\mu 令ν\nu 令ξ\xi 令π\pi 令ρ\rho
	令ς\varsigma 令σ\sigma 令τ\tau 令υ\upsilon 令φ\varphi 令χ\chi 令ψ\psi 令ω\omega
	令ϑ\vartheta 令ϕ\phi 令ϖ\varpi 令Ϝ\Digamma 令ϝ\digamma 令ϰ\varkappa 令ϱ\varrho
	令ϴ\varTheta 令ϵ\epsilona
	定𝛂{\bm\alpha} 定𝛃{\bm\beta} 定𝛄{\bm\gamma} 定𝛅{\bm\delta}
	定𝛆{\bm\varepsilon} 定𝛇{\bm\zeta} 定𝛈{\bm\eta} 定𝛉{\bm\theta} 定𝛊{\bm\iota}
	定𝛋{\bm\kappa} 定𝛌{\bm\lambda} 定𝛍{\bm\mu} 定𝛎{\bm\nu} 定𝛏{\bm\xi}
	定𝛐{\bm\omicron} 定𝛑{\bm\pi} 定𝛒{\bm\rho} 定𝛓{\bm\varsigma} 定𝛔{\bm\sigma}
	定𝛕{\bm\tau} 定𝛖{\bm\upsilon} 定𝛗{\bm\varphi} 定𝛘{\bm\chi} 定𝛙{\bm\psi}
	定𝛚{\bm\omega} 定𝛛{\bm\partial} 定𝛜{\bm\epsilon} 定𝛡{\bm\varpi}
	定𝛝{\bm\vartheta} 定𝛞{\bm\varkappa} 定𝛟{\bm\phi} 定𝛠{\bm\varrho}
	定𝐛{\bm b} 定𝐜{\bm c} 定𝐯{\bm v} 定𝐱{\bm x} 定𝐲{\bm y} 
	定𝟎{\bm0} 定𝟏{\bm1}
	定𝔼{\mathbb E} 定𝔽{\mathbb F} 定ℕ{\mathbb N} 定ℙ{\mathbb P} 定ℝ{\mathbb R}
	定ℤ{\mathbb Z}
	定𝒞{\mathcal C} 定𝒟{\mathcal D} 定ℰ{\mathcal E} 定ℱ{\mathcal F} 定𝒢{\mathcal G}
	定𝒮{\mathcal S} 定𝒯{\mathcal T}
	令ℓ\ell 令∂\partial 令∇\nabla
	定（{\bigl(} 定）{\bigr)}定［{\bigl[} 定］{\bigr]} 定｛{\bigl\{} 定｝{\bigr\}}
	令√\sqrt 令⌈\lceil 令⌉\rceil 令⌊\lfloor 令⌋\rfloor 令⟨\langle 令⟩\rangle
	令｜\mid 定＼{\setminus\nolinebreak} 令：\colon
	令ˆ\hat 令˜\tilde 令¯\bar 令˘\breve 令˙\dot 令¨\ddot 令°\mathring 令ˇ\check
	令∏\prod 令∑\sum 令∫\int 令⋀\bigwedge 令⋁\bigvee 令⋂\bigcap 令⋃\bigcup 令⨁\bigoplus
	令⨂\bigotimes
	令±\pm 令·\cdot 令×\times 令÷\frac 令†\dagger 令•\bullet 令∘\circ 令∧\wedge 令∨\vee
	令∩\cap 令∪\cup 令⊕\oplus 令⊗\otimes 令⊙\odot 令⋆\star
	令¬\neg 令…\ldots 令∀\forall 令∁\complement 令∃\exists 令∞\infty 令⊤\top 令⊥\bot
	令⋯\cdots 令★\bigstar
	令←\gets 令→\to 令↔\leftrightarrow 令↖\nwarrow 令↗\nearrow 令↘\searrow 令↙\swarrow
	令↞\twoheadleftarrow 令↠\twoheadrightarrow 令↤\mapsfrom 令↦\mapsto
	令↩\hookleftarrow 令↪\hookrightarrow 令↾\upharpoonright 令∈\in 令∉\notin 令∋\ni
	令∼\sim 令≅\cong 令≈\approx 定≔{\coloneqq} 令≕\eqqcolon 令≠\neq 令≡\equiv
	定≢{\not\equiv} 令≤\leqslant 令≥\geqslant 令≪\ll 令≫\gg 令⊆\subseteq 令⊇\supseteq
	令⋮\vdots 令⋰\adots 令⋱\ddots 令⟂\perp 令♭\flat 令♮\natural 令♯\sharp
	令⟵\longleftarrow 令⟶\longrightarrow 令⟼\longmapsto
	定⁰{^0} 定¹{^1} 定²{^2} 定³{^3} 定⁴{^4} 定⁵{^5} 定⁶{^6} 定⁷{^7} 定⁸{^8} 定⁹{^9}
	定₀{_0} 定₁{_1} 定₂{_2} 定₃{_3} 定₄{_4} 定₅{_5} 定₆{_6} 定₇{_7} 定₈{_8} 定₉{_9}
	令㏒\log 令㏑\ln
	定℃{\,^∘\text{C}}
	定†#1†{\text{#1}}
	定¯#1¯{\overline{#1}}
	定⋰#1{\tikz[baseline=-axis_height]\pgfmathsetmacro\sec{sec(min(#10,90-#10))*.15}
		\draw[nodes={inner sep=0}]node{.}(-#10:\sec)node{.}(180-#10:\sec)node{.};}
	定⋯{⋰0} 定⋮{⋰9} 定⋱{⋰{45.}}
	令＾\stackrel
	\@namedef[{\begin{equation*}}
	\@namedef]{\end{equation*}}
	\def\bma#1{\begin{bmatrix}#1\end{bmatrix}}
	\def\sma#1{\begin{smallmatrix}#1\end{smallmatrix}}
	\def\cas#1{\begin{cases*}#1\end{cases*}}

\usepackage{tikz-cd,tikz-3dplot}
	\usetikzlibrary{backgrounds,shapes.geometric}
	定色#1!#2#3#4#5#6#7{\definecolor{#1}{HTML}{#2#3#4#5#6#7}}
	色PMS2767!182B49  色PMS3015!00629B    色PMS1245!C69214  色PMS116!FFCD00     
	色PMS3115!00C6D7  色PMS7490!6E963B    色PMS3945!F3E500  色PMS144!FC8900     
	色Black!000000    色CoolGray9!747678  色PMS401!B6B1A9   色Metallic871!84754E
	\let\oldpointxyz\pgfpointxyz
	\newdimen\pgf@z
	\def\pgfpointxyz#1#2#3{\oldpointxyz{#1-\camerax}{#2-\cameray}{#3-\cameraz}
		\pgfmathsetlength\pgf@z{\rcarot*\pgftemp@x+\rcbrot*\pgftemp@y+\rccrot*\pgftemp@z}
		\pgfmathsetlength\pgf@x{\pgf@x*\cameras/(\camerad-\pgf@z)}
		\pgfmathsetlength\pgf@y{\pgf@y*\cameras/(\camerad-\pgf@z)}}
	\tikzset{every picture/.style={line cap=round,line join=round}}
	
	\def\dss{\hskip9inminus9in}

\usepackage{pgfplotstable,booktabs,multirow}
	\pgfplotsset{
		compat/show suggested version=false,compat=1.16,
		width=8cm,height=6cm
	}
	\newcommand\linlin[2][]{\begin{axis}[#1]#2\end{axis}}

\usepackage{csquotes}

\usepackage[backend=bibtex,style=alphabetic]{biblatex}
	\addbibresource{TropicalGT-23}
	\biblabelsep0pt

\usepackage[unicode,pdfusetitle]{hyperref}
	\hypersetup{
		pdfsubject={Information Theory (cs.IT); 05B20; 15A80},
		pdfkeywords={group testing, PCR testing, tropical arithmetic},
		colorlinks,allcolors=PMS2767,
	}

\usepackage[hyphenbreaks]{breakurl}

\usepackage{cleveref}
	
	定名#1:#2~#3?#4 {\crefname{#1}{#2#3}{#2#4}}
	名section:Section~?s 名appendix:Appendix~?s  名figure:Figure~?s 名table:Table~?s
	定理#1:#2~#3?#4 {\newtheorem{#1}[allams]{#2#3}名#1:#2~#3?#4 }
	理thm:Theorem~?s 理cor:Corollar~y?ies 理lem:Lemma~?s 理pro:Proposition~?s
	\theoremstyle{definition} 理dfn:Definition~?s 理exa:Example~?s
	\theoremstyle{remark} 理cla:Claim~?s 理rem:Remark~?s 理axi:Assumption~?s
	定式#1:#2~#3?#4 {名#1:#2~#3?#4 \creflabelformat{#1}{\textup{(##2##1##3)}}}
	式equ:equation~?s 式ine:inequalit~y?ies 式for:formula~?s 式mat:matri~x?ces
	式min:minimum~?s 式cod:codeword~?s 
	定標#1:#2?{\label@in@display@optarg[#1]{#1:#2}}
	\AtBeginDocument{\def\label@in@display#1{\incr@eqnum\tag{\theequation}標#1?}}

\begin{document}

\title{
                             Tropical Group Testing
}
\author{
                                  Hsin-Po Wang
                                      and
                                  Ryan Gabrys
                                      and
                                Alexander Vardy%
\thanks{
                              The authors are with
                  University of California San Diego, CA, USA.
                           This work was supported by
                    NSF grants CCF-2107346 and CCF-1764104.
               Emails: \{hsw001, rgabrys, avardy\} @eng.ucsd.edu
}}
\maketitle

\advance\baselineskip0pt plus.25pt minus.125pt

\begin{abstract}\boldmath
	Polymerase chain reaction (PCR) testing is the gold standard for diagnosing
	COVID-19.  PCR amplifies the virus DNA 40 times to produce measurements of
	viral loads that span seven orders of magnitude.  Unfortunately, the outputs
	of these tests are imprecise and therefore quantitative group testing
	methods, which rely on precise measurements, are not applicable. Motivated
	by the ever-increasing demand to identify individuals infected with
	SARS-CoV-19, we propose a new model that leverages tropical arithmetic
	to characterize the PCR testing process. Our proposed framework, termed
	tropical group testing, overcomes existing limitations of quantitative group
	testing by allowing for imprecise test measurements. In many cases, some of
	which are highlighted in this work, tropical group testing is provably more
	powerful than traditional binary group testing in that it requires fewer
	tests than classical approaches, while additionally providing a mechanism to
	identify the viral load of each infected individual.  It is also empirically
	stronger than related works that have attempted to combine PCR, quantitative
	group testing, and compressed sensing.
\end{abstract}

\begin{figure*}
	\def\tubepop#1#2,{\xdef\car{#1}\xdef\cdr{#2}}
	\tikzset{
		syringe/.pic={
			\tikzset{scale=1/5,rotate=-30,name prefix/.get=\person}
			\draw(0,0)--(0,-3)node{\person};
			\draw(-2,6)-|(-1cm-4pt,4)(1cm+4pt,4)|-(2,6);
			\draw[,line width=2](-1,0)rectangle(1,5);
			\filldraw[white,line width=1.2](-1,0)rectangle(1,6);
			\draw(-1,9)--(1,9)(-1/2,9)|-(-1,3cm+4pt)--+(2,0)-|(1/2,9);
			\fill[\person](-1,0)rectangle(1,3);
			\draw(0,3)node()[transform shape,inner xsep=3cm,inner ysep=6cm]{};
		},
		tube/.pic={
			\tikzset{scale=1/5,rotate=-30}
			\draw(-1cm-4pt,1)--+(0,7)(1cm+4pt,1)--+(0,7);
			\draw[line width=2](-1,2)--(-1,1/2)--(0,-1/2)--(1,1/2)--(1,2)
				node[right,name prefix/.get=\tube]{\expandafter\tubepop\tube,\car};
			\filldraw[white,line width=1.2](-1,8)--(-1,1/2)--(0,-1/2)--(1,1/2)--(1,8);
			\expandafter\tubepop\contentstring{},
			\ifx\car\pgfutil@empty\else
				\fill[white,\car](-1,1)--(-1,1/2)--(0,-1/2)--(1,1/2)|-cycle;
			\fi
			\foreach\y in{1,...,7}{
				\expandafter\tubepop\cdr{},
				\ifx\car\pgfutil@empty\else\fill[white,\car](-1,\y)rectangle+(2,1);\fi
			}
			\draw(0,4)node()[transform shape,inner xsep=3cm,inner ysep=5cm]{};
		},
		X/.style={PMS116},
		Y/.style={PMS3015},
		Z/.style={PMS1245},
		T/.code={\def\contentstring{#1}},
	}
	$$\tikz{
		\draw(0,4)pic(X){syringe}(0,2)pic(Y){syringe}(0,0)pic(Z){syringe};
		\draw(5,3)pic(B1)[T=Z]{tube}(9,3)pic(B2)[T=ZZY]{tube}(13,3)pic(B3)[T=ZZZZYYX]{tube};
		\draw(5,1)pic(A1)[T=X]{tube}(9,1)pic(A2)[T=XXY]{tube}(13,1)pic(A3)[T=XXXXYYZ]{tube};
		\draw[every edge/.style={draw,->},nodes={sloped,auto}]
			(X)edge node{add}(A1)edge[bend left=22.5]node{add}(B3)
			(Y)edge[bend left=30]node{add}(B2)edge[bend right=30,']node{add}(A2)
			(Z)edge[']node{add}(B1)edge[bend right=22.5,']node{add}(A3)
		;
		\draw[every edge/.style={draw,->>},nodes={auto,'}]
			(B1)edge node{double}(B2)(B2)edge node{double}(B3)
			(A1)edge[']node{double}(A2)(A2)edge[']node{double}(A3)
		;
	}$$
	\caption{
		Syringes to the left:  Specimens are extracted from three people, X
		(gold), Y (blue) and Z (brown).  Tubes to the right:  Specimens will
		be remixed in two tubes, A and B, at the same time the PCR machine is
		amplifying DNA.  The schedule of mixing and amplifying is shown by the
		arrows.  We double the colored area to represent the fact that the virus
		DNA is replicated (in reality the liquid would not expand in volume).
		Assume one of X, Y, and Z is infected.  If the test results show that
		the DNA is twice as concentrated in tube A than in tube B, person X is
		infected.  If tubes A and B contain roughly the same concentration of
		DNA, person Y is infected.  Otherwise, person Z is infected.
	}\label{fig:syringe}
\end{figure*}

\section{Introduction}

	\IEEEPARstart{C}{OVID-19 pandemic} has highlighted the critical role that
	widely-accessible testing can have in controlling the spread of infectious
	diseases. Efficient testing schemes have the potential to simultaneously
	reduce the time to diagnosis while improving both the reliability and
	accuracy of the testing procedure. This subject has attracted significant
	attention in the open literature \cite{AE22}; however, existing works do not
	accurately model the semiquantitative information available at the output of
	the \emph{polymerase chain reaction} (PCR) testing methods used to detect
	the presence of SARS-CoV-19.
	
	PCR tests output \emph{cycle threshold} (Ct) values, which, as a result of
	the testing mechanism itself, are typically represented as semiquantitative
	measurements in the $\log$ domain \cite{FGSAHDTCMM20, ALLCYSAK20}). In this
	instance, the term \emph{quantitative} refers to the fact that the tests'
	readings are non-binary and \emph{semi} means that the readings are noisy or
	potentially inaccurate.  Previous semiquantitative approaches are ill-suited
	for modeling the output of a PCR test as previous works mostly rely on
	the assumption that test measurements are reported on a linear (rather
	than a $\log$) scale \cite{Perry20, DLO20, Austin20, GARPAGCGRGRG21,
	SLWSSOGSEGGMSFNSPH20, YMX20, YCWXM20, HN21, CSSEM21, PAJB21, PBJ20,
	LYLCC21, ZDLUAC21}

	As an illustration of the potential problem of modeling the PCR test
	outputs on a linear scale, consider the Ct value of a test as the dB
	value of a sound wave or the pH value of a liquid.  When adding a $50$~dB
	white noise with a $30$~dB one, we get a $50.04$~dB white noise that is
	indistinguishable from $50$~dB.  Due to the wide range in viral load between
	infected individuals, the same phenomenon for Ct values has been observed
	and is often referred to as \emph{masking} \cite{GPRRCGM21, BMR21, JM21}.  

	In order to address the masking issue and also to take advantage of
	the quantitative outputs available from PCR, we propose introducing
	\emph{delays} during the DNA amplification process in order to encode
	extra information. The basic idea will be to generate tests where each of
	the samples within a test can be inserted at different times. As a simple
	example of how this would work, suppose we design a test that consists of a
	single sample from an infected individual that has a Ct value of $X$. Then,
	if we delay inserting the sample by $Δ$ cycles, the output of the resulting
	test would be $X+Δ$.
	
	We use tropical multiplication $x⊙y≔x+y$ and tropical addition
	$x⊕y≔\min(x,y)$ to model the behavior of the Ct values that are provided as
	output of each of the PCR tests.  See \cref{tab:quant} for a comparison of
	this versus other models.  According to our model, the extra information can
	be retrieved by \emph{matching} the pattern of the Ct values against the
	pattern of the delays.  See \cref{fig:syringe} for an illustration of this
	process.  Delaying and matching, hand in hand, leave the masking issue
	nowhere to stand.

\pgfplotstableread{
	Regime				Reading							Remixing			
	Binary				{Negative, Positive}			$†Neg†∨†Pos†=†Pos†$	
	Tropical			$2^{-∞},2^{-40},\dotsc,2^{-12}$	$\min(30,15)=15$	
	Semiquantitative	$[0,3),[3,6),[6,9),\dotsc$		$[0,3)+[3,6)=[3,9)$	
	Quantitative		$0,1,2,3,4,5,\dotsc$			$8+9=17$			
}\tablequant
\begin{table}
	\caption{
		Four group testing regimes.  Blending specimens is modeled by
		logical or, minimum, interval addition, and ordinary addition.
	}\label{tab:quant}
	\def\arraystretch{1.5}
	$$\pgfplotstabletypeset[
		every head row/.style={before row=\toprule,after row=\midrule},
		every last row/.style={after row=\bottomrule},string type,
	]\tablequant$$
\end{table}

\subsection{Contributions of this paper}

	We propose a new model based upon two simple techniques\allowbreak
	---delaying and matching---for handling quantitative measurements that
	behave like dB values, pH values, or Ct values. With cleverly-crafted
	delays, we can attack a handful of scenarios as listed below.

	\begin{itemize}
		\item	When there is a single ($D=1$) infected person in a population
				of size $N$, we can identify her in just $T=2$ nonadaptive
				tests, where $N$ can be arbitrarily large (\cref{thm:difflay}).
				When delays are limited to $ℓ$ cycles, we show
				more generally $N≈Tℓ^{T-1}$ (\cref{thm:shell}).
		\item	For the case of $D=2$ infected persons within a population
				of size $N$, we give two constructions.
		\begin{itemize}
			\item	The first construction uses $T≈2√N$ nonadaptive tests
					(\cref{thm:pipet}).  In this construction,
					every person is present in only two tests.
			\item	The second construction uses $T≈㏒₂(N)+㏒₂(㏒₂N)$
					nonadaptive tests only (\cref{thm:modp}). When the delays
					are constrained, we can limit it to $ℓ≈㏒₂(N)/㏒₂(㏒₂N)$
					cycles, and achieve $T≈1.01㏒₂N$ (\cref{thm:bite}).
					This construction outperforms the information-theoretical
					bound of binary group testing.
		\end{itemize}
		\item	For general $D$, we give one necessary and two sufficient
				conditions of the existence for group testing schemes
				(\cref{sec:disjunct}).
		\item	When adaptive testing is allowed, $T=4$ tests are sufficient
				to find $D=2$ infected persons among arbitrarily many persons
				(\cref{thm:pad}).
		\item	In general, $T=3D+1$ adaptive tests are sufficient to locate
				$D$ infected persons among arbitrarily many persons
				(\cref{thm:nim}).  For this approach, one does not need to
				know $D$ beforehand.  When delays are limited to $ℓ$ cycles,
				we show that $T≈4D㏒_ℓN$ tests suffice (\cref{thm:deep}).
		\item	We present simulations that confirm that in many cases matching
				and delaying outperform related works (\cref{app:simulate}).
		\begin{itemize}
			\item	Without delaying, matching alone has a comparable
					sensitivity--specificity tradeoff as existing works.
			\item	The tradeoff improves as the range of the Ct
					value increases, i.e., matching works better when
					the masking issue is supposedly more destructive.
			\item	Adding random delay improves the tradeoff
					even further; the greater the variance of
					the delays, the better the performance.
		\end{itemize}
	\end{itemize}

	All construction proposed in this paper report the identities
	of the infected people as well as estimate the extent of infection;
	we are not sacrificing quantitativity in favor of sensitivity and
	specificity.  The reported Ct values can help further diagnoses
	\cite{PCMRZWNGHC20, WJCRDBACOOOSRH20}.
	
\subsection{Organization of this paper}
 
	\Cref{sec:prepare} reviews PCR and group testing and states
		the assumptions and goals of tropical group testing.
	\Cref{sec:diff-lay} develops optimal strategies
		to identify a single ($D=1$) infected person.
	\Cref{sec:pipet} considers the case of $D=2$ infected persons
		with minimum pipetting efforts.
	\Cref{sec:bite} presents a testing scheme for $D=2$ infected persons
		with nearly optimal number of tests.
	\Cref{sec:disjunct} presents necessary and sufficient conditions
		on non-adaptive testing strategies that hold for general $D$.
	\Cref{sec:pad} and \Cref{sec:nim} present our proposed adaptive strategies
		for $D=2$ and general $D$, respectively.
	For a detailed breakdown of the organization of the paper and the problems
	tackled in the respective sections, please see \cref{tab:class}.

\begin{figure*}
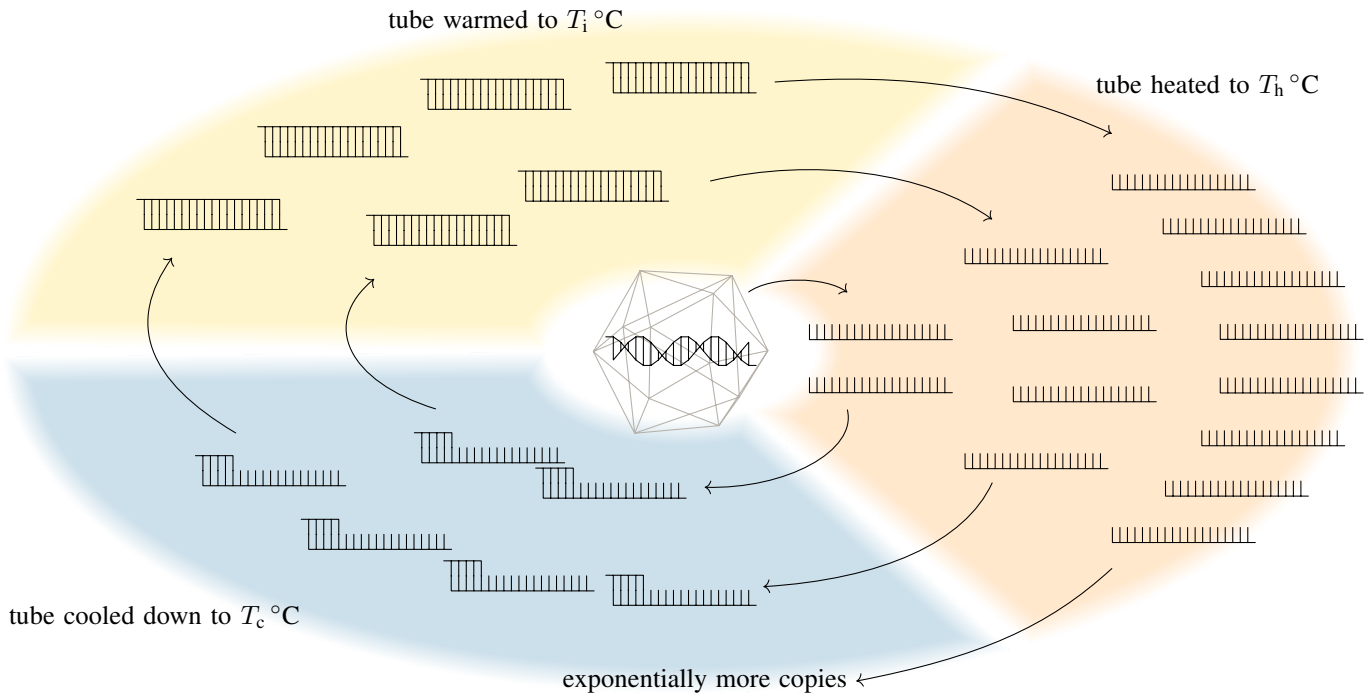

	\tdplotsetmaincoords{90-23.97565}{120.973882} 
	\def\camerax{0}\def\cameray{0}\def\cameraz{0}\def\cameras{3}\def\camerad{30}
	$$\tikz[yscale=5/11,xscale=9/10]{
		\def\r{2}\def\R{10}
		\def\a{59}\def\b{-59}
		\path[save path=\hot](\a:\r)arc(\a:\b:\r)--(\b:\R)arc(\b:\a:\R)--cycle;
		\def\a{179}\def\b{61}
		\path[save path=\med](\a:\r)arc(\a:\b:\r)--(\b:\R)arc(\b:\a:\R)--cycle;
		\def\a{299}\def\b{181}
		\path[save path=\cold](\a:\r)arc(\a:\b:\r)--(\b:\R)arc(\b:\a:\R)--cycle;
		\begin{scope}[every path/.style={use path=\hot},PMS144!20]
			\clip;
			\fill;
			\foreach\step in {100,95,...,0}{\draw[line width=\step/5,.!\step];}
		\end{scope}
		\begin{scope}[every path/.style={use path=\med},PMS116!20]
			\clip;
			\fill;
			\foreach\step in {100,95,...,0}{\draw[line width=\step/5,.!\step];}
		\end{scope}
		\begin{scope}[every path/.style={use path=\cold},PMS3015!20]
			\clip;
			\fill;
			\foreach\step in {100,95,...,0}{\draw[line width=\step/5,.!\step];}
		\end{scope}
		\draw[use path=\hot,white];
		\draw[use path=\med,white];
		\draw[use path=\cold,white];
		\draw[reset cm,tdplot_main_coords,PMS401]
			foreach\s in{+,-}{
				foreach\t in{+,-}{
					foreach\u in{+,-}{
						(\s9.87,0,\u6.1)--(0,\t6.1,\u9.87)--(\s6.1,\t9.87,0)--cycle
					}
				}
				(\s9.87,0,\s6.1)--(\s9.87,0,-\s6.1)
				(\s6.1,\s9.87,0)--(-\s6.1,\s9.87,0)
				(0,\s6.1,\s9.87)--(0,-\s6.1,\s9.87)
			}
		;
		\draw[reset cm,xscale=1/10,yscale=1/5]
			foreach\bp in{-9,...,9}{
				(\bp-1,{sin(\bp*36+36)})--(\bp,{sin(\bp*36+72)})--
				(\bp,{sin(\bp*36-72)})--(\bp+1,{sin(\bp*36-36)})
			}
		;
		\tikzset{
			single/.pic={
				\draw[scale=1/10]foreach\bp in{-9,...,9}{(\bp,0)|-+(1,-2)};
			},
			primer/.pic={
				\draw[scale=1/10]foreach\bp in{-9,...,-5}{(\bp,0)|-+(-1,2)}
								foreach\bp in{-9,...,9}{(\bp,0)|-+(1,-2)};
			},
			double/.pic={
				\draw[scale=1/10]foreach\bp in{-9,...,9}{(\bp,0)|-+(-1,2)}
								foreach\bp in{-9,...,9}{(\bp,0)|-+(1,-2)};
			}
		}
		\draw[->](60:2)to[bend left=55](35:3);
		\draw(15:3)pic{single}(-15:3)pic{single};
		\draw[->](-35:3)to[bend left=50](-85:4);
		\draw(-105:4)pic{primer}(-135:4)pic{primer};
		\draw[->](-155:4)to[bend left=45](155:5);
		\draw(135:5)pic{double}(105:5)pic{double};
		\draw[->](85:5)to[bend left=40](40:6);
		\draw(30:6)pic{single}(10:6)pic{single}(-10:6)pic{single}(-30:6)pic{single};
		\draw[->](-40:6)to[bend left=35](-80:7);
		\draw(-90:7)pic{primer}(-110:7)pic{primer}(-130:7)pic{primer}(-150:7)pic{primer};
		\draw[->](-160:7)to[bend left=30](160:8);
		\draw(150:8)pic{double}(130:8)pic{double}(110:8)pic{double}(90:8)pic{double};
		\draw[->](80:8)to[bend left=25](45:9);
		\draw(35:9)pic{single}(25.5:9)pic{single}(15:9)pic{single}(5:9)pic{single}
			(-5:9)pic{single}(-15:9)pic{single}(-25:9)pic{single}(-35:9)pic{single};
		\draw[->](-45:9)to[bend left=20](-75:10);
		\draw(105:10)node{tube warmed to $T_†i†℃$}
				(45:11)node{tube heated to $T_†h†℃$}
			(-135:11)node{tube cooled down to $T_†c†℃$}
				(-75:10)node[left]{exponentially more copies};
		\pgfresetboundingbox
		\path circle(\R);
	}$$
	\caption{
		An illustration of how PCR works.  The fan at 1--5 o'clock: the
		heat denaturalizes the double helixes from a DNA virus, yielding
		single-stranded DNA.  The fan at 5--9 o'clock: as the tube cools
		down, the primers bind to the single-stranded DNA.  The fan at
		9--1 o'clock: the tube is warmed to the working temperature of the
		polymerase; the polymerase synthesizes the complementary part of
		the DNA.  This procedure is repeated until the number of copies of
		DNA is detectable or we feel impatient and opt to stop, whichever is
		earlier.  For RNA viruses, a reverse-transcription step is inserted
		after the virus coat is broken and before DNA amplification starts.
	}\label{fig:PCR}
\end{figure*}

\section{Background and Problem Statement}\label{sec:prepare}

	In this section, we review the working principles of the PCR process.  After
	that, we review the taxonomy of group testing and provide a brief history of
	quantitative group testing.  Next, we introduce the notion of delays and how
	this notion is formalized using the tropical semiring.  Finally, we conclude
	this section with our problem statement.

\subsection{The PCR process}

	PCR stands for polymerase chain reaction.  It is performed by a machine that
	holds a collection of tubes, each tube containing some specimens.  A PCR
	machine first heats the tubes up to a temperature $T_†h†℃$;  the heat
	decouples every double-helical DNA molecule into two single-stranded DNA
	templates.  The machine then cools the tubes down to another temperature
	$T_†c†℃$; at this temperature, the primers (short, single-stranded DNA
	segments) will attach to the DNA templates, labeling the starting point of
	DNA replication.  Following that, the machine warms the tubes up to an
	intermediate temperature $T_†i†℃$, which is the working temperature for the
	polymerases, the enzymes that will be completing the single-stranded DNA
	templates to form double-stranded DNA molecules.  A \emph{cycle} means that
	the tubes undergo $T_†h†℃$, $T_†c†℃$, and $T_†i†℃$ once, which implies that
	the concentration of DNA is doubled.  Repeating this process for $r$ cycles
	increases the concentration of DNA by $2^r$-fold.  See \cref{fig:PCR} for an
	illustration.

	In a \emph{quantitative} PCR\footnote{ It is also called \emph{real-time}
	PCR but still abbreviated as qPCR, not RT-PCR.  RT-PCR refers to another
	technique called reverse transcription PCR.  The combination of the two
	is abbreviated as RT-qPCR, which is the kind that is used in this pandemic.}
	test, some fluorescent dyes are added into the tubes; these dye molecules
	emit light when attaching to DNA.  As the process of DNA-amplification
	continues, more and more of these fluorescent molecules will attach
	themselves to the newly-created DNA content.  Eventually, the tubes will
	emit sufficient light that triggers a sensor.  When this happens, the
	current cycle count is reported as the \emph{cycle threshold} (Ct) value of
	the specimen.  Accordingly,
	\[†Ct value†=⌊-㏒₂(†viral load†)+†constant†⌋.\]
	If all tubes turn out positive or some tubes take too long ($40$ cycles in
	practice) to emit a sufficient amount of light, the machine is turned off.
	For those tubes that did not trigger the sensor, negative results are
	reported.

	For fluorescence-to-cycle plots in real life, see \cite{WHMR97,
	GTVRRFMRGTAB20, PWRKOMSVHBSFL21, MAATWASHABB21}.  For statistics of Ct
	values, see \cite{JBMVSBBTSSKMSZHKSECD21, JKKLGLM21, JGJO20, BMR21,BHF21}
	For the relation between Ct values and dilution, see
	\cite{YATAMBSKGlSHMlHGSK20, LPBRGGBSS20, ABMHKI20,
	MNBSURNMRNNSMNNUMMMNTN21}.

\subsection{Group testing}

	Group testing was introduced by Dorfman \cite{Dorfman43} and has been of
	research interest since then.  For introductions of group testing, see Du
	and Huang's book \cite{DH93, DH99}, Ngo and Rudra's lecture notes
	\cite{NR11}, and Aldridge, Johnson, and Scarlett's survey \cite{AJS19}.  For
	the most recent works not covered by the survey, see \cite{BCSYZ21, BPS21,
	CGHL21, CGM21, GPRRCGM21, LPC21}.  Group testing comes with a variety of
	flavors.  The survey \cite[Section~1.1]{AJS19} provides the following
	categorization of existing group testing works:
	\begin{itemize}
		\item	Adaptive v.\ nonadaptive.
		\item	Zero error probability v.\ small error probability.
		\item	Exact recovery v.\ partial recovery.
		\item	Noiseless v.\ noisy testing.
		\item	Binary v.\ non-binary outcomes.
		\item	Combinatorial v.\ iid prior.
		\item	Known v.\ unknown number of defectives.
	\end{itemize}
	The classification of the various scenarios studied
	throughout this paper is listed in \cref{tab:class}.

	We next discuss group testing models that involve non-binary test outcomes,
	which involves quantitative group testing as well as our tropical group
	testing model.

	The earliest form of non-binary group testing dates back to the 1940s, when
	balance scales were evaluated as a tool to single out counterfeit coins
	\cite{Hwang87, GN95, khovanova13}.  A variant of this coin-weighing problem
	considers the usage of pointer scales to count how many counterfeit coins
	are in the queried pool.  This problem later became known as
	\emph{quantitative} group testing and many of the techniques and results
	have found application in, say, additive multiple access channels
	\cite{GHKL19, FL20}.  On a parallel track, \emph{threshold} group testing
	considers the setup where each test outputs a “yes” when the the number of
	counterfeit coins exceeds a certain threshold $θ$, and it outputs a “no”
	otherwise \cite{Damaschke06, Dyachkov13, Cheraghchi13}.
	
	A unification of quantitative and threshold group testing, which has
	received recent attention in \cite{EM14s, EM14g, EM16, CGM21, GPRRCGM21}
	due to its applications to genomic sequence processing, is known as
	\emph{semiquantitative} group testing.  In semiquantitative group 
	testing, the positive reals are partitioned into intervals
	\[[0,θ₁)\,,\,[θ₁,θ₂)\,,\,[θ₂,θ₃)\,,\,\dotsc\]
	and each test outputs a number that indicates in which interval the true
	value lies in.  Previous works considered the setup where the endpoints
	$\{θ_j\}_j$ represent an arithmetic progression.  As will be discussed later
	in the next subsection, motivated by the PCR testing method, we consider a
	semiquantitative setup where the endpoints $\{θ_j\}_j$ are a geometric
	progression.  See \cref{tab:quant} for a summary of existing
	semiquantitative group testing regimes.

\pgfplotstableread{
    { }                        III   IV--V VI    VII   VIII  A     B     C     D   
    nonadaptive~v.~adaptive    N     N     N     A     A     N     N     A     N   
    error~probability          $0$   $0$   $0$   $0$   $0$   $0$   $0$   $0$   $ε$ 
    exact~v.~partial~recover   E     E     E     E     E     E     E     E     P   
    noisy~v.~noiseless         ¬     ¬     ¬     ¬     ¬     ¬     ¬     ¬     ¬   
    tropical~v.~quantitative   T     T     T     T     T     T     T     T     Q   
    combinatorial~v.~iid       C     C     C     C     C     C     C     C     iid 
    known~\#~of~infected       $1$   $2$   $D$   $2$   ?     $1$   $2$   ?     ?   
    un-~v.~limited~delay       U     U     U     U     U     L     L     L     L   
}\tableclass
\begin{table}
	\caption{
		The classification of the sections of this paper per
		\cite[Section~1.1]{AJS19}'s six criteria.  We added one more
		criterion that is about whether delay can grow linear in $N$.
	}\label{tab:class}
	\def\arraystretch{1.2}
	\tabcolsep3pt
	定¬{less}
	$$\pgfplotstabletypeset[
		every head row/.style={
				before row={
					\toprule
					\multirow2*{Classification}
					&	\multicolumn5c{Section}
					&	\multicolumn4c{Appendix}	\\
				}
			,
			after row=\midrule},
		every last row/.style={after row=\bottomrule},string type,
	]\tableclass$$
\end{table}

\subsection{Delay and tropical semiring (aka min-plus algebra)}

	In this work, we consider a special group testing scenario whereby one
	adds specimens into each tube at different cycles.  Consider the following
	example procedure that is also illustrated in \cref{fig:syringe}:
	\begin{itemize}
		\item	First, we place empty tubes A and B into the PCR machine.
		\item	Next, we put specimens X and Z into tubes A and B, respectively.
		\item	Run the machine for $1$ cycle.
		\item	Put specimen Y into tubes A and B.
		\item	Run the machine for $1$ cycle.
		\item	Put specimens Z and X into tubes A and B, respectively.
		\item	And finally, run the machine for another $38$ cycles or
				until the detection of light signal, whichever is earlier.
	\end{itemize}
	Unlike traditional group testing---which is only concerned with how to
	distribute specimens into tubes---this paper is also concerned with the
	question of \emph{when} specimens should be placed into their respective
	tubes.

	In Figure~\ref{fig:syringe}, since specimen Y is put into the tubes after
	one cycle has elapsed, we say that specimen Y is  \emph{delayed by $1$
	cycle}.  More generally, if a specimen is inserted into a tube after $δ$
	cycles have elapsed, we say that the specimen is \emph{delayed by $δ$
	cycles}.  If a specimen is never put into a  tube, we say that the
	corresponding delay is infinity.

	We now recall a mathematical framework that will be used to describe the
	behavior of Ct values under pooling and delaying.  The real numbers with
	(positive) infinity, $ℝ∪\{∞\}$, with operators $⊕$ and $⊙$, as defined by
	\begin{align*}
		x⊕y	&	≔\min(x,y)	&	(†in particular †x⊕∞	&	≔x),	\\
		x⊙y	&	≔x+y		&	(†in particular †x⊙∞	&	≔∞),	
	\end{align*}
	is called the \emph{tropical semiring} or the \emph{min-plus algebra}%
	\footnote{ Sometimes $\min$ is replaces by $\max$ and the structure
	is called a \emph{max-plus} algebra.  The theory is exactly the same up
	to some sign changes.}.  The tropical semiring captures the behavior of
	addition and multiplication through the lens of $㏒_q$, for some large
	$q$, and finds applications in algebraic geometry and combinatorics
	\cite{MS15, Joswig22}.  As an example, the core of the Floyd--Warshall
	algorithm (ibid.\ or \cite{CLRS22}) is tropical matrix multiplication,
	$A⊙B$, whose $(i,k)$th entry is defined to be
	\[⨁_j(A_{ij}⊙B_{jk})=\min_j(A_{ij}+B_{jk}).\]
	As will be justified by simulations in \cref{app:simulate}, we use tropical
	arithmetic to model the Ct values of pooled and delayed specimens.

	\begin{axi}[Tropical model]
		The mixture of specimens with Ct values $x₁,x₂,\dotsc,x_N$, each
		delayed by $δ₁,δ₂,\dots,δ_N$ cycles, respectively, has Ct value
		\[⨁_j(δ_j⊙x_j)=\min_j(δ_j+x_j).\]
		When the context is clear, we write $𝛅⊙𝐱$ to denote a tropical
		vector--vector multiplication.  When there are multiple tests,
		we write $S⊙𝐱$ to denote a tropical matrix--vector multiplication.
		We call $S$ a \emph{schedule matrix}.
	\end{axi}

	With this assumption we can claim, formally, the goal of this paper.

\subsection{Problem statement}

	The goal of this paper is to construct tropical group testing
	schemes that allow us to diagnose a population efficiently.
	The following two definitions state our goal precisely.

	\begin{dfn}[Nonadaptive testing]
		A \emph{$(T,N,D)$-tropical code} is a schedule matrix
		$S∈(\{0\}∪ℕ∪\{∞\})^{T×N}$ such that, for any distinct vectors
		$𝐱,𝐲∈(\{0\}∪ℕ∪\{∞\})^{1×N}$, each having at most $D$ finite entries,
		\[S⊙𝐱≠S⊙𝐲.\]
		A tropical code is said to be \emph{within maximum delay $ℓ$}
		if $S∈\{0,1,\dotsc,ℓ,∞\}^{T×N}$.
	\end{dfn}

	\begin{dfn}[Adaptive testing]
		An \emph{$R$-$(T,N,D)$-tropical protocol} is a series of
		$R$ functions, $𝒮^{(1)},𝒮^{(2)},\dotsc,𝒮^{(R)}$, that feed
		on past results and output variable-height schedule matrices
		\begin{align*}
			S^{(1)}	&	=𝒮^{(1)}(),	\\
			S^{(2)}	&	=𝒮^{(2)}(S^{(1)}⊙𝐱),	\\
			S^{(3)}	&	=𝒮^{(3)}\left(\bma{S^{(1)}\\S^{(2)}}⊙𝐱\right),	\\
					&	\mkern10mu⋮	\\
			S^{(R)}	&	=𝒮^{(R)}\left(\bma{S^{(1)}\\⋮\\S^{(R-1)}}⊙𝐱\right)
		\end{align*}
		such that (i) the numbers of rows may depend on past
		results but the total is $≤T$,  (ii) the numbers
		of columns are $N$, and (iii) the final result
		\[\bma{
			S^{(1)}	\\
			⋮	\\
			S^{(R)}
		}⊙𝐱\]
		is unique among all $𝐱∈(\{0\}∪ℕ∪\{∞\})^N$ having at most
		$D$ finite entries.  A tropical protocol is said to be
		\emph{within maximum delay $ℓ$} if the schedule matrixes
		use only the alphabet $\{0,1,\dotsc,ℓ,∞\}$.
	\end{dfn}

	\begin{rem}
		Our setup is more demanding than the traditional group testing setup in
		that we will be requiring not only the identities of the infected
		individuals but also the extent of infection of each infected
		individual.
	\end{rem}

	\begin{rem}
		Our setup is also more resilient to the fuzziness innate to PCR testing
		in that (i) our decoder makes decisions based on integer (instead of
		real number) Ct values and (ii) every time two specimens are mixed
		together, the one with fewer virus particles are “forgotten”.  This
		harsh rule forces us to find workarounds that must not rely on the type
		of arithmetic that “subtracts $50$~dB from $50.04$~dB to obtain
		$30$~dB”.
	\end{rem}

	\begin{rem}
		The theorems proved in this paper hold true even if we replace the
		alphabet $\{0\}∪ℕ∪\{∞\}$ of the Ct values by $ℝ∪\{∞\}$.  This makes
		tropical group testing instantly adapted to scenarios with fractional Ct
		values.  In the latter scenarios, since tropical arithmetic retains the
		meaning after rescaling, the alphabet $\{0,1,\dotsc,ℓ\}$ of the delays
		shall be understood as the multiples of the minimum difference of the Ct
		values that we can tell apart.
	\end{rem}

	The next section is a warmup with the $D=1$ case.

\section{One Infection and Diff-Lays}\label{sec:diff-lay}

	This section considers the simplest case whereby at most one person is
	infected.  It demonstrates how tropical group testing can find this infected
	person using only two tests, regardless of how many people are taking the
	tests.

\subsection{Two tests on three people---a toy example}

	Let there be three persons X, Y, and Z and denote their Ct values by
	$x,y,z∈\{0\}∪ℕ∪\{∞\}$.  Let there be two tubes, A and B, and denote
	their Ct values by $a,b∈\{0\}∪ℕ∪\{∞\}$.  This subsection describes a
	$(2,3,1)$-tropical code.

	Encoding:  The mixtures of the specimens are prepared as below.
	\[\bma{
		a	\\	b
	}≔\bma{
		0	&	0	&	7	\\
		7	&	0	&	0	
	}⊙\bma{
		x	\\	y	\\	z
	}\label{mat:toy}\]
	That is,\footnote{ A reminder that this schedule
	is different from \cref{fig:syringe}.}
	\begin{itemize}
		\item	put tubes A and B into the PCR machine;
		\item	put specimens X and Y into tube A;
		\item	put specimens Y and Z into tube B;
		\item	run the machine for $7$ cycles;
		\item	put specimens Z and X into tubes A and B, respectively;
				and finally
		\item	run the machine for $33$ more cycles or until the light
				sensor reports detects the virus, whichever is earlier.
	\end{itemize}
	Instead of delaying by $7$ cycles, one can delay by any nonzero cycles,
	including infinity.  A larger delay is safer when Ct values are noisy;
	so one should consider maxing out the available delays.

\begin{figure}
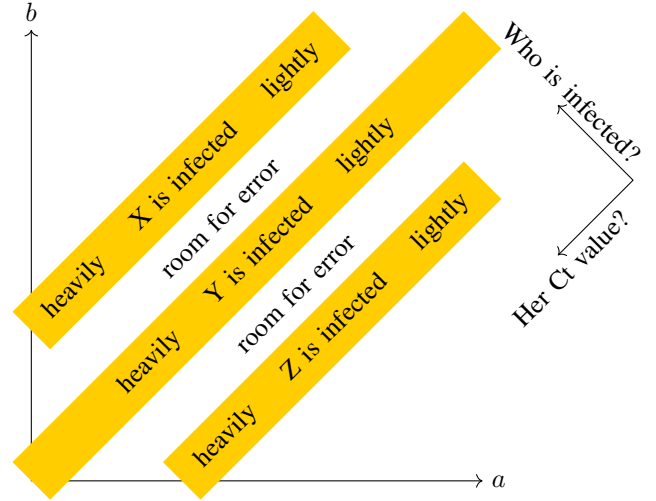

	$$\tikz{
		\draw[<->](0,6)node[above]{$b$}|-(6,0)node[right]{$a$};
		\draw[PMS116,line cap=butt,line width=7.070mm,nodes={black,sloped}]
			(0,2)--node{heavily~~~~~X is infected~~~~~lightly}+(4,4)
			(0,0)--node{heavily~~~~~Y is infected~~~~~lightly}+(6,6)
			(2,0)--node{heavily~~~~~Z is infected~~~~~lightly}+(4,4)
		;
		\path[nodes=sloped]
			(.5,1.5)--node{room for error}+(4,4)
			(1.5,.5)--node{room for error}+(4,4)
		;
		\draw[shift={(8,4)},<->]
			(-1,1)node[above,rotate=-45]{Who is infected?}--
			(0,0)--(-1,-1)node[below,rotate=45]{Her Ct value?}
		;
	}$$
	\caption{
		The configuration space of two PCR tests and how to decode the test
		results to infer (i) who is infected and (ii) how infected she is.
		The gaps between the belts are not necessary under the topical model
		but make rooms for noisy Ct values in real life.
	}\label{fig:triton}
\end{figure}

	Decoding:  If all three individuals are healthy, then both tubes will report
	negative results.  If X is infected, then we will see $a-b=-7$ and we can
	infer her Ct value $x=a$.  If Y is infected, we will see $a-b=0$ and infer
	her Ct value $y=a$.  Similarly, if Z is infected, we will see $a-b=7$ and
	infer $z=b$.  \Cref{fig:triton} picturizes how to interpret the test
	results.

	The next proposition summarizes the preceding discussion.

	\begin{pro}
		Schedule \cref{mat:toy} is a $(2,3,1)$-tropical code.
	\end{pro}

	The difference of delays, for instance $0-7=-7$ for specimen X, is called
	the \emph{diff-lay} and will play a central role in the sequel.  The next
	subsection generalizes the $(2,3,1)$-tropical code by generating many more
	diff-lays.

\subsection{Two tests on many people---general
	\texorpdfstring{$(2,N,1)$}{(2, N, 1)}-tropical code}

	One can define a $(2,4,1)$-tropical code as
	\[\bma{
		0	&	0	&	2	&	7	\\
		7	&	2	&	0	&	0
	}\]
	where the possible diff-lays are $a-b∈\{-7,-2,2,7\}$.
	One can also define a $(2,5,1)$-tropical code as
	\[\bma{
		0	&	0	&	0	&	4	&	7	\\
		7	&	4	&	0	&	0	&	0
	}\]
	where the possible diff-lays are $a-b∈\{-7,-4,0,4,7\}$.
	
	We can test as many people as we want so long as the delays can keep up.
	Say there is a cap on delays, which we denote by $ℓ$. Then it is possible
	and optimal to test $2ℓ+3$ persons at once via the following schedule
	matrix.
	\[\arraycolsep4pt\setcounter{MaxMatrixCols}{20}
	\bma{
		0  &  0  &  0  &⋯&  0  &  0  &  0  &  1  &  2  &⋯& ℓ-1 &  ℓ  &  ∞  \\
		∞  &  ℓ  & ℓ-1 &⋯&  2  &  1  &  0  &  0  &  0  &⋯&  0  &  0  &  0  
	}\label{mat:hinge}\]
	Notice how each column has a unique diff-lay and they exhaust all possible
	diff-lays from $-∞,-ℓ,-ℓ+1$ to $ℓ-1,ℓ,∞$.  Alternatively, one might prefer
	the following schedule matrix
	\[\arraycolsep4pt\setcounter{MaxMatrixCols}{20}
	\bma{
		0 & 0 & 1 &  1  &  2  & ⋯ & ℓ-1 & ℓ-1 & ℓ & ℓ & ∞\\ 
		∞ & ℓ & ℓ & ℓ-1 & ℓ-1 & ⋯ &  2  &  1  & 1 & 0 & 0
	}\label{mat:alter}\]
	because the pipetting works are spread out over time (at most four
	pipets per cycle).  The following theorem formalizes this idea.

	\begin{thm}[One patient, two tests]\label{thm:difflay}
		Any schedule matrix $S∈\{0,\dotsc,ℓ,∞\}^{2×N}$ that
		contains no infinite column $[\sma{∞\\∞}]$ and satisfies
		\[\bigl|\{S_{1j}-S_{2j}｜j∈[N]\}\bigr|=N\]
		is a $(2,N,1)$-tropical code within maximum delay $ℓ$.
		Every such $S$ must satisfy $N≤2ℓ+3$.
	\end{thm}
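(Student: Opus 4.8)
The plan is to treat the two assertions separately: correctness of decoding (that $S$ is a $(2,N,1)$-tropical code) and the counting inequality $N≤2ℓ+3$. I would start by describing the output of the two tests on an admissible input. Since at most $D=1$ entry of $𝐱$ is finite, either $𝐱=(∞,\dotsc,∞)$ (nobody infected) or $𝐱$ has a single finite coordinate $x_j=c$ with all others equal to $∞$ (person $j$ infected with Ct value $c$). In the latter case every term $S_{ij'}+x_{j'}$ with $j'≠j$ equals $∞$, so the readings reduce to
\[\bma{a\\b}=\bma{S_{1j}+c\\S_{2j}+c}.\]
Thus the whole test outcome is governed by the single infected column together with the scalar $c$.

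The heart of the argument is that the \emph{diff-lay} $a-b$ recovers the infected column. I would first define the diff-lay of column $j$ as an element of $\{-∞\}∪\{-ℓ,\dotsc,ℓ\}∪\{∞\}$: it equals $S_{1j}-S_{2j}$ when both entries are finite, it is $+∞$ when $S_{1j}=∞$ (forcing $S_{2j}$ finite, since no column is all-$∞$), and $-∞$ in the mirror case. Because $c$ cancels in $a-b$, the reading equals $S_{1j}-S_{2j}$ whenever both entries are finite; in the two infinite cases exactly one tube stays negative, which the decoder reads as $a-b=\pm∞$ independently of $c$. Hence $(a,b)$ determines the diff-lay of the infected column, and since the hypothesis $|\{S_{1j}-S_{2j}｜j∈[N]\}|=N$ forces all $N$ diff-lays to be distinct, this identifies $j$ uniquely. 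Knowing $j$, at least one of $S_{1j},S_{2j}$ is finite, so $c$ is recovered as $a-S_{1j}$ or $b-S_{2j}$. Finally, $(∞,∞)$ is produced only by the all-healthy input, because an infected column has a finite entry and hence forces at least one finite reading. Combining these cases shows that distinct admissible $𝐱,𝐲$ give $S⊙𝐱≠S⊙𝐲$, which is exactly the defining property.

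For the bound I would just count the range of the diff-lay. The finite diff-lays lie among the $2ℓ+1$ integers $-ℓ,\dotsc,ℓ$, and the two infinite configurations contribute $+∞$ and $-∞$, giving $2ℓ+3$ possible values in total; the excluded all-$∞$ column is precisely what prevents the undefined case $∞-∞$, so this list is exhaustive. As the $N$ diff-lays are pairwise distinct, $N≤2ℓ+3$.

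The step I expect to need the most care is the bookkeeping around $∞$: one has to verify that each arithmetic claim ($a-b$, the cancellation of $c$, and the recovery formulas) stays valid under the conventions $x⊙∞=∞$, and that the three diff-lay regimes---both finite, $S_{1j}=∞$, and $S_{2j}=∞$---are mutually exclusive and jointly cover every admissible column. The remainder is a short, routine case check.
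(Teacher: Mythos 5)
Your proposal is correct and follows essentially the same route as the paper's proof: decode via the diff-lay $a-b=S_{1j}-S_{2j}$, which the distinctness hypothesis makes injective in $j$, then recover $x_j$ from either tube, and bound $N$ by counting the $2ℓ+3$ possible diff-lays $-∞,-ℓ,\dotsc,ℓ,∞$. Your extra bookkeeping for the columns with one infinite entry is a welcome elaboration of a step the paper treats implicitly, but it is not a different argument.
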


	\begin{proof}
		Let $𝐱∈(\{0\}∪ℕ∪\{∞\})^N$ have at most one finite entry.
		Iff $𝐱$ is entirely infinite (everyone is healthy),
		$S⊙𝐱$ will be entirely infinite (both tubes are negative).
		Next, suppose $x_j$ is finite for some $j$.  Let
		\[\bma{
			a	\\	b
		}≔S⊙𝐱.\]
		The diff-lay would then be $a-b=(S_{1j}+x_j)-(S_{2j}+x_j)=S_{1j}-S_{2j}$
		and by that diff-lay we can uniquely determine $j$.  Once $j$ is known,
		we can compute her Ct value $x_j=a-S_{1j}$ or $x_j=b-S_{2j}$.

		$N$ must be less than or equal to $2ℓ+3$ because all possible diff-lays
		are $-∞,-ℓ,-ℓ+1,\dotsc,ℓ-1,ℓ,∞$.  Without uniqueness of diff-lays,
		columns having the same diff-lay are interchangeable.
	\end{proof}

	\begin{cor}
		\Cref{mat:hinge,mat:alter} are both $(2,2ℓ+3,1)$-tropical codes
		within maximum delay $ℓ$ that attain the equality $N=2ℓ+3$.
	\end{cor}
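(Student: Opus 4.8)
The plan is to obtain this as an immediate application of \cref{thm:difflay}. Both \cref{mat:hinge,mat:alter} are $2\times(2\ell+3)$ matrices whose entries lie in $\{0,1,\dots,\ell,\infty\}$, so each is automatically within maximum delay $\ell$ and has the required number of columns $N=2\ell+3$; what remains is to verify the two structural hypotheses of the theorem, namely that neither matrix contains an all-infinite column $[\sma{\infty\\\infty}]$ and that the $N$ diff-lays $S_{1j}-S_{2j}$ are pairwise distinct. The first condition is immediate for both matrices: the lone $\infty$ in the top row sits above a $0$ and the lone $\infty$ in the bottom row sits below a $0$, so every column carries at least one finite entry.

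For \cref{mat:hinge} I would read off the diff-lay sequence directly. The first column contributes $0-\infty=-\infty$; the block with top entry $0$ and bottom entries $\ell,\ell-1,\dots,1,0$ contributes $-\ell,-\ell+1,\dots,-1,0$; the block with bottom entry $0$ and top entries $1,2,\dots,\ell$ contributes $1,2,\dots,\ell$; and the final column contributes $\infty-0=\infty$. These assemble into the list $-\infty,-\ell,-\ell+1,\dots,\ell-1,\ell,\infty$ of $1+(2\ell+1)+1=2\ell+3$ distinct values, so $\bigl|\{S_{1j}-S_{2j}\}\bigr|=N$ and \cref{thm:difflay} applies.

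For \cref{mat:alter} I would perform the same count, the one point needing care being its interleaved layout. Here the first column gives $0-\infty=-\infty$ and the last gives $\infty-0=\infty$, while the intervening finite columns, read left to right, have top entries $0,1,1,2,2,\dots,\ell,\ell$ and bottom entries $\ell,\ell,\ell-1,\ell-1,\dots,1,1,0$. The main (and essentially only) subtlety is to confirm that, as the two running indices advance in opposite directions, each successive column raises the diff-lay by exactly $1$, so that the finite columns realize the consecutive run $-\ell,-\ell+1,\dots,\ell-1,\ell$ with no repeat and no gap; tracking the generic pair $[\sma{k\\\ell-k+1}]$ and $[\sma{k\\\ell-k}]$, whose diff-lays are $2k-\ell-1$ and $2k-\ell$, makes this transparent. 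Granting it, the full list of diff-lays is again $-\infty,-\ell,\dots,\ell,\infty$ with $2\ell+3$ distinct entries, so \cref{thm:difflay} certifies both matrices as $(2,2\ell+3,1)$-tropical codes, and its converse bound $N\le 2\ell+3$ shows the equality $N=2\ell+3$ is attained, completing the proof.
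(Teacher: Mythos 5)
Your proposal is correct and matches the paper's (implicit) argument: the paper states the corollary without a formal proof, relying on the observation made just before \cref{thm:difflay} that each column of \cref{mat:hinge,mat:alter} has a unique diff-lay exhausting $-\infty,-\ell,\dotsc,\ell,\infty$, which is exactly the verification you carry out. Your explicit tracking of the interleaved columns of \cref{mat:alter} via the pairs $[\sma{k\\\ell-k+1}]$, $[\sma{k\\\ell-k}]$ is a correct and slightly more careful rendering of the same check.
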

	
	What is good about schedule \cref{mat:hinge} is that there are algebraic
	formulas that compute who is infected $j=a-b+ℓ+2$ and how sick she is
	$x_j=\min(a,b)$ (unless infinity involves).  For schedule \cref{mat:alter},
	the algebraic formulas read $j=a-b+ℓ+2$ and $x_j=⌊(a+b-ℓ)/2⌋$ (unless
	infinity involves).

	In \cref{thm:difflay}, $ℓ$ scales linearly in $N$.  As
	for how to handle the $D=1$ case when $ℓ≪N$, \cref{pf:shell}
	proves the following generalization of \cref{thm:difflay}.

	\begin{thm}[One patient]\label{thm:shell}
		There is a $(T,N,1)$-tropical code within
		maximum delay $ℓ$ iff $N≤(ℓ+2)^T-(ℓ+1)^T$.
	\end{thm}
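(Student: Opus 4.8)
The plan is to reduce the existence question to counting the columns of $S$ up to a natural equivalence. Write $\bm s_1,\dots,\bm s_N\in\{0,\dots,\ell,\infty\}^T$ for the columns of $S$. The first step is to note that when $\bm x$ carries its single finite entry $x_j$ in coordinate $j$, the tropical product collapses to $S\odot\bm x=x_j\odot\bm s_j$: every finite entry of the $j$th column is shifted up by $x_j$ while every $\infty$ entry stays $\infty$. Since the all-healthy input yields the all-$\infty$ reading, distinguishing it from every single infection forces each column to own at least one finite entry (which, conversely, also lets us read off $x_j$ once $j$ is identified).

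The second step characterizes collisions between two single infections. Infecting persons $j\ne j'$ with Ct values $x_j,x_{j'}\geqslant 0$ produces the same reading exactly when $\bm s_j$ and $\bm s_{j'}$ have the same $\infty$-pattern and $x_j\odot\bm s_j=x_{j'}\odot\bm s_{j'}$, i.e.\ the two columns differ by a single additive constant on their common finite support. As $x_j,x_{j'}$ range over all of $\{0\}\cup\mathbb N$, any such constant offset is realizable by a legitimate pair of nonnegative shifts, so a collision happens iff the two columns agree ``up to tropical scaling.'' Hence $S$ is a $(T,N,1)$-tropical code iff its $N$ columns are pairwise inequivalent under this relation and none is entirely infinite.

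The third step counts the equivalence classes of admissible columns, stratified by finite support $I\subseteq[T]$. For a fixed nonempty $I$ with $|I|=k$, each class has a unique representative whose smallest finite entry is $0$, so the number of classes supported exactly on $I$ is $(\ell+1)^k-\ell^k$ (all of $\{0,\dots,\ell\}^k$, minus those whose minimum is at least $1$). Summing over supports and invoking the binomial theorem twice gives
\[\sum_{k=1}^{T}\binom{T}{k}\bigl[(\ell+1)^k-\ell^k\bigr]=(\ell+2)^T-(\ell+1)^T,\]
which is precisely the claimed bound. The \emph{only if} direction is then pigeonhole---$N$ pairwise-inequivalent columns demand $N$ distinct classes---and the \emph{if} direction follows by selecting one representative from each of any $N$ classes.

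I expect the two binomial sums to be routine; the step that needs care is the collision characterization. Specifically, I must verify that a purely additive offset between two in-range columns can always be witnessed by nonnegative Ct values (treating the two signs of the offset separately, each time setting one of the shifts to $0$), and that mismatched $\infty$-patterns can never be reconciled, since a finite shift neither sends a finite entry to $\infty$ nor the reverse. Neither point looks like a genuine obstacle, so the counting argument should close the proof cleanly.
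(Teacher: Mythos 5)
Your proposal is correct and follows essentially the same route as the paper: both arguments quotient the admissible columns by tropical scaling (adding a constant to every finite entry), pick the canonical representative whose minimum finite entry is $0$, and count those representatives---the paper counts them directly as the ``shell'' $\{0,\dotsc,ℓ,∞\}^T＼\{1,\dotsc,ℓ,∞\}^T$ of size $(ℓ+2)^T-(ℓ+1)^T$, while you reach the same number by stratifying over finite supports and applying the binomial theorem. Your extra care about realizing an arbitrary-sign offset with nonnegative Ct values is a valid point that the paper glosses over by letting $x$ range over all of $ℝ$, but it does not change the substance of the argument.
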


	In the upcoming sections, we consider the setup
	where more than a single person can be infected.

\section{Two Infections and Minimum Pipetting}\label{sec:pipet}

	In the next two sections, we set the number of infected persons to be at
	most $D=2$.  For this section, we impose an extra constraint that each
	person should participate in exactly two tubes.  This constraint is to
	minimize the pipetting works that consume labor and suffer from
	reproducibility issues.

	First, let us walk thorough an $N=4$ example,
	after which we will give a general construction.

\subsection{A \texorpdfstring{$4$}{4}-cycle example}

	The goal of this subsection is to identify two infected persons using four
	tests on four persons, i.e., to find a $(4,4,2)$-tropical code.  Note that
	there exists a trivial $(4,4,4)$-tropical code that tests each person
	individually.  Our goal here, a $(4,4,2)$-tropical code, is not meant to be
	efficient, but to become the building block of a design based on bipartite
	graphs.

	Call the four tubes A, B, C, and D; call the four persons W, X, Y, and Z.
	Let the lower letters $a$, $b$, $c$, $d$, $w$, $x$, $y$, and $z$ be the
	corresponding Ct values.

\def\pgf@sm@shape@name{diamond}
\pgf@sh@bgpath{
	\outernortheast\pgf@xc\pgf@x\pgf@yc\pgf@y
	\pgfmathsetlength\pgf@x{\pgfkeysvalueof{/pgf/outer xsep}}
	\pgfmathsetlength\pgf@y{\pgfkeysvalueof{/pgf/outer ysep}}
	\advance\pgf@xc by-1.414213\pgf@x
	\advance\pgf@yc by-1.414213\pgf@y
	\pgfpathmoveto{\pgfqpoint{0pt}{\pgf@yc}}
	\pgfpathlineto{\pgfqpoint{-\pgf@xc}{0pt}}
	\pgfpathlineto{\pgfqpoint{-\pgf@xc}{-\pgf@yc}}
	\pgfpathlineto{\pgfqpoint{0pt}{-\pgf@yc}}
	\pgfpathlineto{\pgfqpoint{\pgf@xc}{0pt}}
}
\pgf@sh@anchorborder{
	\pgf@xa\pgf@x\pgf@ya\pgf@y
	\outernortheast
	\def\pgfnext{\pgf@marshal}
	\ifdim\pgf@ya>0pt
		\ifdim\pgf@xa<0pt
			\pgf@x-\pgf@x
		\fi
	\else
		\ifdim\pgf@xa>0pt
			\pgf@y-\pgf@y
		\else
			\def\pgfnext{
				\pgfpointborderrectangle
					{\pgfpoint{\pgf@xa}{\pgf@ya}}
					{\outernortheast}
			}
		\fi
	\fi
	\edef\pgf@marshal{
		\noexpand\pgfpointintersectionoflines
			{\noexpand\pgfpointorigin}
			{\noexpand\pgfqpoint{\the\pgf@xa}{\the\pgf@ya}}
			{\noexpand\pgfqpoint{\the\pgf@x}{0pt}}
			{\noexpand\pgfqpoint{0pt}{\the\pgf@y}}
	}
	\pgfnext
}
\tikzset{
	testnodes/.style={nodes={diamond,draw,inner sep=2,minimum size=12}},
	personedges/.style={
		every edge/.style={draw,->,bend right,shorten <=2,shorten >=2,#1},
		nodes=auto
	}
}
\begin{figure}
	\def\tikz@auto@anchor{\pgfmathsetmacro\tikz@anchor{atan2(-\pgf@x,\pgf@y)}}
	\def\tikz@auto@anchor@prime{\pgfmathsetmacro\tikz@anchor{atan2(\pgf@x,-\pgf@y)}}
	$$\dss
	\tikz[baseline=-axis_height]{
		\draw[testnodes]
			(-1, 1)node(B){B}(1, 1)node(A){A}
			(-1,-1)node(C){C}(1,-1)node(D){D}
		;
		\def\nodes#1#2{node{#1}node[']{$#2$}}
		\draw[personedges]
								(A)edge node[']{X}node{$7$}(B)
			(B)edge node[']{Y}node{$7$}(C)		(D)edge node[']{W}node{$7$}(A)
								(C)edge node[']{Z}node{$7$}(D);
		;
	}
	\dss=\dss
	\tikz[baseline=-axis_height]{
		\draw[testnodes]
			(-1, 1)node(B){B}(1, 1)node(A){A}
			(-1,-1)node(C){C}(1,-1)node(D){D}
		;
		\draw[personedges]
								(A)edge[]node[']{X}node{$7$}(B)
			(C)edge node[']{Y}node{$-7$}(B)		(A)edge node[']{W}node{$-7$}(D)
								(C)edge[]node[']{Z}node{$7$}(D);
		;
	}
	\dss$$
	\caption{
		Left: A directed $4$-cycle with weighted arrows to indicates the
		diff-lays; a person's (an edge's) specimen is first put in the tube
		(vertex) at the arrow tail and then added into the tube at the arrow
		head after four cycles.  Right: The direction of an arrow is part of the
		sign of the diff-lay.  A negative diff-lay can be understood as the
		opposite arrow with a positive diff-lay.  So in this case, this directed
		$K_{22}$ represents the same configuration as the directed $4$-cycle to
		the left dose.
	}\label{fig:4cycle}
\end{figure}

	Encoding:  Prepare the pools as the following.
	\[\bma{
		a	\\	b	\\	c	\\	d
	}≔\bma{
		7	&	0	&	∞	&	∞	\\
		∞	&	7	&	0	&	∞	\\
		∞	&	∞	&	7	&	0	\\
		0	&	∞	&	∞	&	7	
	}⊙\bma{
		w	\\	x	\\	y	\\	z
	}\]
	This can be graphically paraphrased by \cref{fig:4cycle} (left), in which a
	vertex is a tube and an edge is a person.  An edge's specimen is first put
	in the arrow tail and then added into the arrow head after $7$ cycles.

	Decoding: When no one is infected, all four tubes report negative results.
	When one person is infected, some two tubes will report positive results.
	For instance, if A and B are positive, then X is infected and her Ct value
	is $x=a$.  When two persons are infected, there are two possibilities.

	First possibility---neighbor patients: Two neighboring persons are
	infected iff exactly three tubes report positive results.  For instance,
	when we see that A, B and C are positive, it must be that X and Y are
	infected.  Their Ct values can be calculated by $(x,y)=(a,c-7)$.

	Second possibility---diagonal patients:  Two diagonal persons are infected
	iff all four tubes report positive results.  We cannot easily tell whether X
	and Z are sick or W and Y are sick because, as subsets, $W∪Y=X∪Z$.  That
	said, we can compute the alternating sum $a-b+c-d$.  If X and Z are
	infected, then
	\[a-b+c-d=x-(x+7)+z-(z+7)=-14.\]
	If W and Y are infected, then
	\[a-b+c-d=(w+7)-y+(y+7)-w=14.\]
	To rephrase it, the alternating sum of the Ct values along the cycle will be
	either $14$ or $-14$.  Its sign tells us if X and Z are infected or W and Y
	are infected.  When we see that X and Z are infected, their Ct values are
	$(x,z)=(a,c)$.  Otherwise, the Ct values of W and Y are $(w,y)=(d,b)$.

	The next lemma formalizes the construction.

	\begin{lem}[Nonzero $4$-cycle]\label{lem:4cycle}
		Consider a configuration like \cref{fig:4cycle} (left) where $7$'s are
		replaced with general diff-lays $δ_†W†,δ_†X†,δ_†Y†,δ_†Z†∈ℤ$.  Assume
		$δ_†W†+δ_†X†+δ_†Y†+δ_†Z†≠0$.  Then it forms a $(4,4,2)$-tropical
		code within maximum delay $\max(|δ_†W†|,|δ_†X†|,|δ_†Y†|,|δ_†Z†|)$.
	\end{lem}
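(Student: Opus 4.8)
The plan is to construct a decoder: a well-defined map that, from the four (possibly infinite) tube readings $(a,b,c,d)=S\odot\mathbf{x}$, reconstructs the sparse input $\mathbf{x}\in(\{0\}\cup\mathbb{N}\cup\{\infty\})^4$ carrying at most two finite entries. Exhibiting such a decoder is equivalent to proving that $\mathbf{x}\mapsto S\odot\mathbf{x}$ is injective on these sparse vectors, which is exactly the $(4,4,2)$-code property. First I would normalize the schedule: since adding a constant to an entire column of $S$ merely relabels the Ct value of the corresponding person (translation invariance of $\odot$ in each coordinate preserves injectivity), I may assume each person's smaller delay is $0$ and her larger delay is $|\delta_P|$. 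This forces every finite entry of $S$ into $\{0,1,\dots,\max_P|\delta_P|\}$ and secures the claimed maximum-delay bound at once. Writing out the tube values, e.g. $a=\min\bigl(w+\text{(delay of W in A)},\,x+\text{(delay of X in A)}\bigr)$ and likewise for $b,c,d$, produces the algebraic relations to be inverted.

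The key structural observation is that a tube reads finite if and only if at least one infected person lies on an edge incident to it; hence the set of finite tubes is exactly the union of the endpoints of the infected edges. Because a single edge covers two adjacent vertices, and any two distinct edges of the directed $4$-cycle cover three vertices (if adjacent) or four (if disjoint), the number of finite tubes cleanly separates the cases. Zero finite tubes means nobody is infected; two finite tubes are necessarily adjacent and pin down the unique infected edge between them; three finite tubes force the two infected edges to be the adjacent pair covering those three vertices, with the lone infinite tube selecting which pair; and four finite tubes mean the two infected edges form one of the two perfect matchings $\{X,Z\}$ or $\{W,Y\}$. In the first three cases the identities follow from the finite/infinite pattern alone, and each infected person's Ct value is read off by subtracting her known delay from either incident tube reading.

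The only genuine ambiguity—and the sole place the hypothesis enters—is the four-finite case, since $W\cup Y=X\cup Z$ as vertex sets. Here I would compute the alternating sum $a-b+c-d$ around the cycle. The infected Ct values cancel telescopically, leaving a constant that depends only on the diff-lays: I expect $-\delta_X-\delta_Z$ when $\{X,Z\}$ is infected and $+\delta_W+\delta_Y$ when $\{W,Y\}$ is infected (matching the $-14$ versus $+14$ seen in the worked example). These two constants differ by exactly $\delta_W+\delta_X+\delta_Y+\delta_Z$, which is nonzero by assumption, so the observed alternating sum distinguishes the two matchings with no overlap; once the matching is fixed, the Ct values are recovered as before.

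The main obstacle is precisely carrying out this cancellation with consistent sign bookkeeping for the cyclic orientation—keeping track that a negative $\delta_P$ corresponds to the reversed arrow of \cref{fig:4cycle} (right)—and confirming that no four-finite reading arising from one matching can coincide with any reading from the other, which becomes immediate once the two alternating-sum constants are shown to be distinct. Assembling the four cases then yields a complete, deterministic decoder, and hence the $(4,4,2)$-tropical code property within the stated maximum delay $\max(|\delta_W|,|\delta_X|,|\delta_Y|,|\delta_Z|)$.
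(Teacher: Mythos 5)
Your proposal is correct and follows essentially the same route as the paper: a case split on the positive/negative pattern of the four tubes, with the only nontrivial case (all four positive) resolved by the alternating sum $a-b+c-d$, which equals $-δ_†X†-δ_†Z†$ or $δ_†W†+δ_†Y†$ depending on which diagonal pair is infected, and these differ by the assumed-nonzero total $δ_†W†+δ_†X†+δ_†Y†+δ_†Z†$. The paper dismisses the remaining cases as straightforward; you simply spell them out, along with the column-normalization that justifies the stated maximum delay.
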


	\begin{proof}
		If X and Z are infected, then
		\[a-b+c-d=x-(δ_†X†+x)+z-(δ_†Z†+z)=-δ_X-δ_†Z†;\]
		if W and Y are infected, then
		\[a-b+c-d=(δ_†W†+w)-y+(δ_†Y†+y)-w=δ_†W†+δ_†Y†.\]
		Given that $δ_†W†+δ_†Y†≠-δ_†X†-δ_†Z†$, the alternating sum $a-b+c-d$
		will be distinguishable in two cases.  Other cases (no patients, one
		patient, and neighboring patients) are straightforward.
	\end{proof}

\subsection{A \texorpdfstring{$3$}{3}-cycle non-example}

	Knowing how to deal with $4$-cycles, we now show
	that a tropical code cannot contain $3$-cycles.

	\begin{lem}[No $3$-cycle]\label{lem:3cycle}
		There does not exist a tropical code where three persons
		participate in three tests in total, each person is in
		two tests, and each test contains two persons.
	\end{lem}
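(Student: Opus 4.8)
The plan is to exhibit two distinct $D=2$ infection patterns that the three tubes cannot separate, proving that such a configuration can never be a $(3,3,2)$-tropical code. First I would observe that the hypotheses force the incidence graph---tubes as vertices, persons as edges---to be $2$-regular on three vertices; since a double edge would leave the third vertex with degree $0$, this graph is a single triangle. Relabelling so that person X sits in tubes A, B, person Y in tubes B, C, and person Z in tubes C, A, every such schedule takes the form
\[\bma{a\\b\\c}\coloneqq\bma{p&\infty&q\\r&s&\infty\\\infty&t&u}\odot\bma{x\\y\\z},\]
with all of $p,q,r,s,t,u$ finite delays.

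The conceptual point---and what makes this different from \cref{lem:4cycle}---is that a triangle is an odd cycle: every two of its edges meet at a vertex, so there is no disjoint pair of infected persons to be separated by an alternating sum. Instead I would pin one infected person and slide the second infection around the cycle. Fixing Y infected, I would compare the pattern where X and Y are infected (so $z=\infty$) against the pattern where Y and Z are infected (so $x=\infty$). Evaluating $S\odot\mathbf{x}$ gives the tube readings $(a,b,c)=(p+x,\,\min(r+x,s+y),\,t+y)$ in the first case and $(a,b,c)=(q+z',\,s+y,\,\min(t+y,u+z'))$ in the second.

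The collision is then forced by one linear match together with two dominance conditions. I would keep $y$ fixed (say $y=0$), take $x$ large, and set $z'\coloneqq x+p-q$, so that tube A reads $p+x=q+z'$ in both patterns. For $x$ large enough the two shared-tube minima are governed by Y: the inequality $r+x\geq s+y$ forces tube B to read $s+y$ in both patterns, and $u+z'\geq t+y$ forces tube C to read $t+y$ in both. The two patterns therefore produce the identical reading $(p+x,\,s+y,\,t+y)$, yet they are distinct vectors---X is finite in one and Z is finite in the other---each with at most two finite entries, contradicting the uniqueness required of a tropical code. The only step needing care is checking that these two dominance inequalities hold simultaneously with admissible (finite, nonnegative) Ct values; since the six delays are fixed while the Ct alphabet is unbounded, choosing $x$ sufficiently large settles it, and this is the sole mildly delicate point in the argument.
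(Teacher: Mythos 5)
Your proof is correct and follows essentially the same route as the paper's: both exhibit a collision between the infection patterns $\{X,Y\}$ and $\{Y,Z\}$, with the common person $Y$ dominating the two tubes she occupies and the Ct values of $X$ and $Z$ tuned so that the remaining tube reads identically. The paper simply instantiates your ``$x$ sufficiently large'' step with the explicit values $2\ell-\zeta$ and $2\ell-\eta$ using the delay cap $\ell$; the substance is the same.
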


	\begin{proof}
		Suppose part of the schedule matrix that
		concerns the three persons and three tests is
		\[S'=\bma{
			δ	&	ε	&	∞	\\
			ζ	&	∞	&	η	\\
			∞	&	ϑ	&	κ
		}\]
		where $0≤δ,ε,ζ,η,ϑ,κ≤ℓ$.  Then we cannot distinguish
		\[S'⊙\bma{
			2ℓ-ζ	\\	0	\\	∞
		}=\bma{
		 	ε	\\	2ℓ	\\	ϑ
		}=S'⊙\bma{
			∞	\\	0	\\	2ℓ-η
		}\]
		and hence this cannot be a tropical code.
	\end{proof}

	\Cref{lem:4cycle,lem:3cycle} imply that, in order
	to find a tropical code that tests as many people
	as possible, we should look at graphs with girth $4$.

\subsection{Bipartite Sidon graph}

\begin{figure}
	$$\dss\bma{
		7	&	0	&	0	\\
		0	&	7	&	0	\\
		0	&	0	&	7	\\
	}\dss=\dss
	\tikz[baseline=-axis_height]{
		\draw[testnodes,scale=2/3]
			foreach\x in{-2,0,2}{
				(\x,1)node(y\x){}(\x,-1)node(x\x){}
			}
		;
		\draw[personedges,delay/.code={\ifnum\x=\y\else\pgfkeysalso{-}\fi}]
			foreach\x in{-2,0,2}{
				foreach\y in{-2,0,2}{
					(y\y)edge[bend left=(\x+\y)*5,delay](x\x)
				}
			}
		;
	}
	\dss$$
	\caption{
		A diff-lay configuration of $K_{3,3}$ (which contains $K_{2,3}$).
		Arrows are $+7$.  Undirected edges are $0$ (specimen put in both
		tubes upon machine startup).  One can verify that all $4$-cycle
		sums are nonzero.
	}\label{fig:K33}
\end{figure}

\begin{figure}
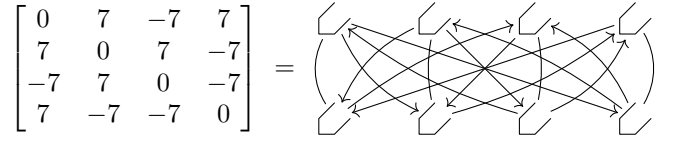

	$$\dss\bma{
		0	&	7	&	-7	&	7	\\
		7	&	0	&	7	&	-7	\\
		-7	&	7	&	0	&	-7	\\
		7	&	-7	&	-7	&	0	
	}\dss=\dss
	\tikz[baseline=-axis_height]{
		\draw[testnodes,scale=2/3]
			foreach\x in{-3,-1,1,3}{
				(\x,1)node(y\x){}(\x,-1)node(x\x){}
			}
		;
		\draw[personedges={bend left=(\x+\y)*5}]
			foreach\x/\y in{-3/-3,-1/-1,1/1,3/3}{(y\y)edge[-](x\x)}
			foreach\x/\y in{-3/-1,-1/1,-3/3}{(y\y)edge[->](x\x)(y\x)edge[->](x\y)}
			foreach\x/\y in{-3/1,-1/3,1/3}{(y\y)edge[<-](x\x)(y\x)edge[<-](x\y)}
		;
	}
	\dss$$
	\caption{
		A diff-lay configuration of $K_{4,4}$ (which contains $K_{3,4}$).
		Arrows are diff-lay $7$; undirected edges are $0$ (specimen put in both
		tubes upon machine startup).  One can verify that all $4$-cycle sums are
		nonzero.
	}\label{fig:K44}
\end{figure}

	By Mantel's theorem, the $3$-cycle-free graphs that maximize the number of
	edges are complete bipartite graphs with equal or almost equal partition
	sizes \cite{AZ18}.  These graphs have the potential to give rise to
	good tropical codes.

	Before actually constructing any tropical code out of a complete bipartite
	graph $K_{p,q}$, we demonstrate how to encode the delays.  Let $B$ be a
	$p×q$ matrix.  Consider the edge that connects $u∈[p]$ to the left to
	$v∈[q]$ to the right.  If the specimen is first put in left $u$ and, after
	$δ$ cycles, in right $v$, let $B_{uv}$ be $δ$.  If the specimen is first put
	in right $v$ and, after $ε$ cycles, in left $v$, let $B_{uv}$ be $-ε$.  This
	is how we can represent the diff-lays on a bipartite graph using a weighted
	biadjacency matrix $B$.

	As an example,
	\[B=\bma{
		7	&	-7	\\
		-7	&	7	
	}\]
	represents \cref{fig:4cycle} (right).  In a biadjacency matrix
	representation, the cycle-sum condition becomes whether every $2×2$
	sub-matrix $[\sma{δ&ε\\η&ζ}]$ satisfies $δ-ε+ζ-η≠0$.  For $K_{3,3}$,
	\cref{fig:K33} gives a valid assignment.  For $K_{4,4}$, \cref{fig:K44}
	gives a valid assignment.

	This begs the question of whether there exist a systematic way to fill-in
	larger biadjacency matrices to meet said condition.  The answer is positive.

	\begin{thm}[Two patients, min pipetting]\label{thm:pipet}
		Let $T≥2$.  Let $p≥T/2$ be an odd prime.  Then there exist a graph-based
		$(T,⌊T/2⌋⌈T/2⌉,2)$-tropical code within maximum delay $(p-1)/2$.
	\end{thm}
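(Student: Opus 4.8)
The plan is to build the code from a balanced complete bipartite graph and to fill its biadjacency matrix with a multiplication table over $\mathbb{F}_p$. Put $\lfloor T/2\rfloor$ tubes on one side and $\lceil T/2\rceil$ tubes on the other, so the total number of tubes---the rows of the schedule matrix---is exactly $T$, and let the $N=\lfloor T/2\rfloor\lceil T/2\rceil$ persons be the edges of $K_{\lfloor T/2\rfloor,\lceil T/2\rceil}$, each touching precisely the two tubes at its endpoints (hence minimum pipetting). By Mantel's theorem this balanced bipartite graph is the triangle-free graph on $T$ vertices with the most edges, which is why it maximizes $N$; being bipartite it has girth $4$, so \cref{lem:3cycle} is no obstruction and \cref{lem:4cycle} is the only local condition to enforce. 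Encoding the diff-lays into a $\lfloor T/2\rfloor\times\lceil T/2\rceil$ biadjacency matrix $B$ as in the discussion preceding the statement, that condition is exactly that every $2\times2$ submatrix $[\sma{\delta&\varepsilon\\\eta&\zeta}]$ has nonzero alternating sum $\delta-\varepsilon+\zeta-\eta\neq0$.

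Next I would exhibit an explicit $B$ meeting this condition with all entries of magnitude at most $(p-1)/2$. The construction is a \emph{multiplication table} over the field $\mathbb{F}_p$: index the rows by distinct residues $r_1,\dots,r_{\lfloor T/2\rfloor}$ and the columns by distinct residues $c_1,\dots,c_{\lceil T/2\rceil}$ (possible because $p\geq T/2$ forces $p\geq\lceil T/2\rceil$), and let $B_{ij}$ be the representative of $r_ic_j$ in the symmetric range $\{-(p-1)/2,\dots,(p-1)/2\}$. The delay bound is then immediate: each person enters one endpoint tube at cycle $0$ and the other at cycle $|B_{ij}|\leq(p-1)/2$, so the schedule uses only delays in $\{0,1,\dots,(p-1)/2,\infty\}$. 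The crux, and the single clean idea of the proof, is the alternating-sum condition: although $B_{ij}-B_{ij'}+B_{i'j'}-B_{i'j}$ is computed over $\mathbb{Z}$, it reduces modulo $p$ to $(r_i-r_{i'})(c_j-c_{j'})$, and since $p$ is prime with $\mathbb{F}_p$ having no zero divisors and the row and column indices pairwise distinct, this product is a nonzero residue, whence the integer alternating sum cannot vanish.

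Finally I would spell out decoding, generalizing the four-tube argument. Because every edge reaches both its endpoints with a finite delay, the set of tubes reporting a finite Ct value is exactly the set of endpoints of the infected edges, so I branch on the number of positive tubes: zero means nobody is infected; two (necessarily one per side) pin down the unique joining edge and its Ct value; three force the two infected edges to share a vertex, identified as the side holding a single positive tube, again determining both edges and their Ct values. The only genuinely ambiguous case is four positive tubes $u_1,u_2$ on one side and $v_1,v_2$ on the other, where the infected pair is one of the two diagonals $\{(u_1,v_1),(u_2,v_2)\}$ or $\{(u_1,v_2),(u_2,v_1)\}$ of the $4$-cycle they span; this is precisely \cref{lem:4cycle}, and since the two candidate alternating sums of the observed Ct values differ by the $4$-cycle diff-lay sum---the nonzero quantity established above---the observed value selects the correct diagonal and then fixes both Ct values. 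I expect this last case-analysis---checking that the positive-tube count cleanly separates the four configurations and that the surviving ambiguity is always a single $4$-cycle governed by \cref{lem:4cycle}---to be the main thing to get right, the algebra of the multiplication table being the easy part.
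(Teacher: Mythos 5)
Your proposal is correct and follows essentially the same route as the paper: a balanced complete bipartite graph whose biadjacency matrix is the multiplication table modulo $p$ with symmetric representatives in $\{-(p-1)/2,\dotsc,(p-1)/2\}$, with the $4$-cycle alternating sum reducing to $(t-u)(b-e)\not\equiv0\pmod p$. The only difference is that you spell out the positive-tube case analysis for decoding explicitly, which the paper delegates to \cref{lem:4cycle} and the preceding $4$-cycle discussion; your version of that analysis is also correct.
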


	\begin{proof}
		Treat the multiplication table
		\[\bma{
			1·1\bmod p	&	1·2\bmod p	&	⋯	&	1·p\bmod p	\\
			2·1\bmod p	&	2·2\bmod p	&	⋯	&	2·p\bmod p	\\
				⋮		&		⋮		&	⋰2	&		⋮		\\
			p·1\bmod p	&	p·2\bmod p	&	⋯	&	p·p\bmod p	
		}\]
		as the biadjacency matrix of a weighted directed bipartite graph.  Or
		use part of it when $p>T/2$.  To minimize the delay. use integers
		between $±(p-1)/2$ to represent the residue classes modulo $p$.  To
		verify that the cycle sum along any $4$-cycle is nonzero, note that
		every $2×2$ sub-matrix has the structure $[\sma{tb&te\\ub&ue}]$.  Hence
		the cycle sum is congruent to $tb-te+ue-ub≡(t-u)(b-e)≢0\pmod p$.
	\end{proof}

	According to Theorem~\ref{thm:pipet}, it follows that asymptotically
	there exist tropical codes such that  $T≈2√N$ and $ℓ≈√N$.
	
	We note that the tropical group testing schemes proposed in this section
	already represent an improvement over binary group testing.  A binary group
	testing scheme, when each person participates in two tubes, would avoid $3$-
	and $4$-cycles.  And it is known that the number of edges of a
	$4$-cycle--free graph is sub-quadratic \cite{17BJ}, whereas our proposed
	schemes allows $N$ to scale quadratically in terms of $T$.

\section{Two Infections and Tropical Supremacy}\label{sec:bite}

	This section again considers the $D=2$ case.  However, in this section we
	allow people to participate in more than two tests.  The goal of this
	section is to show that under this setup, there exist tropical codes that
	require less tests than the information-theoretical bound for binary group
	testing.

	We first go over an $(11,66,2)$-example.  Then we state a general sufficient
	condition of the existence of codes.   We provide a construction that
	satisfies this condition.  Finally, we briefly discuss the possibility
	of using concatenation to conserve the delays.

\subsection{An \texorpdfstring{$(11,66,2)$}{(11, 66, 2)}-example}

	For any positive integer $T$, let $[T]$ be $\{1,2,\dotsc,T\}$.
	A \emph{block design} $ℱ⊆2^{[T]}$ is a family of subsets of $[T]$.
	An element of $ℱ$ is called a \emph{block}.
	
	Let us take $T=11$ tests as an example.  Let $ℱ⊆2^{[T]}$ be a block design
	such that, for any two blocks $B,Z∈ℱ$, we have $|Z＼B|≥2$.  It can be shown
	that $|ℱ|≤66$ and the equality holds when $ℱ$ is the (unique up to
	isomorphism) $(4,5,11)$-Steiner system \cite[Section~6]{Bergstrand82}.  For
	more on Steiner systems, see the standard textbook \cite{CD06} and a recent
	breakthrough \cite{Keevash19}.  For now, fix $ℱ$ to be the
	$(4,5,11)$-Steiner system and $N=66$.

	Let $j：ℱ→[66]$ be a bijection that associates blocks with id numbers.  Let
	$k：ℱ→[37]$ be a coloring of blocks\footnote{ We found $k$ by a computer
	program that runs a greedy algorithm.  The optimality of $37$ is not the
	concern here.  We merely want an explicit number.} such that $k(B)≠k(Z)$
	whenever $|B∩Z|≥2$.  Let $x_j$ be the Ct value of the $j$th person.  Let the
	schedule be
	\[S_{tj(B)}≔\cas{
		t·k(B)\bmod37	&	if $t∈B$,	\\
		∞				&	if $t∉B$.	
	}\label{mat:mod37}\]
	Let $c_t$ be the Ct value of the $t$th tube.  Regarding the performance
	of $ℱ$ and $S$ we have the following two observations that supports the
	injectivity of tropical-multiplication of $S$ from the left.

	\begin{pro}\label{pro:single}
		For any three distinct individuals $A,B,Z∈ℱ$, we can differentiate the
		case where individuals $A,B$ are infected from where $A,Z$ are infected.
	\end{pro}

	\begin{proof}
		First notice that If $B＼A≠Z＼A$, then we can easily
		distinguish between the two cases based upon the set of 
		tests that contain infected samples.  Suppose therefore
		that $B＼A=Z＼A⊇\{t,u\}$. Then, when $A,Z$ are infected,
		\begin{align*}
			c_t-c_u
			&	=(S_{tj(Z)}+x_{j(Z)})-(S_{uj(Z)}+x_{j(Z)})	\\
			&	=S_{tj(Z)}-S_{uj(Z)}≡(t-u)k(Z)\pmod{37}
		\end{align*}
		On the other hand, when $A,B$ are infected,
		\begin{align*}
			c_t-c_u
			&	=(S_{tj(B)}+x_{j(B)})-(S_{uj(B)}+x_{j(B)})	\\
			&	=S_{tj(B)}-S_{uj(B)}≡(t-u)k(B)\pmod{37}
		\end{align*}
		As $(t-u)k(Z)≢(t-u)k(B)\pmod{37}$, this will differentiate $B$ from $Z$.
	\end{proof}

	\begin{pro}\label{pro:couple}
		For any four distinct individuals $A,B,Y,Z∈ℱ$, we can differentiate the
		case where individuals $A,B$ are infected from where $Y,Z$ are infected.
	\end{pro}

	\begin{proof}
		Suppose that $Y,Z$ are infected and we hypothesize that $A,B$ are
		infected.  The goal is to reject this hypothesis using the diff-lays.

		We say a tube $t$ is dominated by an individual $B$ if
		$c_t=S_{tj(B)}+x_{j(B)}$, i.e., the contribution of $B$ is what
		makes $c_t$ $c_t$.  Let $I_{AY}$ be the set of tubes that we
		think are dominant by $A$ but are actually dominated by $Y$.
		Define $I_{AZ}$, $I_{BY}$, and $I_{BZ}$ similarly.  Then
		$|I_{AY}∪I_{AZ}∪I_{BY}∪I_{BZ}|≥|A∪B|=|A＼B|+|B|≥2+5=7$.  This implies
		that one of $I_{AY}$, $I_{AZ}$, $I_{BY}$, and $I_{BZ}$ has cardinality
		$2$ or higher.  Suppose $|I_{AY}|≥2$ and $\{t,u\}⊆I_{AY}$, then
		\[c_t-c_u≡(t-u)k(Y)≢(t-u)k(A)\pmod{37}.\]
		Now that we can distinguish $A$ from $Y$, we can reject
		the hypothesis that $\{A,B\}$ are the infected persons.
	\end{proof}

	The moment we have rejected all incorrect hypotheses using
	\cref{pro:single,pro:couple}, whatever remains must be the
	truth.  That yields the following result.

	\begin{pro}\label{pro:mod37}
		Schedule \cref{mat:mod37} is an
		$(11,66,2)$-tropical code within maximum delay $36$.
	\end{pro}

	\begin{proof}
		We claim that the following decoder works.  Given any
		$𝐜∈(\{0\}∪ℕ∪\{∞\})^T$, if the number of positive tubes is $5$, we know
		those $5$ tubes are the only infected block.  If more than $5$ tubes are
		positive, then we know there are two patients.  We blindly guess two
		blocks $A,B∈ℱ$ and check if $\{A,B\}$ is compatible with $𝐜$.  If so,
		output the guess; if not, start over and make a new guess.
	
		Here is why the decoder, albeit inefficiently, works.  \Cref{pro:single}
		shows that if $|\{A,B\}∩\{Y,Z\}|=1$, where $Y,Z∈ℱ$ are the actual
		patients, there will be a contradiction.  \Cref{pro:couple} shows that
		if $|\{A,B\}∩\{Y,Z\}|=0$, there will also be a contradiction.  Hence the
		claimed decoder will terminate iff $|\{A,B\}∩\{Y,Z\}|=2$, i.e., our
		guess matches the reality.  There are a finite number of combinations to
		be guessed so the decoder must eventually guess correctly.  Once we know
		who are the patients it is straightforward to infer their Ct values.  We
		conclude that $S$ is a valid tropical code.
	\end{proof}

\subsection{More tests and test takers}

	One immediately sees that \cref{pro:mod37} can be
	generalized to host more tests and larger population.

	\begin{thm}[Two patients, max participants]\label{thm:modp}
		Let $ℱ⊆2^{[T]}$ be a block design with $N$ blocks.  Assume $|Z＼B|≥2$
		and $|B|≥3$ for all distinct blocks $B,Z∈ℱ$.  Let $j：ℱ→[N]$ be a
		bijection.  Let $p$ be an integer whose prime divisors are $≥T$.
		Assume there exists $k：ℱ→[p]$ satisfying $k(B)≠k(Z)$ whenever
		$|B∩Z|≥2$ (increase $p$ if necessary). Then
		\[S_{tj(B)}≔\cas{
			t·k(B)\bmod p	&	if $t∈B$,	\\
			∞				&	if $t∉B$.	
		}\]
		is a $(T,N,2)$-tropical code within maximum delay $p-1$.
	\end{thm}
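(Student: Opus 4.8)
The plan is to lift the two injectivity statements behind \cref{pro:mod37}---namely \cref{pro:single,pro:couple}---from the specific $(4,5,11)$-Steiner system to an arbitrary block design satisfying the stated hypotheses, and then to reuse its guess-and-check decoder verbatim. Write $c_t$ for the observed Ct value of tube $t$. Everything rests on one arithmetic fact: for distinct $t,u∈[T]$ one has $1≤|t-u|<T$, so $t-u$ shares no prime factor with $p$ and is therefore invertible modulo $p$. Hence, whenever a single block $B$ dominates both tubes of a pair $\{t,u\}$, the resulting \emph{diff-lay} satisfies $c_t-c_u≡(t-u)k(B)\pmod p$, and two blocks $B,Z$ with $k(B)≠k(Z)$ produce different residues.

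First I would generalize \cref{pro:single}: for distinct $A,B,Z∈ℱ$, the configuration $\{A,B\}$ is distinguishable from $\{A,Z\}$. If $B＼A≠Z＼A$ then the positive-tube sets $A∪B$ and $A∪Z$ already differ. Otherwise $B＼A=Z＼A$, and the pair condition $|B＼A|≥2$ hands me two tubes $t,u∈B＼A=Z＼A⊆B∩Z$. Then $|B∩Z|≥2$ forces $k(B)≠k(Z)$, and since each of $t,u$ is dominated by $Z$ in one configuration and by $B$ in the other, the diff-lays $c_t-c_u$ land on the distinct residues $(t-u)k(Z)$ and $(t-u)k(B)$ modulo $p$.

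Next I would generalize \cref{pro:couple}: four distinct blocks with $\{A,B\}∩\{Y,Z\}=\emptyset$ cannot give $\{Y,Z\}$ and $\{A,B\}$ the same observation. If $Y∪Z≠A∪B$ the positive-tube sets differ; otherwise call the common set $U$ and sort each of its tubes by the block dominating it under the guess $\{A,B\}$ and the block dominating it under the truth $\{Y,Z\}$ (ties broken arbitrarily), partitioning $U$ into $I_{AY},I_{AZ},I_{BY},I_{BZ}$. Here the hypotheses $|A＼B|≥2$ (the pair condition with $Z=A$) and $|B|≥3$ combine to give $|U|=|A∪B|=|A＼B|+|B|≥5$, so pigeonhole puts two tubes $t,u$ into one class, say $I_{AY}$. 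These lie in $A∩Y$, so $|A∩Y|≥2$ gives $k(A)≠k(Y)$, and the diff-lay identity forces the contradiction $(t-u)k(A)≡(t-u)k(Y)\pmod p$. I expect this pigeonhole to be the crux: the threshold $|B|≥3$ enters precisely to push $|U|$ beyond $4$ and guarantee a doubly occupied class, and verifying that the four classes genuinely cover $U$ (using that each positive tube is dominated by some block on each side) is where the bookkeeping must be done carefully.

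Finally I would assemble these into injectivity of $S⊙(-)$ on vectors with at most two finite entries. Because $|Z＼B|≥2$ forbids one block from containing another, the positive-tube set alone separates the zero-, one-, and two-infection cases and tells apart any two distinct single patients; the two generalized propositions then settle every two-versus-two comparison according to whether the guessed and true supports share one block or none. Once the support $\{A,B\}$ is known, a tube of $A＼B$ (nonempty since $|A＼B|≥2$) recovers $x_{j(A)}$ and a tube of $B＼A$ recovers $x_{j(B)}$, so the Ct values are pinned down as well, and the single- and zero-patient recoveries are immediate. This is the decoder of \cref{pro:mod37} now licensed in the general setting, and it certifies that $S$ is a $(T,N,2)$-tropical code within maximum delay $p-1$, the delays being the residues $t·k(B)\bmod p∈\{0,\dots,p-1\}$.
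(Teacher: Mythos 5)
Your proposal is correct and follows essentially the same route as the paper: the paper's proof also disposes of the mismatched-count cases by comparing positive-tube sets (using $|B\setminus Z|\geqslant1$) and then defers the two-versus-two comparisons to ``the same reasoning'' as \cref{pro:single,pro:couple}. You merely make explicit what the paper leaves implicit, namely that the pigeonhole count becomes $|A\cup B|=|A\setminus B|+|B|\geqslant2+3=5>4$ under the weaker hypothesis $|B|\geqslant3$, and that invertibility of $t-u$ modulo $p$ follows from the prime divisors of $p$ being $\geqslant T$.
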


	\begin{proof}
		Suppose there is one infected person, $Z∈ℱ$, and we guess there is one,
		$B∈ℱ$.  If $B≠Z$, then the tubes in $B＼Z$ will be negative while we
		expect them to be positive.  Hence we can reject this guess.

		Suppose there is one infected person, $Z∈ℱ$, but we guess there are two,
		$A,B∈ℱ$.  Without loss of generality, suppose $A≠Z$.  As $|A＼Z|≥1$,
		tubes in $A＼Z$ will be negative while we expect them to be positive.
		Hence we can reject this guess.

		Suppose there are two infected persons, $Y,Z∈ℱ$, but we guess there
		is one, $B∈ℱ$.  Without loss of generality, suppose $B≠Y$.  As
		$|Y＼B|≥1$, tubes in $Y＼B$ will be positive while we expect them to
		be negative.  Hence we can reject this guess.

		Suppose there are two infected persons, $Y,Z∈ℱ$, and we guess there
		are two, $A,B∈ℱ$.  Then depending on $|\{A,B\}∩\{Y,Z\}|=1$ or $0$, we
		will see a contradiction due to the same reasoning in \cref{pro:single}
		or \ref{pro:couple}, respectively.  Either case, we can reject incorrect
		hypothesis that $\{A,B\}$ are infected.
	\end{proof}
	
	The following lemma prepares block designs
	whose $N$ is satisfactorily large.

	\begin{lem}[{\cite[Theorem~1]{GS80}}]\label{lem:choose}
		For $T≥w≥1$, there exists a block design
		$ℱ⊆2^{[T]}$ such that each block has size $w$,
		\[|ℱ|≥÷1T\binom T{w},\]
		and $|B＼Z|≥2$ for all distinct blocks $B,Z∈ℱ$.
	\end{lem}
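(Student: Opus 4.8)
The plan is to obtain $ℱ$ as the most populous colour class of an explicit weighting of all $w$-subsets of $[T]$, which is the Graham--Sloane idea for constructing constant-weight codes. I would first define a weight function $φ：2^{[T]}→ℤ/Tℤ$ by $φ(B)≔∑_{i∈B}i\bmod T$, and then sort the $\binom{T}{w}$ blocks of size $w$ into the $T$ fibres $φ^{-1}(0),φ^{-1}(1),\dotsc,φ^{-1}(T-1)$.

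The heart of the argument is to check that any two distinct blocks sharing a colour are automatically separated, i.e.\ satisfy $|B＼Z|≥2$. I would argue this contrapositively. Suppose $B≠Z$ are $w$-sets with $|B＼Z|=1$; since they have equal size, they must differ by a single swap, say $B=(Z＼\{a\})∪\{b\}$ with $a∈Z$, $b∉Z$, and $a≠b$. Then $φ(B)-φ(Z)≡b-a\pmod T$. Because $a$ and $b$ are distinct elements of $\{1,2,\dotsc,T\}$, which is a complete residue system modulo $T$, we have $b-a≢0\pmod T$, so $φ(B)≠φ(Z)$. Consequently, within a single fibre no two distinct blocks can have $|B＼Z|=1$; every distinct pair therefore satisfies $|B＼Z|≥2$, and the blocks all have size $w$ by construction.

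Finally I would invoke the pigeonhole principle: the $T$ fibres partition the $\binom{T}{w}$ many $w$-subsets, so at least one fibre has cardinality $≥÷1T\binom Tw$. Setting $ℱ$ to be that fibre yields a block design of the required size whose distinct pairs are all separated, as claimed.

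I do not expect any serious obstacle. The single point that genuinely uses the hypotheses is the non-vanishing of $b-a$ modulo $T$, which relies on the ground set being exactly $[T]=\{1,\dotsc,T\}$, so that its size coincides with the modulus; this is the one place where bookkeeping must be done carefully. As a sanity check, the degenerate case $w=1$ behaves correctly, since each singleton then lands in its own fibre, the largest fibre contains a single element, and the bound reduces to $|ℱ|≥1$.
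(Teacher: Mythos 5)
Your proof is correct and is precisely the Graham--Sloane argument that the paper invokes: the paper gives no proof of its own for this lemma, citing \cite[Theorem~1]{GS80}, and that cited proof is exactly your construction (partition the $w$-subsets by $\sum_{i\in B}i\bmod T$, note that a single-element swap changes the sum by a nonzero residue, and apply pigeonhole). Nothing is missing; the one subtle point you flag --- that the modulus equals the size of the ground set so $b-a\not\equiv0$ --- is indeed the only place where care is needed, and you handle it correctly.
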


	Choose $w≔⌈T/2⌉$ to maximize the number of blocks.
	The asymptote of $N$ in terms of $T$ is thus
	\[N≈÷{2^T}{T√{πT/2}}.\]
	Conversely, $T≈㏒₂(N)+1.5㏒₂(㏒₂N)$.  In contrast, binary group testing has
	$2^T≥N(N-1)/2+N+1$ because the number of cases cannot exceed the number of
	test outcomes.  That implies $T≈2㏒₂(N)$.   We therefore conclude that
	tropical group testing beats the information-theoretical bound for binary
	group testing (by a factor of $1.99$).

\pgfplotstableread{
	$T$			8	9	10	11	12	13	14	15	16	
	bipartite	16	20	25	30	36	42	49	56	64	
	modulo		9	14	26	42	77	132	246	429	805	
	const		14	18	36	66	132	166	325	585	1170
	info		22	31	44	63	89	127	180	255	361	
}\tableweight
\begin{table}
	\caption{
		The number of blocks, $N$, from four sources: the bipartite graph
		construction, \cite{GS80}'s construction using modulo, the best
		known constant weight codes of minimum distance $4$ \cite{BConst},
		and the information-theoretical bounds for binary group testing.
	}\label{tab:weight}
	\def\arraystretch{1.5}
	$$\pgfplotstabletypeset[
		every head row/.style={before row=\toprule,after row=\midrule},
		every last row/.style={after row=\bottomrule},string type,
	]\tableweight$$
\end{table}

	For small parameters, a worthy reference is Brouwer's table of
	constant-weight codes of minimum distance $4$ \cite{BConst}.  See
	\cref{tab:weight} for a copy of the first few terms.  It can be seen that
	$(T,N)=(11,66)$ is the first time constant-weight codes surpass the
	information-theoretical bound, which is the reason we chose this example.

\subsection{Kronecker-Amplified Constructions}
	
	The previous subsection shows that one can easily beat the
	information-theoretical bounds using delays growing linear in $N$.  This
	subsection wants to limit the usage of delays without having to introduce
	too many additional tests.
	
	To facilitate the construction, let us practice how to assemble a code out
	of existing codes.  Heuristically speaking, our approach will allow us to
	generate new codes from previously constructed codes without increasing the
	required delay.

	Let $S$ be an $(t,n,2)$-tropical code within maximum delay $n$.  (For
	instance, schedule \cref{mat:mod37} as a $(11,66,2)$-tropical code.)
	Consider the Kronecker product
	\[S⊗𝟏_{1×n}\]
	as a $t×n²$ schedule matrix, where $𝟏_{1×n}$ is the all-one matrix of
	dimension $1×n$.  Notice that the first $n$ columns of this matrix are
	identical, the next $n$ columns of it are identical, and so on.  That is to
	say, this matrix first gathers every $n$ persons into one pool and then
	tests the resulting $n$ pools using $S$.
	
	If everyone participating the tests is healthy, then all tests will turn
	out negative.  Suppose one person, her id being $0≤Z≤n²-1$, is infected.
	Then the tests will reveal $⌊Z/n⌋$, the tens (the second least significant)
	digit of $Z$ in the $n$-ary expression, and $z_*$, the Ct value of $Z$.  If
	two persons, $0≤Y,Z≤n²-1$, are infected, then the tests will reveal
	$\{⌊Y/n⌋,⌊Z/n⌋\}$, i.e., the tens digits of $Y$ and $Z$.  There are two
	cases.
	\begin{itemize}
		\item	If the digit differ, $⌊Y/n⌋≠⌊Z/n⌋$, then $S⊗𝟏_{1×n}$
				sees two infected pools and will report $y_*$ and $z_*$.
		\item	If the digit agree, $⌊Y/n⌋=⌊Z/n⌋$, then $S⊗𝟏_{1×n}$
				sees one infected pool and will report $\min(y_*,z_*)$.
	\end{itemize}

	A similar argument shows that $𝟏_{1×n}⊗S$ reveals $Y\bmod n$ and
	$Z\bmod n$, the ones digits of $Y$ and $Z$, together with one (if the
	digits agree) or two (if the digits differ) Ct values.
	
	We now investigate what happens when we combine $S⊗𝟏_{1×n}$ and
	$𝟏_{1×n}⊗S$.  Define $S^{(2)}$ as this $2t×n²$ schedule matrix
	\[S^{(2)}≔\bma{
		𝟏_{1×n}⊗S	\\
		S⊗𝟏_{1×n}
	}\]
	and use it to perform the tests.  If there are zero infected or if there is
	one, the decoding procedure is trivial.  Therefore, assume that there are
	two infected persons, $0≤Y,Z≤n²-1$.  Let their Ct values be $y_*$ and $z_*$.
	Let $y_{10}n+y₁$ and $z_{10}n+z₁$, where $0≤y₁,y_{10},z₁,z_{10}≤n-1$, be
	the $n$-ary expansions of $Y$ and $Z$, respectively.  Then the first $t$
	tests can tell us $\{y₁,z₁\}⊆\{0,\dotsc,n-1\}$.  More precisely, the
	first $t$ tests tell us $\{(y₁,y_*),(z₁,z_*)\}$ when $y₁≠z₁$ and tell us
	$\{(z₁,\min(y_*,z_*))\}$ when $y₁=z₁$.  Similarly, the last $t$ tests can
	tell us $\{(y_{10},y_*),(z_{10},z_*)\}$ when $y_{10}≠z_{10}$ and tell us
	$\{(z_{10},\min(y_*,z_*))\}$ when $y_{10}=z_{10}$.  Now there is only one
	bit of information left consider. In particular, we need to differentiate
	between the case where
	\[\{Y,Z\}=\{y_{10}n+y₁,z_{10}n+z₁\}\phantom.\]
	and the case where
	\[\{Y,Z\}=\{y_{10}n+z₁,z_{10}n+y₁\}.\]
	
	When $y₁=z₁$,  we already can infer $\{Y,Z\}=\{z₁+y_{10}n,\allowbreak
	z₁+z_{10}n\}$.  When $y_{10}=z_{10}$, similarly, we already can infer
	$\{Y,Z\}=\{y₁+z_{10}n,z₁+z_{10}n\}$.  If $|\{y₁,z₁\}|=|\{y_{10},z_{10}\}|=2$
	and $Y$ and $Z$ have different Ct values, we will see that $y₁$ and $y_{10}$
	associate to a Ct value---$y_*$---whilst $z₁$ and $z_{10}$ associate to
	another different Ct value---$z_*$.  That will help us link $y₁$ to $y_{10}$
	and $z₁$ to $z_{10}$ and thus help us recover $Y,Z$.  One difficult case
	remains to be addressed, namely $|\{y₁,z₁\}|=|\{y_{10},z_{10}\}|=2$ yet
	$y_*=z_*$.
	
	To overcome the last case, we add one more test to make up the missing
	information: Let $p≥\max(3,n)$ be the smallest possible prime.  Let
	$0≤α₁,α₂,β₁≤p-1$ be distinct numbers modulo $p$.  Compute, for any
	$0≤a₁,a₂≤n-1$,
	\[\bma{
		b₁
	}≔\bma{
		1	&	β₁
	}\bma{	
		1	&	α₁	\\
		1	&	α₂
	}^{-1}\bma{
		a₁	\\	a₂
	}\pmod p\]
	using modulo $p$ arithmetics.  That is, we find $b₁$ such that $(α₁,a₁)$
	and $(α₂,a₂)$ and $(β₁,b₁)$ are point--evaluation pairs of some degree-one
	polynomial.  In other words, they are colinear in the plane $𝔽_p²$.

	Treating $b₁$ as a function in $a₁,a₂$, we want to append $b₁(a₁,a₂)$ at
	the bottom of the $(1+a₁+a₂n)$th column of $S^{(2)}$.  That is, we stack
	$S^{(2)}$ on top of this $1×n^k$ matrix
	\[Q^{(2)}≔\bma{
		1	&	β₁
	}\bma{	
		1	&	α₁	\\
		1	&	α₂
	}^{-1}\bma{
		𝟏_{1×n}⊗M	\\
		M⊗𝟏_{1×n}		
	}\pmod p\]
	where
	\[M≔\bma{
			0	&	1	&	⋯	&	n-1
	}.\]
	We claim that the addition of $Q^{(2)}$ to $S^{(2)}$ results in a valid
	tropical code.

	\begin{pro}\label{pro:code2}
		If $S$ is an $(t,n,2)$-tropical code within maximum delay $n$, then
		\[\bma{S^{(2)}\\Q^{(2)}}\]
		is an $(2t+1,n²,2)$-tropical code within maximum delay $p-1$, where
		$p$ is the least prime $≥\max(3,n)$.
	\end{pro}

	\begin{proof}
		When there is $≤1$ infected person, there are two infected persons
		sharing a same digit, or there are two infected persons having different
		Ct values, the first $2t$ tests suffice.  When the two infected persons
		have distinct digits yet the same Ct values, we utilize the last test in
		the following manner.

		Let $Y$ and $Z$ be the infected persons, $0≤Y,Z≤n²-1$.  Let
		$y_{10}n+y₁$ and $z_{10}n+z₁$ be the $n$-ary expansion of
		$Y$ and $Z$.  The result of $Q^{(2)}$, denoted by $c₁$, is
		$c₁=\min(b₁(y₁,y_{10})+y_*,b₁(z₁,z_{10})+z_*)$.  Therefore
		$c₁-z_*∈\{b₁(y₁,y_{10}),b₁(z₁,z_{10})\}$.  Now there are five
		points on the $𝔽_p²$ plane:
		\begin{align*}
			(α₁,y₁)\,,\,(α₂,y_{10})\,,	&						\\*[-1.5ex]
										&	\;(β₁,c₁-z_*)\,,	\\*[-1.5ex]
			(α₁,z₁)\,,\,(α₂,z_{10})\,.	&							
		\end{align*}
		We know the point $(β₁,c₁-z_*)$ is either $(β₁,b₁(y₁,y_{10}))$ or
		$(β,b₁(z₁,z_{10}))$.  It must be colinear with two of the four
		points to its left.  By looking at which two are colinear with it,
		we can correctly link $y₁$ to $y_{10}$ and $z₁$ to $z_{10}$.
		This recovers $Y$ and $Z$ and finishes the proof.
	\end{proof}
	
	Next we define the notion of BITE.

	\begin{dfn}
		A $(T,N,2)$-tropical code is said to \emph{beat the
		information-theoretical estimate (BITE)}\footnote{
		We will use BITE as a verb.} if $N^2/4>2^T$.
	\end{dfn}

	BITing is slightly stronger than simply beating the binary
	bound (by about one test), but it is an inductive invariant.

	\begin{pro}\label{pro:bite2}
		If $S$ is an $(t,n,2)$-tropical code that BITEs, then
		\[\bma{
			S^{(2)}	\\
			Q^{(2)}
		}\]
		is an $(n²,2t+1,2)$-tropical that BITEs.
	\end{pro}

	\begin{proof}
		The precondition on $S$ is $n²/4>2^t$.  What is desired to be
		shown is $(n²)²/4>2^{2t+1}$.  The former implies the latter.
	\end{proof}

	\Cref{pro:code2,pro:bite2} generalize to the following
	theorem, of which the proof is given in \cref{pf:bite}.

	\begin{thm}[Two patients, no patience]\label{thm:bite}
		Let $S$ be a $(t,n,2)$-tropical code within maximum delay $n$,
		then there is an $(n^k,kt+2k-3,2)$-tropical code within
		$q-1$ cycles, where $q≥\max(3k-3,n)$ is a prime power.
		If $S$ BITEs, so do the putative codes BITE for all $k≥1$.
	\end{thm}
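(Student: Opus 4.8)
The plan is to induct on the number of base-$n$ digits $k$, taking the two-digit construction of \cref{pro:code2,pro:bite2} as the base case. I index the $n^k$ test-takers by their length-$k$ base-$n$ expansions, so that a person is a string $(a_1,\dots,a_k)$ with $0\le a_i\le n-1$, and I build the $k$-digit code from the $(k-1)$-digit code by appending one fresh digit. Each appended digit will cost a Kronecker-embedded copy of $S$ ($t$ rows) together with two linking rows, i.e.\ $t+2$ tests; since the base case spends $2t+1$ tests, the total after reaching $k$ digits is $2t+1+(k-2)(t+2)=kt+2k-3$, matching the claim.

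For the inductive step, suppose a valid $(k-1)$-digit code $C$ on $n^{k-1}$ people has been constructed. I would run the tests given by
\[\bma{C\otimes\mathbf{1}_{1\times n}\\ \mathbf{1}_{1\times n^{k-1}}\otimes S\\ Q}\]
whose top block runs $C$ on the leading $k-1$ digits (pooling over the new digit), whose middle block runs $S$ on the new digit (pooling over the leading part), and whose bottom block $Q$ is two linking rows. The row count is $[(k-1)t+2k-5]+t+2=kt+2k-3$ and the population grows from $n^{k-1}$ to $n^k$, as needed.

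Decoding reduces to the logic already used for \cref{pro:code2}. If at most one person is infected, or if the two infected people $Y,Z$ share their new digit, or share their leading part, or have different Ct values, then the first two blocks already determine everyone: $C$ returns the (unordered) pair of leading parts tagged by Ct value, $S$ returns the pair of new digits tagged by Ct value, and any of these special features forces the pairing. The single genuinely ambiguous case is two infected people with equal Ct values but distinct leading parts and distinct new digits, where only the binary question ``which leading part goes with which new digit'' remains. To settle it I interpret each full digit string as the degree-$\le k-1$ polynomial interpolating $(\alpha_i,a_i)$ at fixed distinct nodes $\alpha_1,\dots,\alpha_k\in\mathbb{F}_q$, and let $Q$ report the values of this polynomial at two fresh nodes, exactly as $Q^{(2)}$ reports the value of a line at one node; the correct pairing is the one whose two candidate polynomials are consistent with the reported values.

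The hard part will be this consistency check, and it is also what dictates the two rows per step and the field size. A linking test reveals only $\min(P_Y(\beta),P_Z(\beta))$ and conceals which curve attained the minimum, so a wrong pairing---whose candidate curves differ from the true ones by a nonzero multiple of $\prod_{i<k}(x-\alpha_i)$, hence at every fresh node---can still reproduce the observed minimum at one node by accident. I would show that two nodes cannot both suffer this collapse, so that the true pairing is the unique consistent one; this forces $q$ to be a prime power large enough to hold the $k$ interpolation nodes together with the $2k-3$ linking nodes accumulated along the induction (all distinct) and the delays of $S$, which is exactly the hypothesis $q\ge\max(3k-3,n)$. Preservation of BITE is then immediate, as in \cref{pro:bite2}: raising $n^2/4>2^t$ to the $k$th power gives $n^{2k}>2^{k(t+2)}=2^{kt+2k}$, so $(n^k)^2/4>2^{kt+2k-2}>2^{kt+2k-3}$, which is the BITE inequality for the new parameters.
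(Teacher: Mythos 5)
Your bookkeeping (row count $kt+2k-3$, the case analysis showing that only the ``equal Ct values, distinct leading parts, distinct new digits'' case is problematic, and the BITE computation) is fine, but the step you yourself flag as ``the hard part'' is where the argument breaks, and it cannot be repaired within your inductive architecture. In your step from $k-1$ to $k$ digits, the only tests that couple the new digit to the leading part are the two fresh linking rows; every other row either pools over the new digit or pools over the leading part, hence is blind to the pairing. The wrong pairing replaces the true interpolants $f=P_Y$, $g=P_Z$ by $f'=f+h$, $g'=g-h$ with $h=(z_k-y_k)\prod_{i<k}\frac{x-\alpha_i}{\alpha_k-\alpha_i}$, and a linking test at a fresh node $\beta$ fails to detect the swap exactly when $\min\bigl(f'(\beta),g'(\beta)\bigr)=\min\bigl(f(\beta),g(\beta)\bigr)$, which (since $h(\beta)\neq0$) happens precisely when $h(\beta)=(g-f)(\beta)$, i.e.\ when $\beta$ is a root of $D:=h-(g-f)$. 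Now $D$ has degree $\le k-1$ and only one forced root (at $\alpha_k$), so it may have up to $k-2$ further roots: for $k\ge4$ one can choose $D\neq0$ vanishing at $\alpha_k,\beta_1,\beta_2$, read off $y_i-z_i:=D(\alpha_i)$ for $i<k$, and pick $y_k\neq z_k$ and $y_*=z_*$ freely; then $\{Y,Z\}$ and the swapped pair produce identical outputs on \emph{all} $kt+2k-3$ tests. So ``two nodes cannot both suffer this collapse'' is false for $k\ge4$, and making your induction sound would require $j-1$ fresh nodes at step $j$, i.e.\ $\Theta(k^2)$ linking rows rather than $2k-3$.

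The paper's proof (\cref{pf:bite}) avoids this by \emph{not} inducting: it takes all $k$ digit blocks $S^{(k)}$ at once and appends a single $(2k-3)$-row block $Q^{(k)}$ whose rows are the parity symbols of one systematic $[3k-3,k,2k-2]$ Reed--Solomon codeword, i.e.\ $2k-3$ evaluations of the \emph{same} degree-$(k-1)$ polynomial. Correctness is then a global counting argument rather than a per-step binary decision: any consistent guess $\{A,B\}$ satisfies $\{a_i,b_i\}=\{y_i,z_i\}$ on the $k$ systematic positions and $\min(e_j,f_j)=\min(u_j,v_j)$ on the $2k-3$ parity positions, so the four agreement sets $I_{AY}\cup J_{AY},\dotsc,I_{BZ}\cup J_{BZ}$ have total size $\ge4k-3$ and one of them has size $\ge k$; two codewords of minimum distance $2k-2$ in length $3k-3$ agreeing on $k$ coordinates coincide, forcing $\{A,B\}=\{Y,Z\}$. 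Your scheme distributes the same $2k-3$ rows over polynomials of different degrees (one per induction step), which destroys exactly the minimum-distance leverage this argument needs.
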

	
	\begin{rem}
		This Kronecker-based construction is inspired by
		\cite[Theorem~1]{Edel04} (\cite{Mukhopadhyay78}) and the way it is
		used in the paper.  This shares common elements with the grid-based
		construction \cite{BK20} that is used to attack the pandemic
		\cite{SKH20, Taufer20, MNBSURNMRNNSMNNUMMMNTN21}.  This also shares
		common elements with the fast decoder approach \cite{NPR11}. Other fast
		decoder approach such as \cite{LCPR19, BCSYZ21} also contain similar
		ideas.
	\end{rem}

	Ideally we can use very large but almost equal $q≈n≈3k-3$ by
	consulting \cref{thm:bite,lem:choose}.  Hence,  as $q$ goes to
	infinity, there exist $(T,N,2)$-tropical codes within maximum
	delay $q$, where $q≈3㏒₂(N)/㏒₂(3㏒₂(N))$ and $T≈1.01㏒₂(N)$.

\section{Many Infections and Disjunction}\label{sec:disjunct}

	In this section, we move on to general $D$.
	Concerning the existence of tropical codes, we will prove
	one necessary condition and two sufficient conditions.

	Given a schedule matrix $S∈(\{0\}∪ℕ∪\{∞\})^{T×N}$, the \emph{underlying
	block design} registers the tests each person participates in.
	It is defined to be a multiset as follows.
	\[ℱ≔\Bigl\{\{t∈[T]｜S_{tj}<∞\}\Bigm|j∈[N]\Bigr\}.\]
	A nasty edge case is when two individuals participate in exactly the same
	subset of tests.  By letting $ℱ$ be a multiset we have $|ℱ|=N$.
	Also by distinct blocks $B₁,\dotsc,B_D∈ℱ$, we mean that the $B$'s originate
	from distinct individuals. But as subsets of $[T]$ they are not necessarily
	distinct.
	
	Here is the necessary condition promised.  We state the weak version
	following by the strong version.
	
	\begin{dfn}
		A block design $ℱ$ is said to be \emph{$D$-disjunct}
		if $|Z＼(B₁∪\dotsb∪B_D)|≥1$ for distinct blocks $Z,B₁,\dotsc,B_D∈ℱ$.
	\end{dfn}

	\begin{thm}
		The underlying block design $ℱ$ of a $(T,N,D)$-tropical
		code must be $(D-1)$-disjunct.
	\end{thm}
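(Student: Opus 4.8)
The plan is to prove the contrapositive: if $\mathcal{F}$ is not $(D-1)$-disjunct, then $S$ fails to be a $(T,N,D)$-tropical code. So I would start by unpacking the failure of disjunctness. Assuming $\mathcal{F}$ is not $(D-1)$-disjunct, there are distinct blocks $Z, B_1, \dots, B_{D-1} \in \mathcal{F}$---originating from distinct individuals $z, b_1, \dots, b_{D-1}$---with $Z \subseteq B_1 \cup \dots \cup B_{D-1}$. The aim is then to exhibit two distinct Ct-vectors, each with at most $D$ finite entries, that the schedule matrix $S$ maps to the same output, thereby violating the defining injectivity of a tropical code.

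The construction exploits the $\min$-structure of $\odot$: I would render person $z$ \emph{invisible} by drowning out her contribution. Concretely, let $\mathbf{y}$ assign Ct value $0$ to each of $b_1, \dots, b_{D-1}$ and a large finite Ct value $m$ to $z$ (and $\infty$ to everyone else), and let $\mathbf{x}$ agree with $\mathbf{y}$ except that $x_z = \infty$. Then $\mathbf{x}$ has at most $D-1$ finite entries and $\mathbf{y}$ at most $D$, so both are admissible, and they are distinct because they disagree in coordinate $z$ (legitimate since $z \notin \{b_1, \dots, b_{D-1}\}$, the blocks coming from distinct individuals).

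The key step is verifying $S \odot \mathbf{x} = S \odot \mathbf{y}$. For each test $t$, the entry $(S \odot \mathbf{y})_t$ is the minimum over infected persons present in $t$ of delay-plus-Ct. For $t \notin Z$ person $z$ is absent ($S_{tz} = \infty$), so she is irrelevant. For $t \in Z$, the containment $Z \subseteq B_1 \cup \dots \cup B_{D-1}$ guarantees that some $b_i$ is also present, contributing $S_{tb_i} + 0 \leq \mu$, where $\mu$ is the finite maximum of the finite entries of $S$ in columns $b_1, \dots, b_{D-1}$. Choosing $m = \mu$ forces $z$'s contribution $S_{tz} + m \geq m \geq \mu$ never to beat the competing $b_i$, so deleting $z$ leaves every test reading unchanged. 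Hence $S \odot \mathbf{x} = S \odot \mathbf{y}$ while $\mathbf{x} \neq \mathbf{y}$, contradicting the definition.

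I expect the only real subtlety to be the domination bookkeeping: ensuring that at every test in $Z$ there is a genuine competing finite contribution---which is exactly what $Z \subseteq B_1 \cup \dots \cup B_{D-1}$ delivers---and that a single uniform choice of $m$ works for all such tests simultaneously. The degenerate case $D = 1$ (where the family of $b_i$ is empty and the condition reads ``every block is nonempty'') falls out of the same argument: if $Z = \emptyset$, person $z$ participates in no test, so infecting her with any Ct value yields the same all-$\infty$ reading as the all-healthy vector, again breaking injectivity.
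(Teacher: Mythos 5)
Your proof is correct and takes exactly the same approach as the paper's: the paper's one-line argument (``when the $B$'s are severely infected, we cannot tell if $Z$ is slightly infected or not infected'') is precisely your construction with the $b_i$'s at Ct value $0$ and $z$ at a sufficiently large finite Ct value, so that $Z\subseteq B_1\cup\dotsb\cup B_{D-1}$ makes $z$'s contribution never attain the minimum in any test. You have merely supplied the domination bookkeeping and the $D=1$ edge case that the paper leaves implicit.
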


	\begin{proof}
		Suppose $ℱ$ is not $(D-1)$-disjunct, then there exist distinct blocks
		$Z,B₁,\dotsc,B_{D-1}∈ℱ$ such that $Z⊆B₁∪\dotsb∪B_{D-1}$.  This means
		that, when the $B$'s are severely infected, we cannot tell if $Z$ is
		slightly infected or not infected.
	\end{proof}

	\begin{dfn}
		A block design $ℱ$ is said to be \emph{$D$-uniquely-disjunct}
		if it is $D$-disjunct and
 		\[\biggl|\Bigl\{Z∈ℱ\Bigm|Z＼(B₁∪\dotsb∪B_D)=\{t\}\Bigr\}\biggr|≤1\]
		for any vertex $t∈[T]$ and distinct blocks $B₁,\dotsc,B_D∈ℱ$.
	\end{dfn}

	\begin{thm}
		The underlying block design $ℱ$ of a $(T,N,D)$-tropical
		code must be $(D-1)$-uniquely-disjunct.
	\end{thm}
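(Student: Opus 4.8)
The plan is to argue by contradiction, building on the previous theorem. A $(T,N,D)$-tropical code already forces $ℱ$ to be $(D-1)$-disjunct, so the only remaining content is the uniqueness clause; I would assume it fails and manufacture a collision $S⊙𝐱=S⊙𝐲$ with $𝐱≠𝐲$, each vector having $D$ finite entries, contradicting the defining property of the code. Concretely, suppose there are a vertex $t∈[T]$, distinct blocks $B₁,\dotsc,B_{D-1}∈ℱ$, and two blocks $Z₁,Z₂∈ℱ$ arising from distinct columns $j₁≠j₂$ of $S$, such that $Z₁＼U=Z₂＼U=\{t\}$ where $U≔B₁∪\dotsb∪B_{D-1}$. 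First I would record the structural consequences: $t∈Z₁∩Z₂$, $t∉U$, $Z_i⊆U∪\{t\}$, and each $Z_i$ differs from every $B_j$ (otherwise the set difference would be empty). These pin down tube $t$ as the only place where $Z₁$ and $Z₂$ can genuinely disagree.

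Next I would construct the two infection patterns so that the $B$'s dominate every tube they touch and $Z₁,Z₂$ matter only at tube $t$. I would infect $B₁,\dotsc,B_{D-1}$ severely, each with Ct value $0$, and infect $Z₁$ and $Z₂$ mildly with large Ct values $x_{Z₁}$ and $x_{Z₂}$, leaving all other persons healthy. Let $𝐱$ infect $B₁,\dotsc,B_{D-1},Z₁$ and $𝐲$ infect $B₁,\dotsc,B_{D-1},Z₂$. To force agreement at tube $t$ I would put $x_{Z₂}≔x_{Z₁}+S_{tj₁}-S_{tj₂}$ and choose $x_{Z₁}$ large enough that both $x_{Z₁}$ and $x_{Z₂}$ are nonnegative and exceed every finite entry of $S$.

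Then I would check $S⊙𝐱=S⊙𝐲$ tube by tube. For a tube $s∈U$, some $B_j$ participates and contributes at most the largest finite entry of $S$, whereas $Z₁$ and $Z₂$, when present, contribute strictly more (their Ct values already exceed that entry); hence the reading is set by the shared $B$'s and is identical under $𝐱$ and $𝐲$. For tube $t$ no $B_j$ participates (as $t∉U$), so the readings are $S_{tj₁}+x_{Z₁}$ and $S_{tj₂}+x_{Z₂}$, which coincide by construction. For every other tube $s∉U∪\{t\}$, the inclusion $Z_i⊆U∪\{t\}$ shows that neither $Z_i$ participates, and no $B_j$ does either, so both tubes read negative. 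This produces the collision and the contradiction, establishing the uniqueness clause.

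The hard part will be the domination bookkeeping in the first case: I must guarantee that lifting the $Z_i$ above every finite delay truly prevents them from lowering any tube in $U$, which rests on $S$ having finitely many entries and, crucially, on the observation that for $s≠t$ the containment $Z_i⊆U∪\{t\}$ forces $s∈U$ the moment $Z_i$ touches $s$. A secondary subtlety is that $ℱ$ is a multiset, so $Z₁$ and $Z₂$ could coincide as subsets while coming from distinct columns; phrasing everything through the columns $j₁≠j₂$ keeps $𝐱≠𝐲$ and the count $D$ correct even in that degenerate case.
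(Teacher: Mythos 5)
Your proposal is correct and follows essentially the same route as the paper's proof: assume the uniqueness clause fails, infect the $B$'s heavily so they dominate every tube in their union, and observe that the two blocks with identical singleton outcrop $\{t\}$ become indistinguishable. You simply carry out in full the collision construction (including the Ct-value adjustment $x_{Z_2}=x_{Z_1}+S_{tj_1}-S_{tj_2}$ and the multiset caveat) that the paper leaves implicit in one sentence.
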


	\begin{proof}
		We already see that $ℱ$ must be $(D-1)$-disjunct.
		Suppose it is not $(D-1)$-uniquely-disjunct, then there exist
		distinct blocks $Y,Z,B₁,\dotsc,B_{D-1}∈ℱ$ such that
		$Y＼(B₁∪\dotsb∪B_{D-1})=Z＼(B₁∪\dotsb∪B_{D-1})$, which contains but one
		vertex.  This means that, when the $B$'s are heavily infected, we cannot
		tell apart if it is $Y$ or $Z$ that is infected.
	\end{proof}

	Here are the two sufficient conditions promised.

	\begin{dfn}
		A block design $ℱ$ is said to be \emph{$¯D¯$-separable}
		if $Z₁∪\dotsb∪Z_D≠B₁∪\dotsb∪B_D$ for different subsets of
		up to $D$ blocks, $ℱ⊇\{Z₁,\dotsc,Z_D\}≠\{B₁,\dotsc,B_D\}⊆ℱ$.
	\end{dfn}

	\begin{thm}
		If there is a $¯D¯$-separable block design $ℱ$ with $N$ blocks
		on $T$ vertices, then there exists a $(T,N,D)$-tropical code
		within maximum delay $0$.
	\end{thm}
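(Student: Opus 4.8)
The plan is to build the schedule matrix $S$ directly from the block design $\mathcal F$ and then exhibit a decoder that inverts the map $\bm x \mapsto S \odot \bm x$. First I would fix a bijection between the $N$ persons and the $N$ blocks of $\mathcal F$, writing $B_j$ for the block of person $j$, and set $S_{tj} \coloneqq 0$ if $t \in B_j$ and $S_{tj} \coloneqq \infty$ otherwise; this uses only the alphabet $\{0, \infty\}$, so any resulting code automatically lies within maximum delay $0$. The point of the delay-$0$ choice is that the tropical product collapses to a plain minimum over the participating infected persons: for a vector $\bm x$ with infected set $P = \{j : x_j < \infty\}$, the $t$th output entry is $c_t = \min_j (S_{tj} + x_j) = \min\{x_j : j \in P,\ t \in B_j\}$. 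In particular $c_t < \infty$ exactly when $t$ lies in $U \coloneqq \bigcup_{j \in P} B_j$, so the set of positive tubes equals the union of the infected blocks.

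Next I would recover the identities of the infected persons, which is essentially the classical separable-code argument. Given two vectors with infected sets $P_x$ and $P_y$, the supports of their outputs are $\bigcup_{j \in P_x} B_j$ and $\bigcup_{j \in P_y} B_j$. Since $\overline D$-separability guarantees that distinct subsets of at most $D$ blocks have distinct unions --- and since separability already at $D \ge 1$ forces distinct persons to carry distinct blocks --- different infected sets produce different positive-tube patterns. Hence the support alone pins down $P$: it is the unique subset of $\mathcal F$ of size $\le D$ whose union is the observed $U$.

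The remaining, and genuinely less obvious, task is to recover the Ct value of each infected person once $P$ is known; this is where separability must be invoked a second time, and I expect it to be the crux of the argument. The trick is to apply separability to the two subsets $P$ and $P \setminus \{j\}$, which are distinct and both of size $\le D$; their unions therefore differ, and since the latter is contained in the former, there is a tube $t \in B_j$ lying in no other infected block. For such a \emph{private} tube, $c_t = \min\{x_i : i \in P,\ t \in B_i\} = x_j$, so reading off $c_t$ returns $x_j$ exactly. Carrying this out for every $j \in P$ reconstructs $\bm x$ completely, so the map $\bm x \mapsto S \odot \bm x$ is injective on all vectors with at most $D$ finite entries, which is precisely the assertion that $S$ is a $(T, N, D)$-tropical code within maximum delay $0$.
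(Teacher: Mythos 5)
Your proposal is correct and follows essentially the same route as the paper: the all-zero/infinity vanilla schedule, recovery of the infected set from the positive-tube support via separability, and recovery of each Ct value from a tube private to that person. The only cosmetic difference is that you derive the existence of the private tube directly by applying separability to $P$ versus $P\setminus\{j\}$, whereas the paper cites the known fact that $\overline{D}$-separability implies $(D-1)$-disjunctness; your inline argument is a valid self-contained proof of the same step.
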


	\begin{proof}
		Use the vanilla schedule
		\[S_{tj(B)}≔\cas{
			0	&	if $t∈B$,	\\
			∞	&	if $t∉B$,	
		}\]
		where $j：ℱ→[N]$ is a bijection.  To decode, reinterpret every test
		result as positive or negative.  Use a binary group testing decoder to
		determine who are infected.  For every infected individual $Z$, the
		“outcrop” $Z＼(B₁∪\dotsb∪B_{D-1})$ is never empty because
		$¯D¯$-separability implies $(D-1)$-disjunction \cite{CH07}.  This
		further implies that there is at least one tube wherein $Z$ is the only
		infected participant.  The Ct value of this tube is the Ct value of $Z$.
	\end{proof}

	\begin{dfn}
		A block design $ℱ$ is said to be \emph{$D$-doubly-disjunct}
		if $|Z＼(B₁∪\dotsb∪B_D)|≥2$ for distinct blocks $Z,B₁,\dotsc,B_D∈ℱ$.
	\end{dfn}

	\begin{thm}\label{thm:cycle}
		If there is a $(D-1)$-doubly-disjunct block design $ℱ$ with $N$ blocks
		on $T$ vertices, then there exists a $(T,N,D)$-tropical code.
	\end{thm}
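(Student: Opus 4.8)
The plan is to reuse the multiplicative schedule of \cref{thm:modp}: fix a bijection $j\colon\mathcal F\to[N]$ and an integer $p$ all of whose prime divisors are $\geq T$, choose a coloring $k\colon\mathcal F\to[p]$ with $k(B)\neq k(Z)$ whenever $|B\cap Z|\geq 2$ (such a $k$ exists once $p$ is large enough, since this is just a proper coloring of the graph on $\mathcal F$ whose edges join blocks meeting in $\geq 2$ vertices), and set $S_{t\,j(B)}\coloneqq t\cdot k(B)\bmod p$ for $t\in B$ and $\infty$ otherwise. I would then decode exactly as in \cref{pro:mod37}: guess a set of at most $D$ infected blocks together with their Ct values and accept the first guess consistent with the observed tube vector $\mathbf c=S\odot\mathbf x$. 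Correctness reduces to the claim that two different $\leq D$-sparse infection patterns never produce the same $\mathbf c$, i.e.\ that every wrong guess $\mathcal B$ can be refuted.

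The argument would rest on two tools. \textbf{Private pairs:} if $\mathcal Z\subseteq\mathcal F$ is the true infection with $|\mathcal Z|\leq D$, then applying $(D-1)$-doubly-disjunctness to each $Z\in\mathcal Z$ against the $\leq D-1$ remaining infected blocks yields two tubes $t,u\in Z$ that lie in no other infected block; on them $c_t=S_{t\,j(Z)}+z$ and $c_u=S_{u\,j(Z)}+z$, so the pair reads off both the color and the Ct value of $Z$. \textbf{Single crossing:} for tubes $t\neq u$ dominated by a common block $B$ in \emph{any} explanation of $\mathbf c$ one has $c_t-c_u\equiv(t-u)k(B)\pmod p$; since $0<|t-u|<T$ and every prime divisor of $p$ is $\geq T$, the factor $t-u$ is invertible modulo $p$, so two blocks of \emph{different} color cannot both dominate a pair $\{t,u\}$ contained in their intersection. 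In particular a single block reproduces at most one tube of any fixed color-slope other than its own, which caps the reproducible support of any rival explanation at $D^2$ tubes.

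Combining these: if $\bigcup\mathcal B\neq\bigcup\mathcal Z$ some tube is positive in one pattern and negative in the other, and equal supports force (again by doubly-disjunctness, since a block of $\mathcal B$ lying outside $\mathcal Z$ would otherwise protrude by two vertices) a true block $Z\in\mathcal Z\setminus\mathcal B$. Its private pair $t,u$ has true dominator $Z$, so $c_t-c_u\equiv(t-u)k(Z)$; were the rival to explain both $t,u$ by a single block $B$, then $\{t,u\}\subseteq B\cap Z$ forces $k(B)\neq k(Z)$ and single crossing gives the contradiction, exactly as in \cref{pro:single,pro:couple}. The hard part, and the heart of the proof, is the residual possibility that the two private tubes of every symmetric-difference block are explained by two \emph{distinct} rival blocks. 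Here I would run single crossing globally: a large support is refuted at once by the $D^2$ cap, while for small support I would trace, block by block, the forced distinct assignments of the private pairs and show that the resulting map from true blocks to rival blocks cannot be globally consistent, upgrading \cref{pro:couple}'s $D=2$ pigeonhole to general $D$. This counting step—precisely where the ``doubly'' (the bound $\geq 2$ rather than $\geq 1$) is spent—is the main obstacle; once the blocks are pinned down, recovering each Ct value from a private pair is immediate.
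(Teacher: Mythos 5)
Your proposal has a genuine gap, and you have located it yourself: the case in which the two private tubes of a true block are explained by two \emph{distinct} rival blocks is exactly where the proof has to happen, and ``trace the forced assignments and show the map cannot be globally consistent'' is not yet an argument. The pairwise tool you rely on---that two tubes dominated by a common block $B$ satisfy $c_t-c_u\equiv(t-u)k(B)\pmod p$---only refutes a rival explanation in which some \emph{single} block covers two tubes whose true dominator is also a single block; an adversarial domination pattern can spread every such pair across different rival blocks, and then no pair yields any modular constraint at all (when $t$ and $u$ have different true dominators, $c_t-c_u$ involves two unknown Ct values and carries no residue information). The claimed ``$D^2$ cap'' does not repair this, since a rival block may dominate many tubes whose true dominators are pairwise distinct. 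For $D=2$ the pigeonhole of \cref{pro:couple} works precisely because $7>4$ forces some class $I_{uv}$ to have two elements; for general $D$ the count $|\bigcup_{uv}I_{uv}|\geq 2D$ spread over $D^2$ classes forces nothing of the sort.

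The paper closes this gap by changing both the schedule and the combinatorial tool. Instead of mod-$p$ slopes it uses delays $S_{tj(B)}=2^{t+j(B)T}$, and instead of hunting for a single heavy class it forms the bipartite multigraph on $[D]\sqcup[D]$ with $|I_{uv}|$ parallel edges between $u$ and $v$; having $2D$ vertices and at least $2D$ edges (this is where doubly-disjunctness is spent), the graph contains a cycle, and the alternating sum of Ct values around that cycle telescopes the unknown Ct values away on \emph{both} sides, equating a signed sum of delays of the $B$'s with a signed sum of delays of the $Z$'s---impossible when all delays are distinct powers of two. This is the global consistency argument your sketch is missing, and it genuinely needs delays whose cycle sums never vanish; the coloring $k$ of \cref{thm:modp}, which only separates blocks meeting in two or more vertices, does not provide that. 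To keep your decoder you would need either to prove the cycle lemma for the mod-$p$ schedule (unlikely, since a $2\Psi$-cycle sum can easily vanish modulo $p$) or to adopt the paper's exponential, or generic random, delays.
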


	\begin{proof}
		定—{\mathrel{-\mkern-4mu-}}
		Use schedule
		\[S_{tj(B)}≔\cas{
			2^{t+j(B)T}	&	if $t∈B$,	\\
			∞			&	if $t∉B$,	
		}\]
		where $j$ is a bijection $j：ℱ→[N]$.

		Let $B₁,\dotsc,B_D$ be the blocks that we think are infected.  Let
		$Z₁,\dotsc,Z_D$ be the blocks that are actually infected.  For now,
		assume that they are all distinct, as otherwise the theorem statement
		will be similar (perhaps easier) to prove.  Let $I_{uv}$ be the subset
		of tests that we think are dominated by $B_u$ but actually are dominated
		by $Z_v$, where dominance means $c_t=S_{tj(B)}+x_{j(B)}$.  By that $ℱ$
		is $(D-1)$-doubly-disjunct,
		\[\Bigl|⋃_{uv}I_{uv}\Bigr|=\Bigl|⋃_uB_u\Bigr|≥2D.\]

		Next, define a bipartite graph $G$ with left part $[D]$ and right part
		$[D]$: for any $(u,v)∈[D]×[D]$, connect $u$ to the left to $v$ to the
		right $|I_{uv}|$ times.  Now $G$ has $2D$ vertices and $≥2D$ edges,
		hence it contains a cycle (possibly a $2$-cycle).  Let this $2Ψ$-cycle
		be
		\[\begin{tikzcd}[row sep=0em,column sep=4em,every arrow/.style={draw,-}]
			u₁	&	v₁	\lar\dlar	\\
			u₂	&	v₂	\lar\dlar	\\
			⋮	&	⋮	\lar\dlar	\\
			u_Ψ	&	v_Ψ	\lar\ar{uuul}
		\end{tikzcd}\]
		\tikzset{
			every picture/.style=
				{baseline=-axis_height,shorten <=.3em,,shorten >=.3em},
			nodes={inner sep=0,font=\strut},l/.style=left,r/.style=right
		}%
		where the $u$'s are to the left and the $v$'s are to the right.
		Identify $u_{Ψ+1}≔u₁$. Every edge involved, be it
		\tikz\draw(0,0)node[l]{$u_ψ$}--(2em,0)node[r]{$v_ψ$}; or
		\tikz\draw(0,-.5ex)node[l]{$u_{ψ+1}$}--(2em,.5ex)node[r]{$v_ψ$};,
		corresponds to a test in $I_{u_ψv_ψ}$ or in $I_{u_{ψ+1}v_ψ}$,
		respectively.  Hence we can talk about the their Ct values, denoted by
		$c_ψ$ or $d_ψ$, respectively.

		Finally, examine the alternating sum along this cycle
		\[∑_{ψ=1}^{Ψ}c_ψ-d_ψ.\]
		In our mind, we expect that this is a sum of some of $B$'s diff-lays.
		But in reality, this is a sum of some of $Z$'s diff-lays.  Since the
		delays are distinct powers of two, there is no way a sum of delays
		equal to another sum of delays.  This shows that we can always reject
		incorrect guesses.
	\end{proof}

	\begin{rem}
		We understand that using exponential delays is rhetorical as the delays
		already live in the logarithmic realm.  This can be avoided.  All we
		need is that the ``cycle sums'' do not vanish.  And so it suffices to
		use random delays and bound from above the probability they vanish.
		Since only short cycles are those that are likely to vanish and since
		there are only polynomially many short cycles, we expect that a
		$(T,N,D)$-tropical code exists within polynomial delay.
	\end{rem}

	For works that discuss disjunct-ness vs other properties, see
	\cite{CH07, FFGMS21}.  For how to design block systems with disjunct and/or
	separable properties, see Kautz--Singleton \cite{KS64} and the follow-ups
	\cite{DR82, PR11, IKWO19, BPS21}.
	
	In the next two sections, we turn our interest to adaptive strategies.

\section{Two Infections and Adaptive Strategies}\label{sec:pad}

	Recall that the $D=1$ case was optimally solved in \cref{sec:diff-lay}.
	Recall also that \cref{sec:pipet,sec:bite} discussed some nonadaptive
	strategies of the $D=2$ case.  In this section, we explore adaptive
	strategies for the $D=2$ case, We will give a two-round, four-test strategy
	that finds two infected persons among arbitrarily many.  That is, we will
	construct $2$-$(4,N,2)$-tropical protocols for all $N$.  It is rather
	surprising that, compared to \cref{sec:pipet,sec:bite}, allowing a second
	round saves such a large number of tests.

\subsection{Five tests}

	Let a population have Ct values
	\[𝐱≔\bma{
		x₁	\\	⋮	\\	x_N
	}\]
	where $N$ is the number of persons being tested.
	Begin with two tests that imitate \cref{mat:hinge}.
	\[\bma{
		a	\\	b
	}≔\bma{
		1	&	2	&	⋯	&	N-1	&	N	\\
		N	&	N-1	&	⋯	&	2	&	1	
	}⊙𝐱\label{mat:5Tab}\]
	If tubes $a$ and $b$ are negative, no one is infected and we are done.
	Assume the opposite, that $a$ and $b$ are positive.  We compute the pointer
	$j≔(a-b+N+1)/2$ and double-check the $j$th person
	\[\bma{
		c
	}≔\bma{
		∞	&	＾{j-1}{⋯}	&	∞	&	0	&	∞	&	＾{N-j}{⋯}	&	∞
	}⊙𝐱,\label{mat:5Tc}\]
	here $0$ is at the $j$th column.  If $c$ is positive, we test
	\[\bma{
		d^+	\\	e^+
		}≔\bma{
			0	&	⋯	&	0	&	∞	&	1	&	⋯	&	N-j	\\
			j-1	&	⋯	&	1	&	∞	&	0	&	⋯	&	0	
		}⊙𝐱.\label{mat:5T+de}\]
	If $c$ is negative, we test
	\[\bma{
		d^-	\\	e^-
	}≔\bma{
		0	&	＾{j-1}{⋯}	&	0	&	∞	&	∞	&	＾{N-j}{⋯}	&	∞	\\
		∞	&			⋯	&	∞	&	∞	&	0	&			⋯	&	0	
		}⊙𝐱.\label{mat:5T-de}\]
	We claim the following.
	
	\begin{pro}\label{pro:2D5T}
		Schedule \crefrange{mat:5Tab}{mat:5T-de}
		form a $3$-$(5,N,2)$-tropical protocol.
	\end{pro}

	Note that schedule \cref{mat:5T+de} is just \eqref{mat:hinge} without the
	$j$th person so it can find us the second patient given that $j$ is the
	first.  It remains to show why schedule \cref{mat:5T-de} can find us two
	infected persons so quickly given that $j$ is not one.  A lemma is placed
	here, before the proof of the proposition, to help clarify what can we learn
	from the test results $a$ and $b$.

	\begin{lem}\label{lem:split}
		Given $a$ and $b$ as defined with schedule
		\cref{mat:5Tab} and $j≔(a-b+N+1)/2$.  Then
		$\min_{i≤j}x_i+i=a$ and $\min_{k≥j}x_k+(N+1-k)=b$.
	\end{lem}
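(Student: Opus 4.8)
The plan is to locate the two indices where the tropical minima defining $a$ and $b$ are attained and to show that the pivot $j$ separates them correctly. First I would let $i^\ast$ be an index achieving $a=\min_i(x_i+i)$ and let $k^\ast$ be an index achieving $b=\min_k(x_k+(N+1-k))$; these minima are exactly the entries of $S\odot\mathbf{x}$ for schedule \cref{mat:5Tab}. Since the definition of $j$ is only reached once at least one tube is positive, at least one $x_i$ is finite, so $a$ and $b$ are finite and $i^\ast,k^\ast$ are genuine integer indices.

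The engine of the proof is the single identity $2j=a-b+(N+1)$, which comes straight from the definition $j\coloneqq(a-b+N+1)/2$. To establish $\min_{i\le j}(x_i+i)=a$ it suffices to exhibit one integer index $\le j$ that attains the global minimum $a$, and I claim $i^\ast$ does. Indeed, because $k^\ast$ minimizes the second row we have $b\le x_{i^\ast}+(N+1-i^\ast)$; substituting this inequality into the identity and simplifying with $a=x_{i^\ast}+i^\ast$ yields $2j\ge 2i^\ast$, i.e.\ $i^\ast\le j$. Symmetrically, to establish $\min_{k\ge j}(x_k+(N+1-k))=b$ I would show $k^\ast\ge j$: since $i^\ast$ minimizes the first row we have $a\le x_{k^\ast}+k^\ast$, and feeding this together with $b=x_{k^\ast}+(N+1-k^\ast)$ into the same identity gives $2j\le 2k^\ast$. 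With $i^\ast\le j\le k^\ast$ in hand, both equalities are immediate: $x_{i^\ast}+i^\ast=a$ is one of the terms of the restricted minimum over $i\le j$ and cannot undercut the global minimum $a$, and the same reasoning applies to $b$ on the range $k\ge j$.

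I expect no real obstacle here, only one point worth handling carefully: $j$ need not be an integer. The argument sidesteps this entirely, since the decisive comparisons $i^\ast\le j$ and $k^\ast\ge j$ pit an integer index against a real pivot, and integrality of $i^\ast,k^\ast$ guarantees they lie in the index ranges $\{i\le j\}$ and $\{k\ge j\}$ respectively. It is also worth remarking that this argument never invokes the hypothesis $D\le2$ nor any count of the finite entries of $\mathbf{x}$; it goes through verbatim for arbitrarily many infections, which is more than \cref{lem:split} asks for and explains why a single pivot $j$ suffices to split the population into a left half carrying $a$ and a right half carrying $b$.
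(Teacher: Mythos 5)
Your proposal is correct and follows essentially the same route as the paper: fix a global minimizer $i^\ast$ of the first row, use the second-row bound $b\le x_{i^\ast}+(N+1-i^\ast)$ to deduce $2i^\ast\le a-b+N+1=2j$, and conclude by symmetry for $b$. Your closing remark that the argument never uses $D\le 2$ is also exactly what the paper exploits later when it restates the result as \cref{lem:split'} for an unknown number of infected persons.
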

	
	This lemma encodes the core idea of this section:  After $a$ and $b$, we
	learn that $j$ is likely to be infected.  That is why we query $c$ to check.
	And even if she comes out healthy, we still know that someone $i<j$ to the
	left dominates $a$ and someone $k>j$ to the right dominates $b$.  Now $i$
	needs one more test to locate; and $k$ needs another test to locate.  Those
	are what $d^-$ and $e^-$ do, respectively.

	\begin{proof}[Proof of \cref{lem:split}] By the configuration of the
		second test, $b≤x_k+(N+1-k)$, which implies $x_k≥b-(N+1-k)$ for all $k$.
		Suppose $i$ is the index that attain the minimum $a≔\min_ix_i+i$.  We
		have $0=a-x_i-i≤a-b+(N+1-i)-i=a-b+N+1-2i=2j-2i$.  This forces $2i≤2j$.
		As the index that attain the minimum must be $≤j$, we might as well
		restrict the domain of the minimum and write $a=\min_{i≤j}x_i+i$.  The
		second statement of the lemma holds by symmetry.
	\end{proof}

	\begin{proof}[Proof of \cref{pro:2D5T}]
		Let $c$ be negative and let $d^-,e^-$ be defined with schedule
		\cref{mat:5T-de}.  From the configuration of $d^-$ we know
		$\min_{i<j}x_i=d^-$.  By \cref{lem:split} we know $\min_{i≤j}x_i+i=a$.
		We can replace the domain $i≤j$ with $i<j$ because $j$ is confirmed to
		be healthy.  So far we have collected the information
		\[\bma{
			1	&	2	&	⋯	&	j-1	\\
			0	&	0	&	⋯	&	0	
		}⊙\bma{
			x₁	\\	⋮	\\	x_{j-1}
		}=\bma{
			a	\\	d^-
		}\]
		and we can infer that one infected person is $i≔a-d^-$
		with Ct value $x_i=d^-$.  By symmetry, the other infected
		person is $k≔N+1-(b-e^-)$ with Ct value $x_k=e^-$.
	\end{proof}

\subsection{Four tests}

	As it turns out, we can superimpose schedule \cref{mat:5T+de,mat:5T-de}
	in a judiciously way to optimize test $c$ away.  Let
	\[\arraycolsep2.5pt
	\bma{
		d	\\	e
	}≔\bma{
		0		&	⋯	&	0	&	∞	&	γ+1	&	⋯	&	γ+N-j	\\
		γ+j-1	&	⋯	&	γ+1	&	∞	&	0	&	⋯	&	0	
	}⊙𝐱.\label{mat:4Tde}\]
	where $γ$ is a gargantuan number, e.g., $888(N+a+b)$.

\begin{figure}
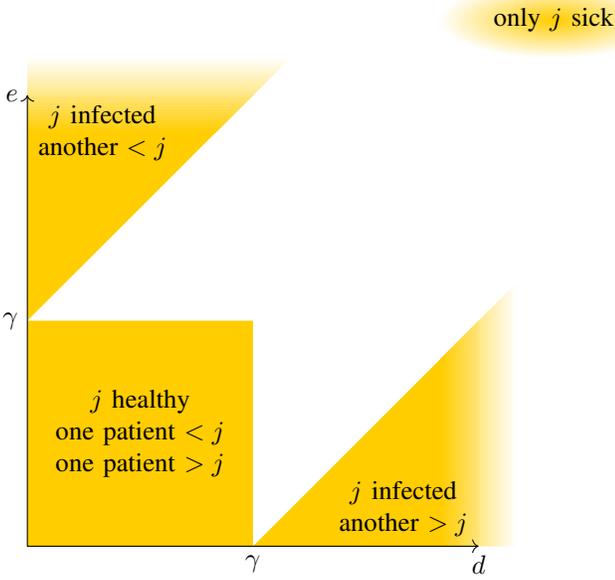

	$$\tikz{
		\begin{scope}
			\clip(0,3)|-(3.5,6.5)(0,0)rectangle(3,3)(3,0)-|(6.5,3.5);
			\fill[PMS116](0,0)rectangle(6,6);
			\shade[top color=white,bottom color=PMS116](0,5.5)rectangle(4,6.5);
			\shade[left color=PMS116,right color=white](5.5,0)rectangle(6.5,4);
		\end{scope}
		\fill[PMS116,nodes={text=black,align=center}]
			(1,5.5)node{$j$ infected\\another $<j$}
			(0,3)node[left]{$γ$}
			(1.5,1.5)node{$j$ healthy\\one patient $<j$\\one patient $>j$}
			(3,0)node[below]{$γ$}
			(5,.5)node{$j$ infected\\another $>j$}
		;
		\shade[inner color=PMS116,outer color=white]
			(7,7)circle[x radius=1.5,y radius=.5]node{only $j$ sick}
		;
		\draw[<->](0,6)node[left]{$e$}|-(6,0)node[below]{$d$};
	}\hskip0pt minus1in$$
	\caption{
		The configurations space of tests $d$ and $e$ defined with schedule
		\cref{mat:4Tde}.  The cloud at the upper right corner is at $(∞,∞)$.
	}\label{fig:wings}
\end{figure}

	\begin{thm}\label{thm:pad}
		Schedule \cref{mat:5Tab,mat:4Tde}
		give a $2$-$(4,N,2)$-tropical protocol.
	\end{thm}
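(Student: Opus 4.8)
The plan is to recover, from the four readings $a,b,d,e$, both the identities and the Ct values of the (at most two) infected people, showing that the extra probe $c$ of the five-test protocol \cref{pro:2D5T} can be dispensed with. First I dispose of the empty case: if $a=b=∞$ nobody is sick. Otherwise I compute $j≔(a-b+N+1)/2$ and $γ≔888(N+a+b)$ from the first-round readings and run the second round. The governing fact is \cref{lem:split}: $a=\min_{i≤j}x_i+i$ and $b=\min_{k≥j}x_k+(N+1-k)$, so some infected index is $≤j$ and some is $≥j$; with at most two infections the infected indices must straddle $j$, leaving exactly the four possibilities of \cref{fig:wings}: (0) person $j$ alone is sick; (a) person $j$ is healthy with one patient on each side; (b) person $j$ is sick together with a patient $k>j$; and (c) person $j$ is sick together with a patient $i<j$.

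Next I read off the second round. Schedule \cref{mat:4Tde} gives
\[
	d=\min\Bigl(\min_{i<j}x_i,\ \min_{k>j}(x_k+γ+(k-j))\Bigr),\qquad
	e=\min\Bigl(\min_{i<j}(x_i+γ+(j-i)),\ \min_{k>j}x_k\Bigr),
\]
with person $j$ silenced by the $∞$-column. A patient left of $j$ feeds its honest Ct value into $d$ and a $γ$-inflated value into $e$; a patient right of $j$ does the reverse; a sick person $j$ feeds $∞$ into both. In the two-sided configuration both minima are contested, but since $γ>a+b$ while the honest values are $x_i=a-i<a$ and $x_k=b-(N+1-k)<b$, the delay-$0$ term wins each minimum, giving $d=x_i$ and $e=x_k$. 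I then separate the cases by the difference $d-e$: case (0) is the unique one with $d=e=∞$; case (a) has $d-e=x_i-x_k$; case (b) has $d-e=γ+(k-j)$; and case (c) has $e-d=γ+(j-i)$. The recoveries are $x_j=a-j$ in case (0); $x_i=d$, $i=a-d$, $x_k=e$, $k=N+1-(b-e)$ in case (a), exactly the negative-$c$ formulas of \cref{pro:2D5T}; $x_k=e$, $k=j+(d-e-γ)$, $x_j=a-j$ in case (b); and the mirror image in case (c).

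The step I expect to be the crux is proving these four regimes are genuinely disjoint, the subtlety being that Ct values are unbounded, so one cannot threshold $d$ and $e$ individually as the schematic \cref{fig:wings} might suggest --- a single faintly infected patient can push both coordinates far beyond $γ$. Thresholding the difference repairs this: in case (a) the domination inequalities $a=x_i+i≤x_k+k$ and $b=x_k+(N+1-k)≤x_i+(N+1-i)$ force $|x_i-x_k|≤k-i≤N-1$, whereas cases (b) and (c) have $|d-e|≥γ+1$; since $γ≥888N$ these ranges are disjoint, so the size and sign of $d-e$ name the case. Two edge cases remain: when $j∉ℤ$ no column is silenced, but then no infected index can equal $j$, forcing case (a) and the same formulas; and coincidences among the indices collapse to cases already handled. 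With the regimes separated and each recovery inverted, distinct ground truths yield distinct $(a,b,d,e)$, so schedules \cref{mat:5Tab,mat:4Tde} form a $2$-$(4,N,2)$-tropical protocol.
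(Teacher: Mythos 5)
Your proof is correct and follows essentially the same route as the paper's: the same four-way case split on whether $j$ is infected and where the second patient lies, the same identification of the cases via the position of $(d,e)$ relative to $γ$ (your difference test $|d-e|\gtrless γ$ is exactly the paper's conditions $e>γ+d$ and $d>γ+e$), and the same recovery formulas. The extra care you take --- justifying why the regimes cannot collide even for very unequal Ct values, and handling the half-integer $j$ --- fills in details the paper leaves implicit.
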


	\begin{proof}
		It suffices to show that we can distinguish the following four cases:
		\begin{itemize}
			\item	$j$ is indeed infected; everyone else is healthy.
			\item	$j$ is indeed infected; the other patient is $<j$.
			\item	$j$ is indeed infected; the other patient is $>j$.
			\item	$j$ is healthy; one patient is $<j$ and the other $>j$.
		\end{itemize}
		They one-to-one correspond to the following regions of 
		the configuration space of the test results (cf.\ \cref{fig:wings}):
		\begin{itemize}
			\item	$d$ and $e$ are negative.
			\item	$d$ and $e$ are positive and $e>γ+d≥γ$.
			\item	$d$ and $e$ are positive and $d>γ+e≥γ$.
			\item	$d$ and $e$ are positive and $<γ$.
		\end{itemize}
		The four regions are mutually disjoint and exhaustive so we can and only
		have to distinguish them.  Afterward, the following are how we compute
		who are the patients and their Ct values for each of the four cases:
		\begin{itemize}
			\item	$x_j=(a+b-N-1)/2$.
			\item	$(x_{j-(e-γ-d)},x_j)=(d,(a+b-N-1)/2)$.
			\item	$(x_j,x_{j+(d-γ-e)})=((a+b-N-1)/2,e)$.
			\item	$(x_{a-d},x_{N+1-(b-e)})=(d,e)$.
		\end{itemize}
		It is straightforward to verify these deviations.
	\end{proof}

	Next, we proceed to how to identify more infected people.

\section{More Infections and Adaptive Strategies}\label{sec:nim}

	Our next goal is to give a construction that finds all $D$ infected persons
	in $3D+1$ tests regardless of how large $D$ and $N$ are.  A pilot
	construction using $7D+1$ tests is specified in the next subsection.

\subsection{A search-and-verify protocol}

	Suppose that there are $D$ persons infected among a population of $N$.
	Suppose also that we lack the knowledge of $D$.

	To begin, query
	\[\bma{
		a
	}≔\bma{
		1	&	2	&	⋯	&	N
	}⊙𝐱.\label{mat:7DTa}\]
	If $a$ is negative, we conclude that everyone is healthy.  We had used
	$1≤7·0+1$ tests and that is within the budget.  This is the \emph{leaf}
	of our recursive algorithm whose definition continues below.

	If $a$ is positive, query
	\[\bma{
		b
	}≔\bma{
		N	&	N-1	&	⋯	&	1
	}⊙𝐱.\label{mat:7DTb}\]
	Compute the index $j≔(a-b+N+1)/2$.  Then query
	\[\bma{c}≔\bma{
		∞	&	＾{j-1}{⋯}	&	∞	&	0	&	∞	&	＾{N-j}{⋯}	&	∞	\\
	}⊙𝐱\label{mat:7DTc}\]
	where the delay $0$ is at the $j$th column.  Test $c$ tells us whether $j$
	is really sick or not.  If $c$ ends up positive, $j$ is infected with Ct
	value $x_j=c$.  We then remove her from the population.  Now that there are
	$D-1$ patients left in the remaining $N-1$ persons, we start over.  Since
	the number of infected persons decreases, we expect that this recursive
	algorithm will eventually reach the leafs and return.

	If test $c$ ends up negative, we have a strengthened
	\cref{lem:split} that applies to an unknown $D$.

	\begin{lem}\label{lem:split'}
		Let $a$ and $b$ be defined with schedule \cref{mat:7DTa,mat:7DTb}
		and $j≔(a-b+N+1)/2$.  Then $\min_{i≤j}x_i+i=a$ and
		$\min_{k≥j}x_k+(N+1-k)=b$.  (Proof is identical to that of
		\cref{lem:split} but this time $𝐱$ can contain more than two patients.)
	\end{lem}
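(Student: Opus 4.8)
The plan is to reduce everything to the tropical (min-) definitions of $a$ and $b$ and then repeat the estimate from \cref{lem:split}, checking carefully that the single patient-counting step used there was never actually needed. First I would unfold the schedules \cref{mat:7DTa,mat:7DTb} into $a=\min_i(x_i+i)$ and $b=\min_k\bigl(x_k+(N+1-k)\bigr)$. Since both tests assign every person a \emph{finite} delay, $a$ is finite if and only if $b$ is finite if and only if at least one $x_i$ is finite; so under the hypothesis that someone is infected, a minimizer of $a$ among the finite coordinates exists.

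Next I would extract from the definition of $b$ the pointwise lower bound $x_k+(N+1-k)\ge b$, valid for \emph{every} index $k$, i.e. $x_k\ge b-(N+1-k)$. Let $i^{\star}$ be any index realizing $a=x_{i^{\star}}+i^{\star}$. Substituting the bound at $k=i^{\star}$ gives $a=x_{i^{\star}}+i^{\star}\ge b-(N+1-i^{\star})+i^{\star}=b-N-1+2i^{\star}$, which rearranges to $2i^{\star}\le a-b+N+1=2j$, hence $i^{\star}\le j$. Because a minimizer of $a$ already lies in the range $i\le j$, restricting the minimum to that range changes nothing, giving $\min_{i\le j}(x_i+i)=a$. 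For the companion identity I would invoke the reflection $i\mapsto N+1-i$: writing $x'_i\coloneqq x_{N+1-i}$ swaps the two schedules, so $a'=b$, $b'=a$, and $j'=N+1-j$; applying the identity just proved to the primed system and substituting back yields $\min_{k\ge j}\bigl(x_k+(N+1-k)\bigr)=b$.

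The hard part will not be the algebra, which is routine, but rather confirming that this argument is genuinely independent of $D$ — the whole point of the strengthening over \cref{lem:split}. The place I would pause is the phrase ``the index attaining the minimum'': in the two-patient setting one is tempted to think of $i^{\star}$ as one of exactly two infected indices, but the inequality $x_k\ge b-(N+1-k)$ holds for \emph{all} $k$ regardless of how many coordinates of $\mathbf{x}$ are finite, and $i^{\star}$ is simply whatever index realizes $a$. Thus nothing anywhere assumed a cap on the number of patients. A last half-sentence should dispose of ties: if several indices share the minimal value of $a$, any one of them serves as $i^{\star}$ and still satisfies $i^{\star}\le j$, so the restricted-range identity is unaffected.
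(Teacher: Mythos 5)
Your proposal is correct and follows essentially the same route as the paper: the pointwise bound $x_k\ge b-(N+1-k)$ applied at a minimizer $i^\star$ of $a$ to force $2i^\star\le 2j$, followed by symmetry for the $b$-identity, which is exactly the paper's proof of \cref{lem:split} (and the paper itself notes that \cref{lem:split'} is proved identically, the argument never having used a bound on the number of patients). Your extra remarks on ties and on finiteness are harmless bookkeeping that the paper leaves implicit.
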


	Being told that $j$ is healthy, we know that the first $j-1$ persons contain
	at least one infected person and so do the last $N-j$ persons. We hereby
	split the population into two halves---$[1,j-1]$ and $[j+1,N]$---and apply
	the same algorithm to them separately. Since both halves contain less than
	$D$ infected persons, the recursive algorithm will eventually reach the
	leafs and return.  The only problem is, How many tests does it consume
	before termination?

\subsection{Test number analysis}

	For every three tests we spend on the searching
	\cref{mat:7DTa,mat:7DTb} and the confirmation
	\cref{mat:7DTc}, either of the following happens.
	\begin{itemize}
		\item	$a$ and $b$ indicate that $j$ is suspicious
				and $c$ confirms that she is indeed infected.
		\item	$a$ and $b$ indicate that $j$ is suspicious but
				$c$ shows that $j$ is healthy yet we can split the population
				into two halves, each containing fewer infected persons.
	\end{itemize}
	There are $D$ infected persons so the number of tests spent on the first
	case, searching and confirming, is exactly $3D$.  There are $D-1$ “gaps” we
	can split the population at so the number of tests spent on the second case,
	splitting, is at most $3D-3$.  In total, the cost is at most $6D-3$ tests.
	On top of that, every time we confirm an infected person $j$ in some
	interval $[i,k]$, the protocol will then query
	\[\bma{
		1	&	2	&	⋯	&	k-i
	}⊙\bma{
		x_i	\\	⋮	\\	x_{j-1}	\\	x_{j+1}	\\	⋮	\\	x_k
	}\]
	and sometimes the test result says that everyone involved
	is healthy.  This brings the total to $6D-3+D=7D-3$ tests.
	The final number is $7D-3$ and it is $≤7D+1$.

	\begin{pro}
		Schedule \cref{mat:7DTa,mat:7DTb,mat:7DTc}
		constitute a $(7D+1)$-$(7D+1,N,D)$-tropical protocol.
	\end{pro}

	Much to our surprise, this protocol does not depend on how many people are
	being tested.  Moreover, this protocol tells us the number of patients as a
	part of the output---we do not have to know and tell the protocol the
	number $D$ before the protocol begins.

\subsection{Upgrade the third test}

	As it turns out, some tests in the $(7D+1)$-protocol are redundant.  The key
	is that the confirmation \cref{mat:7DTc} can be combined with the searching
	\cref{mat:7DTa} of the first $j-1$ persons in the following way
	\[\bma{
		c^♯
	}≔\bma{
		j-1	&	⋯	&	1	&	0	&	∞	&	＾{N-j}{⋯}	&	∞
	}⊙𝐱.\label{mat:3DTc}\]
	This test behaves like a verification of whether $j$ is really
	sick.  At the same time it prefetches the result of the searching
	matrix in the next level of the recursion should the verification fail.

	\begin{lem}\label{lem:monotony}
		Let $a$, $b$, and $c^♯$ be defined with searching
		\cref{mat:7DTa,mat:7DTb} and multitask \cref{mat:3DTc}.
		Let $j≔(a-b+N+1)/2$.  We have
		\begin{itemize}
			\item	$x_j+j≥a=\min_{i≤j}x_i+i$,
			\item	$x_j+(N+1-j)≥b=\min_{k≥j}x_k+(N+1-k)$,
			\item	$x_j≥c^♯≔\min_{i≤j}x_i+(j-i)$, and
			\item	$c^♯≥(a+b-N-1)/2$.
		\end{itemize}
	\end{lem}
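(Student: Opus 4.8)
The plan is to dispatch the four bulleted inequalities one at a time, since the first three are essentially bookkeeping and all of the content lives in the fourth. I would start by recording the three test outcomes as tropical minima: schedule \cref{mat:7DTa} gives $a=\min_i x_i+i$, schedule \cref{mat:7DTb} gives $b=\min_k x_k+(N+1-k)$, and the multitask \cref{mat:3DTc} gives $c^♯=\min_{i≤j}x_i+(j-i)$, its entries being $∞$ for the columns $i>j$. Having these three formulas on the table makes each bullet a short computation.

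For the first bullet, the equality $a=\min_{i≤j}x_i+i$ is exactly the first assertion of \cref{lem:split'}, which I would simply cite; the inequality $x_j+j≥a$ then comes from restricting the minimum defining $a$ to the single index $i=j$. The second bullet is the mirror image and follows by the same reasoning applied to $b$, or by the reflection $i\mapsto N+1-i$. The third bullet is cheaper still: the equality is the definition of $c^♯$ just recorded, and $x_j≥c^♯$ is obtained by evaluating that minimum at $i=j$, where the delay is $0$.

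The fourth bullet is the crux, and the main obstacle is spotting the right normalization rather than any hard estimate. I would first eliminate $N$ using $j≔(a-b+N+1)/2$: a one-line rearrangement gives $(a+b-N-1)/2=a-j$, so it suffices to prove $c^♯≥a-j$. With $N$ gone the inequality becomes a statement about the support of $c^♯$. For every $i≤j$, the equality $a=\min_{i≤j}x_i+i$ from \cref{lem:split'} yields $x_i+i≥a$, hence $x_i≥a-i$, and therefore
\[x_i+(j-i)≥(a-i)+(j-i)=a+j-2i≥a-j,\]
where the final step uses $i≤j$. Taking the minimum over $i≤j$ gives $c^♯≥a-j=(a+b-N-1)/2$, as required. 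The only point to check is that the minimum defining $c^♯$ is attained at a genuine index rather than at an $∞$ entry; but in the branch of the protocol where the lemma is invoked $a$ is finite, so by \cref{lem:split'} some $x_i+i$ with $i≤j$ is finite and the chain above applies verbatim.
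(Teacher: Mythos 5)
Your proposal is correct and follows essentially the same route as the paper: the first three bullets are dispatched by \cref{lem:split'} and the definition of $c^♯$, and the fourth by combining the lower bounds on the $x_i$ with the constraint $i≤j$. The only cosmetic difference is that you normalize via the identity $(a+b-N-1)/2=a-j$ and bound every term $x_i+(j-i)$, whereas the paper fixes the minimizing index $i$ and adds the two inequalities $a≤x_i+i$ and $b≤x_i+(N+1-i)$ directly; the two computations are algebraically equivalent.
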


	\begin{proof}
		The first two statements are by \cref{lem:split'} and the
		third statement by the definition of $c^♯$.  Only the last one
		is nontrivial so let us prove it.  Suppose $i$ is the index
		that attains the minimum: $c^♯:=\min_{i≤j}x_i+(j-i)$.  Then
		$a+b-N-1≤x_i+i+x_i+(N+1-i)-N-1=2x_i≤2x_i+(j-i)=2c^♯$.
		This finishes the proof.
	\end{proof}

	These four inequalities enjoy a dichotomous behavior.
	
	\begin{lem}\label{lem:dichotomy}
		Let $a$ and $b$ and $c^♯$ be defined with searching
		\cref{mat:7DTa,mat:7DTb} and verify--prefetching \cref{mat:3DTc}.
		Let $j≔(a-b+N+1)/2$.  Either the following four hold
		\begin{enumerate}
			\item[(i)]	$x_j+j=a$,
			\item[(ii)]	$x_j+(N+1-j)=b$,
			\item[(iii)]$x_j=(a+b-N-1)/2$,
			\item[(iv)]	$c^♯=(a+b-N-1)/2$,
		\end{enumerate}
		or the following four hold
		\begin{itemize}
			\item	$x_j+j>a$ (thus $a=\min_{i<j}x_i+i$),
			\item	$x_j+(N+1-j)>b$ (thus $b=\min_{k>j}x_k+(N+1-k)$),
			\item	$x_j>(a+b-N-1)/2$,
			\item	$c^♯>(a+b-N-1)/2$.
		\end{itemize}
	\end{lem}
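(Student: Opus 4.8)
The plan is to collapse the whole dichotomy onto the single quantity $x_j$. First I would abbreviate $m ≔ (a+b-N-1)/2$ and extract, purely from the definition $j ≔ (a-b+N+1)/2$, the two identities $a = j+m$ and $b = (N+1-j)+m$. With these in hand, statements (i), (ii), and (iii) of the first alternative each say exactly $x_j = m$, whereas the first three bullets of the second alternative each say exactly $x_j > m$; so the only genuinely separate assertions are (iv) $c^♯ = m$ and its counterpart $c^♯ > m$. Everything else will reduce to relabeling.

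Next I would feed in \cref{lem:monotony}. Via the two identities above, its first three inequalities rewrite as $x_j ≥ m$, $x_j ≥ m$, and $x_j ≥ c^♯$, while its fourth says $c^♯ ≥ m$. Chaining the last two yields the squeeze $m ≤ c^♯ ≤ x_j$. In particular $x_j ≥ m$ always holds, so exactly one of $x_j = m$ or $x_j > m$ is true, and this is precisely what furnishes the clean two-way split. In the branch $x_j = m$ the squeeze immediately forces $c^♯ = m$, so (iv) joins (i)--(iii) and the first alternative is settled with no further work.

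The remaining content is therefore confined to the branch $x_j > m$, where I must still establish the last bullet $c^♯ > m$; I expect this to be the only real obstacle, since the other three strict inequalities come for free from the identities. Recalling the definition $c^♯ = \min_{i ≤ j}(x_i + (j-i))$ from \cref{mat:3DTc}, I would argue by contradiction: suppose this minimum equals $m$ and is attained at some $i_0 ≤ j$, so that $x_{i_0} = m - (j - i_0)$. If $i_0 = j$ this reads $x_j = m$, contradicting the branch hypothesis, so $i_0 < j$; but then $x_{i_0} + i_0 = m - j + 2i_0 < m + j = a$, contradicting $a = \min_i(x_i + i)$ from \cref{mat:7DTa}. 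Hence $c^♯ > m$, completing the second alternative. The crux is exactly this last step, the one place where the definition of $c^♯$ must be played against the minimality defining $a$; the rest is bookkeeping with $a = j+m$ and $b = (N+1-j)+m$ together with the squeeze from \cref{lem:monotony}.
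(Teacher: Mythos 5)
Your proposal is correct and follows essentially the same route as the paper: both reduce the dichotomy, via \cref{lem:monotony}, to showing that the four equality cases hold simultaneously, with (i)--(iii) collapsing to $x_j=(a+b-N-1)/2$ by algebra on $j$, and with the crux being that the minimizer $i_0$ of $c^♯$ must equal $j$, obtained by playing $x_{i_0}+(j-i_0)$ against the minimality defining $a$ (the paper phrases this as a squeeze using both $a$ and $b$, you as a contradiction using only $a$ --- a cosmetic difference).
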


	\begin{proof}
		Due to \cref{lem:monotony}, it suffices to prove that (i), (ii),
		(iii), and (iv) are equivalent to each other.  (i) is equivalent
		to $x_j=a-j=a-(a-b+N+1)/2=(a+b-N-1)/2$.  (ii) is equivalent to
		$x_j=b-(N+1-j)=b-(N+1-(a-b+N+1)/2)=(a+b-N-1)/2$.  Hence (i), (ii),
		and (iii) are equivalent to each other.  If they hold, then
		due to the inequalities $x_j=(a+b-N-1)/2≤c^♯≤x_j$, (iv) holds.
		Conversely, suppose that (iv) holds.  Let $i≤j$ be the index that
		attains the minimum: $c^♯≔\min_{j≤j}x_i+(j-i)$.  Then the inequalities
		$2c^♯=a+b-N-1≤x_i+i+x_i+(N+1-i)-N-1≤2x_i≤2x_i+2(j-i)=2c^♯$ squeeze.
		Hence $2(j-i)=0$ and (i), (ii), and (iii) hold.  This finishes the proof.
	\end{proof}

	Thanks to the dichotomy related to $c^♯$, we either confirm that $j$
	is indeed infected when we see (iv) holds (plus we know her Ct value
	$x_j=(a+b-N-1)/2$) or we know we can split the population at $j$ when we
	see (iv) does not hold.  This leads to a strategy that only queries $3D+1$
	times.

\subsection{\texorpdfstring{$3D+1$}{3D + 1} adaptive tests
	diagnose \texorpdfstring{$D$}{D} infected persons}

	In what follows, we use the word \emph{position} as in the \emph{winning
	positions} in the combinatorial game theory, especially in the theory of the
	game of Nim \cite{BCG01}.  In this context, executing a tropical protocol is
	like playing a game against mother nature.  A position is a “current state”
	when we are halfway toward completely understanding everyone's Ct values.
	We \emph{win} the game by picking out all infected individuals in a limited
	amount of \emph{moves}.  A move is analogous to one single test if we care
	about the total number of tests, or to a batch of parallel tests if we care
	about the number of rounds.
	
	Denote by a tuple $(D^{(1)},D^{(2)},\dotsc,D^{(Π)})$ a position where
	\begin{itemize}
		\item	there are $Π$ piles of people;
		\item	the $π$th pile contain $N^{(π)}$ persons whose Ct values
				are denoted by $x^{(π)}₁,\dotsc,x^{(π)}_{N^{(π)}}$;
		\item	the first pile contains $D^{(1)}≥0$ infected persons;
		\item	for each $π≥2$, the $π$th pile contains
				$D^{(π)}≥1$ infected persons;  and
		\item	for each $π≥2$, we know the search results
				$a^{(π)}≔\min_jx^{(π)}_j+j$ and
				$b^{(π)}≔\min_jx^{(π)}_j+(N^{(π)}+1-j)$
				of the $π$th pile.
	\end{itemize}

	There are two moves that evolves a position into another position.

	If $Π=1$, we perform searching \cref{mat:7DTa} on the one and only pile.
	Denote the test result by $a^{(1)}$.  If $a^{(1)}$ is negative, the protocol
	terminates and reports everyone healthy.  If $a^{(1)}$ is positive, perform
	searching \cref{mat:7DTb} and denote the result by $b^{(1)}$.  Of this pile
	of people we now know the “$a$” and “$b$”; we migrate these people to the
	second pile and leave the first pile empty.  That is, we evolve the position
	$(D^{(1)})$ into the position $(0,D^{(1)})$.
	
	If $Π≥2$, we focus on the second pile. Of this pile we already know
	$a^{(2)}$ and $b^{(2)}$.  Compute $j^{(2)}≔a^{(2)}+b^{(2)}-N^{(2)}-1$.  We
	perform the verify--prefetching \cref{mat:3DTc} with $j≔j^{(2)}$ and denote
	the result by $c^{(2)}$.  By \cref{lem:monotony}, $c^{(2)}$ is equal to or
	greater than $(a^{(2)}+b^{(2)}-N^{(2)}-1)/2$.

	If $c^{(2)}$ is equal to $(a^{(2)}+b^{(2)}-N^{(2)}-1)/2$.  then $j^{(2)}$ is
	infected with Ct value $c^{(2)}$.  The remaining of the second pile, with
	$D^{(2)}-1$ patients remained to be found, is merged with the first pile.
	Now the position $(D^{(1)},\dotsc,D^{(Π)})$  is evolved into
	$(D^{(1)}+D^{(2)}-1,D^{(3)},\dotsc,D^{(Π)})$.

	If $c^{(2)}$ is greater than $(a^{(2)}+b^{(2)}-N^{(2)}-1)/2$, then we know
	\begin{align*}
		a^{(2)}	&	=\min_{i≤j^{(2)}}x^{(2)}_i+i,	\\
		b^{(2)}	&	=\min_{k>j^{(2)}}x^{(2)}_k+(N^{(2)}+1-k),	\\
		c^{(2)}	&	=\min_{i≤j^{(2)}}x^{(2)}_i+(j^{(2)}-i).
	\end{align*}
	For the first $j^{(2)}$ persons, we now know their “$a$” and “$b$”; for the
	last $N^{(2)}-j^{(2)}$ persons, we now know their “$b$”.  It suffices to
	query the latter's “$a$”:
	\[d^{(2)}≔\min_{k>j^{(2)}}x^{(2)}_k+(k-j^{(2)}).\]
	Consequently, we know the results of the searching matrices for the first
	half and the second half.  Suppose the first half contains $D^{(2,≤)}$
	infected persons and the second half contains $D^{(2,>)}$ infected persons.
	(We know nothing about $D^{(2,≤)}$ and $D^{(2,>)}$
	beyond that they are positive and sum to $D^{(2)}$.)
	Now the position is evolved from $(D^{(1)},\dotsc,D^{(Π)})$ into
	$(D^{(1)},D^{(2,≤)},D^{(2,>)},D^{(3)},\dotsc,D^{(Π)})$.

	Knowing how positions evolve, we estimate the cost.

	\begin{thm}\label{thm:nim}
		The protocol specified in this subsection is
		a $(3D+1)$-$(3D+1,N,D)$-tropical protocol.
	\end{thm}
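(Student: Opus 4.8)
The plan is to treat the statement as two separate claims: that the protocol's decoder never fails (correctness) and that it issues at most $3D+1$ tests, one per round (the count). The lemmas of this subsection already carry almost all of the correctness burden, so the real content is the accounting, which I would handle by an exact charging argument over the four kinds of moves.

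For correctness I would check each transition in turn. For a pile of index $1$, searching \cref{mat:7DTa} either certifies the pile infection-free (negative $a$) or, after \cref{mat:7DTb}, equips the migrated pile with valid ``$a$'' and ``$b$'' data, the latter justified by \cref{lem:split'}. For a pile of index $\geq 2$, \cref{lem:monotony,lem:dichotomy} supply the dichotomy governing $c^{(2)}$: the equality branch pins $j^{(2)}$ down as infected with Ct value $(a^{(2)}+b^{(2)}-N^{(2)}-1)/2$, while the strict branch certifies that the children $[1,j^{(2)}-1]$ and $[j^{(2)}+1,N^{(2)}]$ each still contain an infection and inherit complete search data ($a^{(2)},c^{(2)}$ on the left and $b^{(2)},d^{(2)}$ on the right). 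A structural invariant I would carry throughout is that every pile of index $\geq 2$ holds at least one infected person; this makes each split legitimate and, dually, forces $D^{(2)}\geq 2$ at any split, so that no empty child is ever produced.

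The heart of the proof is an exact charge on the four move types: a \emph{launch} (an index-$1$ pile containing an infection, two tests, $\Pi\mapsto\Pi+1$), a \emph{confirmation} (the equality branch, one test, $\Pi\mapsto\Pi-1$, removing one infection), a \emph{split} (the strict branch, two tests, $\Pi\mapsto\Pi+1$), and the lone terminal \emph{leaf} (one test). Writing $L,A,B$ for the numbers of launches, confirmations, and splits, I would first note $A=D$, since each infected person is certified by exactly one confirmation and by no other move. To relate $L$ and $B$ I would track the index-$\geq 2$ piles by conservation: launches and splits create $L+2B$ of them while confirmations and splits consume $D+B$, and since the run starts and ends with no pile of index $\geq 2$, equating these yields $L+B=D$. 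The total test count is then
\[
2L + A + 2B + 1 = 2(L+B) + D + 1 = 3D+1,
\]
and, the tests being performed one after another, the round count equals the test count.

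The main obstacle is not any single calculation but making the telescoping watertight. I would need to confirm that the leaf occurs exactly once---equivalently, that the empty position is reached only at termination, which holds because $\Pi=1$ together with an infection-free pile forces $\sum_\pi D^{(\pi)}=0$---and that the process genuinely terminates, so that $L,A,B$ are finite and the conservation identity is meaningful. For termination I would observe that each confirmation strictly decreases $\sum_\pi D^{(\pi)}$ while each split strictly shrinks a pile, precluding any infinite descent of moves. Once these points are settled the charge above closes the argument, and the bound is in fact an equality.
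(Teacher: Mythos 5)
Your proof is correct, and its correctness half (the dichotomy from \cref{lem:monotony,lem:dichotomy}, the invariant that every pile of index $\geq 2$ contains at least one infection, and the observation that a split can only occur when $D^{(2)}\geq 2$) is exactly what the paper relies on. The accounting half is organized differently, though. The paper runs an induction on a potential (``budget'') function $\mathcal T(D^{(1)},\dotsc,D^{(\Pi)})=3-2\Pi+3\sum_{\pi}D^{(\pi)}$, checking move by move that the cost of each transition is at most the drop in $\mathcal T$ (launch: cost $2$, drop $2$; confirmation: cost $1$, drop $1$; split: cost $2$, drop $2$; leaf: cost $1=\mathcal T(0)$) and reading off $\mathcal T(D)=3D+1$ at the root. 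You instead count the moves globally: $A=D$ because each infection is confirmed exactly once, and conservation of the index-$\geq2$ piles gives $L+2B=D+B$, hence $L+B=D$ and a total of $2L+A+2B+1=3D+1$. These are the same amortized analysis in dual form---your conservation identity is precisely what the paper's potential function linearizes---but your version yields the slightly stronger conclusion that the count is an exact equality rather than an upper bound, at the price of having to argue termination and the uniqueness of the leaf separately. You do address both; note that the paper gets them for free, since $\mathcal T$ strictly decreases with every move and is bounded below by $\Pi\geq 1$, whereas your termination argument (confirmations decrease $\sum_\pi D^{(\pi)}$, splits shrink piles) should also mention that launches occur only at $\Pi=1$ and are each followed by at least one confirmation before the next launch, so they too cannot recur infinitely.
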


	\begin{proof}
		Declare a budget function
		\[𝒯(D^{(1)},D^{(2)},\dotsc,D^{(Π)})≔3-2Π+3∑_{π=1}^ΠD^{(π)}.\]
		It is clear that when $Π=1$, the $𝒯$-formula of a singleton collapses
		to $𝒯(D^{(1)})=3-2+3D^{(1)}=3D^{(1)}+1$.  Hence the theorem will be
		proved if we can show that each position $(D^{(1)},\dotsc,D^{(Π)})$
		requires at most $𝒯(†that position†)$ tests.

		We employ induction on the $𝒯$-values, the budgets, of the
		positions.  We will show that whenever one position is evolved
		into another position, the number of tests spent is at most
		$𝒯(†former position†)-𝒯(†latter position†)$.  That way, the budget
		will be ever decreasing so it always satisfies the induction hypothesis
		afterwards.  But never would the budget go below zero before we finish
		picking out all infected individuals.

		Base case---$Π=1$ and $D^{(1)}=0$:  We spend one tests on $a^{(1)}$, the
		“$a$” of the first and only pile.  After getting a negative result we
		conclude that everyone is healthy.  Since
		\[𝒯(0)=1,\]
		the cost meets the budget for the base case.

		Now suppose that all positions whose $𝒯$-value fall below $T$ can be
		cleared before the budget runs out.  Suppose that
		$(D^{(1)},\dotsc,D^{(Π)})$ is a position with $𝒯$-value $T$. We
		want to show that it can be cleared before the budget runs out.

		Induction step, case one---$Π=1$ and $D^{(1)}≥1$:  We spend two tests on
		$a^{(1)}$ and $b^{(1)}$, the “$a$” and “$b$” of the first pile.  After
		that we relabel the first pile as the second pile, evolving $(D^{(1)})$
		into $(0,D^{(1)})$.  Since
		\[𝒯(D^{(1)})=2+𝒯(0,D^{(1)}),\]
		the cost meets the budget for induction step, case one.

		Induction step, case two---$Π≥2$ and
		$c^{(2)}=(a^{(2)}+b^{(2)}-N^{(2)}-1)/2$: We spend one test on $c^{(2)}$
		and confirmed that $j^{(2)}$ is infected with Ct value $c^{(2)}$.
		By doing so, we evolve $(D^{(1)},\dotsc,D^{(Π)})$ into
		$(D^{(1)}+D^{(2)}-1,D^{(3)},\dotsc,D^{(Π)})$.  Since
		\begin{align*}
			\qquad&\kern-2em
			𝒯(D^{(1)},\dotsc,D^{(Π)})	\\
			&	=1+𝒯(D^{(1)}+D^{(2)}-1,D^{(3)},\dotsc,D^{(Π)}),
		\end{align*}
		the cost meets the budget for induction step, case two.

		Induction step, case three---$Π≥2$ and
		$c^{(2)}>(a^{(2)}+b^{(2)}-N^{(2)}-1)/2$: We spend two tests on
		$c^{(2)}$ and $d^{(2)}$ to complete our knowledge of the “$a$” and “$b$”
		of the first $j^{(2)}$ persons and the last $N^{(2)}-j^{(2)}$ persons.
		We evolve $(D^{(1)},\dotsc,D^{(Π)})$ into
		$(D^{(1)},D^{(2,≤j)},D^{(2,>j)},D^{(3)},\dotsc,D^{(Π)})$ by doing so.
		Since
		\begin{align*}
			\qquad&\kern-2em
			𝒯(D^{(1)},\dotsc,D^{(Π)})	\\
			&	=2+𝒯(D^{(1)},D^{(2,≤j)},D^{(2,>j)},D^{(3)},\dotsc,D^{(Π)}),
		\end{align*}
		the cost meets the budget for induction step, case three.  This is the
		last piece of the induction and hence completes the proof.
	\end{proof}

	A generalization of \cref{thm:nim} to a delay-limited situation
	is the following.  We defer the proof until \cref{pf:deep}.
	
	\begin{thm}[Deep searching]\label{thm:deep}
		Let $T≔4D⌈㏒_ℓN⌉+1$.  There is a
		$T$-$(T,N,D)$-tropical protocol within maximum delay $ℓ$.
	\end{thm}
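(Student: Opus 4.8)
The plan is to lift the $(3D+1)$-protocol of \cref{thm:nim} to the bounded-delay regime by replacing every \emph{flat} search over a pile---which in \cref{thm:nim} assigns the linear delays $1,2,\dots,m$ and therefore needs a delay as large as the pile size $m$---with a \emph{hierarchical}, base-$\ell$ search of depth $\lceil\log_\ell m\rceil\leqslant\lceil\log_\ell N\rceil$. Each level of this hierarchical search will run exactly the search-and-verify machinery of \cref{thm:nim}, but on only $\ell$ aggregated items, so that no delay exceeds $\ell$.

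First I would set up the reduction that powers a single level. Index the persons of a pile by $1,\dots,m$, partition them into $\ell$ consecutive child-blocks $I_0,\dots,I_{\ell-1}$, and let $\xi_r:=\min_{i\in I_r}x_i$ be the effective Ct value of block $r$ (with $\xi_r=\infty$ when $I_r$ harbours no infected person). The key observation is that any test whose delay on person $i$ depends only on the block index $d(i)$, say $\delta_i=f(d(i))$ with $f\colon\{0,\dots,\ell-1\}\to\{0,\dots,\ell-1\}$, returns
\[\min_i\bigl(x_i+f(d(i))\bigr)=\min_r\bigl(\xi_r+f(r)\bigr),\]
which is precisely a flat tropical test on the $\ell$-item instance $(\xi_0,\dots,\xi_{\ell-1})$. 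Consequently \cref{lem:split',lem:monotony,lem:dichotomy} apply verbatim to this aggregated instance. Running the searching and verify--prefetching tests on it---with delays drawn from $\{0,\dots,\ell-1\}$---therefore either (\emph{split}) produces a pivot block $D^\ast$ together with a guarantee that the blocks $\leqslant D^\ast$ and the blocks $>D^\ast$ each contain an infected person, letting me cut the pile into two strictly shorter intervals, or (\emph{confirm}) certifies that a single child $I_{D^\ast}$ holds the dominating infected person, whereupon I descend into $I_{D^\ast}$ and return its siblings to the as-yet-unsearched pile.

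Next I would run the Nim game of \cref{thm:nim} over these intervals, the only change being that a \emph{confirm} no longer extracts an atomic individual but merely descends one tree level while deferring the sibling blocks. Crucially, no infected person is ever discarded: a masked infected person sitting in a deferred sibling simply rejoins the general population and is rediscovered later, exactly as the ``rest of pile~2'' is merged back in \cref{thm:nim}, so a deferred-but-empty block costs only a single terminal test. A length-$1$ interval is the leaf of the recursion and directly reveals an infected person and her Ct value. Correctness of each elementary step is inherited, through the displayed identity above, from the flat dichotomy of \cref{lem:dichotomy}, so this part is bookkeeping rather than new ideas.

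The hardest part will be the test-count analysis. I would augment the Nim potential $\mathcal T$ of \cref{thm:nim} with a per-pile term recording the remaining descent depth---at most $\lceil\log_\ell N\rceil$ levels separate any pile from its leaves---and show that each of the at most four tests spent at a level (two coarse searches to locate the pivot block, one verification, and one supplementary search to finish a split) decreases this augmented potential by at least one, mirroring the three induction cases of \cref{thm:nim}. The subtlety that must be handled carefully is that a confirm-step performs real work, namely a factor-$\ell$ descent, without immediately removing an infected person, so the potential has to charge simultaneously against the number of still-hidden infections and against the geometric search depth, and it must absorb the cost of re-searching those infected persons that were temporarily deferred as masked siblings. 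Once the potential is shown to drop by at least the number of tests spent at every move, summing over the whole play yields a total of at most $4D\lceil\log_\ell N\rceil+1$ tests. Since every delay used lies in $\{0,1,\dots,\ell-1\}\cup\{\infty\}$, the resulting protocol is within maximum delay $\ell$, as required.
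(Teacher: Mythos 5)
Your proposal is correct and follows essentially the same route as the paper: partition each pile into $\ell$ blocks, observe that block-constant delays turn the tests into flat tropical tests on the block minima $\xi_r=\min_{i\in I_r}x_i$ so that the machinery of \cref{thm:nim} applies to the aggregated $\ell$-item instance, and recurse to depth $\lceil\log_\ell N\rceil$ at a cost of at most $4D$ tests per level. Your write-up is in fact more explicit than the paper's (which simply invokes \cref{thm:nim} level by level and sums $3D^{\bullet}+1\leqslant 4D$ over the levels), but the decomposition, the key lemma, and the accounting are the same.
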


\section{Discussion}

\subsection{Open Problems}

	For the $D=2$ case with limited delay (\cref{thm:bite}), it is
	straightforward to handle one patient and two unequally infected patients.
	For two equally infected patients, we use $Q^{(k)}$ to obtain extra
	information.  Can we simplify $Q^{(k)}$?  The goal is to find a weaker
	notion of BITE that is still an inductive invariant.

	For general $D$ in nonadaptive case (\cref{thm:cycle}), we used exponential
	$ℓ$ but remarked that a polynomial $ℓ$ should be possible.  Is it?  Also,
	do there exist structural constructions that, more or less, generalize
	\cref{thm:bite} to general $D$?

	For the adaptive strategy with limited delay (\cref{thm:deep}), the main
	term of $T$ is $4D㏒₂N$.  From an information-theoretical perspective, the
	main term should be $D㏒₂N$.  Is this achievable?  Are there tradeoffs
	between $T$ the number of total tests and $R$ the number of rounds?

	Recall that PCR only runs for $40$ cycles in real life. Thus, if there is a
	Ct $35$ specimen delayed by $15$ cycles, the expected Ct value is $50$ but
	we only see $40$.  This is a false negative result. Can we increase the
	maximum delay to the extent where false negative starts showing up but we
	still benefit from it?  Note that in this case, the decoder must perform
	more a complicated pattern matching, one that treats $40$ as a wildcard that
	can possibly be $40$, $45$, $50$, or infinity.

\subsection{Concern of design}

	Hong et al.\ \cite{HDLCD21} suggested using factorizations of a complete
	hypergraph to generate balanced pooling designs.  Here, \emph{balanced}
	means that every person appears in the same number of tubes and every tubes
	receives (almost) the same number of persons.\footnote{ In the block design
	context, every block having the same number of vertices is called
	\emph{uniform} or \emph{proper}; every vertex appearing in the same number
	of blocks is called \emph{balanced}.  In the hypergraph context, the former
	is called \emph{uniform}; the latter is called \emph{regular}.  They want
	both properties.} They argued that those conditions make pooling more
	consistent.  We, while agreeing with their argument, want to add that there
	are other ways to achieve the same goal.  Kirkman systems, Steiner systems,
	BIBDs, and constant weight codes are candidates that sound equally good.  To
	be more specific, we believe that one should (also) optimize for the
	probability that a block $B$ is covered by the union of a small number of
	other blocks $Z₁,\dotsc,Z_D$ (cf.\ superimposed codes, $D$-cover-free
	families).

\subsection{Origin of masking}

	It is worth mentioning that Hwang and Xu \cite{HX87} once published a
	variant of group testing where there are two infected people, one heavily
	infected and the other lightly infected.  The testing result is quantified
	by three possibilities:  (i) The tested pool contains the heavily infected
	person.  (ii) The tested pool contains the lightly infected person, but not
	the heavily one.  (iii) The tested pool contains neither infected person.
	Notice that (i), (ii), and (iii) can be thought as Ct values $1$, $2$, and
	$3$, respectively, while the infected persons have Ct values $1$ and $2$.
	Hwang and Xu's problem formulation already suggested that the presence of
	the heavily infected complicates the identification of the lightly infected.
	The group testing scheme proposed in \cref{sec:pipet,sec:bite} and
	\cref{pf:bite} can be interpreted as a solution to a generalization of their
	setup.

\section{Acknowledgment}

	The authors thank Gerry Myerson\footnote{
		\url{https://math.stackexchange.com/a/4203073}},
	Venkatesan Guruswami, and
	Ching-Hung Hsieh\footnote{
		Kuang Tien General Hospital, Taiwan.}
	for pointing out references.
	The authors thank Chu-Lan Kao\footnote{
		\url{https://orcid.org/0000-0001-8091-9946}}
	for recommending importance sampling and sharing statistics insights.
	The authors thank Chih-Yang Hsia\footnote{
		\url{https://scholar.google.com/citations?user=Qrnp5CMAAAAJ}}
	for programming aids.
	The authors thank Facebook user kiwi.qin
	for discussions about the $D=2$ case.

\appendices
\crefalias{section}{appendix}

\section{Projective Space (Proof of Theorem~\ref{thm:shell})}\label{pf:shell}

	In this appendix, we factor in the restriction that one often
	cannot wait for arbitrarily long delays but can afford more tests.
	This appendix shows how to trade $T$ for $ℓ$ and proves \cref{thm:shell}.
	
\subsection{Three tests for higher volume}

\begin{figure}
	\def\tikz@install@auto@anchor@tip{\def\tikz@do@auto@anchor{
		\pgfmathsetmacro\tikz@anchor{atan2(-\pgf@y,-\pgf@x)}}}
	\tdplotsetmaincoords{90-23.97565}{120.973882} 
	\def\camerax{3}\def\cameray{3}\def\cameraz{3}\def\cameras{6.666}\def\camerad{10}
	\def\savepoint#1#2#3#4{
		\pgfmathsetmacro\xx{#1}
		\pgfmathsetmacro\yy{#2}
		\pgfmathsetmacro\zz{#3}
		\pgfpointxyz{\xx}{\yy}{\zz}
		\xdef#4{\noexpand\pgfpoint{\the\pgf@x}{\the\pgf@y}}
	}
	\def\savepoints#1#2#3#4#5{
		\savepoint{#1}{#2}{#3}{#4}
		\pgfpointxyz{\xx+\t}{\yy+\t}{\zz+\t}
		\xdef#5{\noexpand\pgfpoint{\the\pgf@x}{\the\pgf@y}}
	}
	\def\r{.5}\def\s{.1}
	\def\pencil(#1,#2,#3)+#4 #5;{
		\pgfmathsetmacro\x{#1}
		\pgfmathsetmacro\y{#2}
		\pgfmathsetmacro\z{#3}
		\def\t{#4}									
				\savepoints{\x}{\y-\r}{\z+\r}\Pb\Qb\savepoints{\x-\r}{\y}{\z+\r}\Pa\Qa
		\savepoints{\x+\r}{\y-\r}{\z}\Pc\Qc				\savepoints{\x-\r}{\y+\r}{\z}\Pf\Qf
				\savepoints{\x+\r}{\y}{\z-\r}\Pd\Qd\savepoints{\x}{\y+\r}{\z-\r}\Pe\Qe
		\begin{pgfonlayer}{background}
			\pgfpathmoveto\Pa\pgfpathlineto\Pb\pgfpathlineto\Pc
			\pgfpathlineto\Pd\pgfpathlineto\Pe\pgfpathlineto\Pf\pgfclosepath
			\pgfusepath{fill,stroke}
		\end{pgfonlayer}		
				\pgfpathmoveto\Pb\pgfpathlineto\Qb\pgfpathmoveto\Pa\pgfpathlineto\Qa
		\pgfpathmoveto\Pc\pgfpathlineto\Qc				\pgfpathmoveto\Pf\pgfpathlineto\Qf
				\pgfpathmoveto\Pd\pgfpathlineto\Qd\pgfpathmoveto\Pe\pgfpathlineto\Qe
		\pgfusepath{stroke}
		\pgfpathmoveto\Qa\pgfpathlineto\Qb\pgfpathlineto\Qc
		\pgfpathlineto\Qd\pgfpathlineto\Qe\pgfpathlineto\Qf\pgfclosepath
		\pgfusepath{fill,stroke}
		\pgfpointdiff{\pgfpointxyz{\x+\t+.35355}{\y+\t-.35355}{\z+\t}}
					{\pgfpointxyz{\x+\t-.35355}{\y+\t+.35355}{\z+\t}}
		\xdef\ux{\the\pgf@x}\xdef\uy{\the\pgf@y}
		\pgfpointdiff{\pgfpointxyz{\x+\t+.20412}{\y+\t+.20412}{\z+\t-.40825}}
					{\pgfpointxyz{\x+\t-.20412}{\y+\t-.20412}{\z+\t+.40825}}
		\xdef\vx{\the\pgf@x}\xdef\vy{\the\pgf@y}
		\pgflowlevelobj{
			\pgftransformcm{\ux/28.4527}{\uy/28.4527}{\vx/28.4527}{\vy/28.4527}
							{\pgfpointxyz{\x+\t}{\y+\t}{\z+\t}}
		}{
			\draw(0,0)node[black]{#5};
		}
	}
	$$\tikz[tdplot_main_coords]{
		\begin{pgfonlayer}{background}
			\draw[line width=.6,nodes={pos=1,auto=tip}]
				(0,0,0)edge[->]node{$a$}(6,0,0)
				(0,0,0)edge[->]node{$b$}(0,6,0)
				(0,0,0)edge[->]node{$c$}(0,0,6)
			;
			\pgfsetfillcolor{PMS116}\pgfsetstrokecolor{PMS3015}
		\end{pgfonlayer}
		\pgfsetfillcolor{PMS116}\pgfsetstrokecolor{PMS3015}
							\pencil(0,0,3)+3 {V};
			\pencil(3,0,3)+3 {W};			\pencil(0,3,3)+3 {U};
							\pencil(0,0,0)+6 {T};
			\pencil(3,0,0)+3 {X};			\pencil(0,3,0)+3 {Z};
							\pencil(3,3,0)+3 {Y};
		\begin{pgfonlayer}{background}
			\draw[line width=.6,black]
				(0,0,0)edge(1,0,0)edge(0,1,0)edge(0,0,1)
				(0,0,3)--(0,0,4)(0,3,0)--(0,4,0)(3,0,0)--(4,0,0)
			;
		\end{pgfonlayer}
	}$$
	\caption{
		Thr generalization of \cref{fig:triton} to three tests.
	}\label{fig:pencil}
\end{figure}

	The following is a three-test seven-person example.  Name the persons T, U,
	V, W, X, Y, Z and let $t,u,v,w,x,y,z$ be their Ct values.  Encoding:
	\[\bma{
		a	\\	b	\\	c
	}≔\bma{
		0	&	0	&	0	&	7	&	7	&	7	&	0	\\
		0	&	7	&	0	&	0	&	0	&	7	&	7	\\
		0	&	7	&	7	&	7	&	0	&	0	&	0	
	}⊙\bma{
		t	\\	u	\\	⋮	\\	z
	}\]
	Decoding:  \Cref{fig:pencil} generalizes \cref{fig:triton} and illustrate
	how to map the test results back to who is infected and how infected they
	are.

	For a finite $ℓ$, the columns of a schedule matrix are vectors in
	$\{0,1,\dotsc,ℓ,∞\}^3$ with at least one $0$.  The number of such lattice
	points is $(ℓ+2)³-(ℓ+1)³=3ℓ²+9ℓ+7$.  For $ℓ=0$, $1$, and $2$, the number of
	lattice points are $7$, $19$, and $37$, which are $2$x, $6$x, and $12$x
	increases in throughput, respectively.

\subsection{Asymptote of one infection}

	Fix an upper bound on delay $ℓ≥0$.  How fast can
	$N$ grow if the number of tests $T$ approaches infinity?
	Clearly we want to select, for each person, a delay column
	\[𝛅=\bma{
		δ₁		\\	⋮	\\	δ_T
	}∈\{0,1,\dotsc,ℓ,∞\}^{T×1}\]
	such that the straight lines
	\[\left\{𝛅⊙\bma{x}=\left[\sma{
		δ₁+x	\\	⋮	\\	δ_T+x
	}\right]\;\middle|\;x∈ℝ\right\}⊆(ℝ∪\{∞\})^{T×1}\]
	are disjoint (far away) from each other.
	Note that every line contains one and only one delay column
	\[˜{𝛅}≔𝛅⊙\bma{
		-\min(𝛅)
	}=\bma{
		δ₁-\min(𝛅)	\\	⋮	\\	δ_T-\min(𝛅)
	}\]
	that has at least one zero entry and no negative entries.
	This means that every line passes one and only one point in the “shell”
	\[\Sha≔\{0,1,\dotsc,ℓ,∞\}^{T×1}＼\{1,\dotsc,ℓ,∞\}^{T×1}.\]
	$\Sha$ has cardinality $(ℓ+2)^T-(ℓ+1)^T≈Tℓ^{T-1}$.
	We are ready to prove \cref{thm:shell}.

	\begin{proof}[Proof of \cref{thm:shell}]
		To see $N≤(ℓ+2)^T-(ℓ+1)^{T×1}$, observe that every column vector in
		$\{0,1,\dotsc,ℓ,∞\}^T$ is congruent to a column vector in $\Sha$ modulo
		\[\bma{
			1	\\	⋮	\\	1
		}.\]
		To obtain a tropical code that meets the bound $N=(ℓ+2)^T-(ℓ+1)^T$,
		use $\Sha$ per se or use a collection of column vectors
		that congruent to different column vectors in $\Sha$.
	\end{proof}

\section{Larger BITE (Proof of Theorem \ref{thm:bite})}\label{pf:bite}

	Let us first define the schedule matrices.  The proof follows.

	Fix a $(t,n,2)$-tropical code $S$.  Define $S^{(k)}$ to be this $kt×n^k$
	matrix
	\[S^{(k)}=\bma{
		𝟏_{1×n^{k-1}}⊗S	\\
		𝟏_{1×n^{k-2}}⊗S⊗𝟏_{1×n}	\\
		⋮	\\
		𝟏_{1×n}⊗S⊗𝟏_{1×n^{k-2}}	\\
		S⊗𝟏_{1×n^{k-1}}
	}\]
	for all $k≥2$.  Also define $Q^{(k)}$ to be this $(2k-3)×n^k$ matrix
	\[\arraycolsep3pt
	ω\left(\bma{
		1	&	⋯	&	β₁^{k-1}	\\
		⋮	&	⋰3	&	⋮	\\
		1	&	⋯	&	β_{2k-3}^{k-1}
	}\bma{	
		1	&	⋯	&	α₁^{k-1}	\\
		⋮	&	⋰3	&	⋮	\\
		1	&	⋯	&	α_k^{k-1}
	}^{-1}\bma{
		𝟏_{1×n^{k-1}}⊗M	\\
		⋮	\\
		M⊗𝟏_{1×n^{k-1}}
	}\right)\]
	where $ω：𝔽_q→\{0,\dotsc,q-1\}$ is a look-up bijection that applies to
	matrices entry-wisely,
	\[M=ω^{-1}\left(\bma{
		0	&	⋯	&	n-1
	}\right),\]
	$α₁,\dotsc,α_k$ and $β₁,\dotsc,β_{2k-3}$ are distinct elements in $𝔽_q$,
	and $q$ is the smallest prime power $≥\max(3k-3,n)$.
	We now use
	\[\bma{
		S^{(k)}	\\	Q^{(k)}
	}\label{mat:SkQk}\]
	to prove \cref{thm:bite}.

	\begin{proof}[Proof of \cref{thm:bite}]
		First of all, if everyone is healthy or there is merely one
		infected person, the situation will be trivial.  Hereafter we
		assume that there are two infected people.  Let their indices be
		$Y=y₁+y₂n+\dotsb+y_kn^{k-1}$ and $Z=z₁+z₂n+\dotsb+z_kn^{k-1}$ in
		their $n$-ary expansions.  Let their Ct values be $y_*$ and $z_*$,
		respectively. Then the first $t$ tests teach us $\{(y₁,y_*),(z₁,z_*)\}$
		or $\{(z₁,\min(y_*,z_*))\}$ if $y₁≠z₁$ or $y₁=z₁$, respectively.
		In general, the $(jt-t+1)$th to the $(jt)$th tests teach us
		$\{(y_j,y_*),(z_j,z_*)\}$ or $\{(z_j,\min(y_*,z_*))\}$ depending
		on whether the digits differ or not.

		The next step is to sort $y₁,z₁,\dotsc,y_k,z_k$ into two piles,
		$y₁,\dotsc,y_k$ and $z₁,\dotsc,z_k$, so that we can recover $Y$ and
		$Z$.  This can be done when two patients assume different Ct values,
		$y_*≠z_*$, in which case we know the digits of $Y$ are those that
		associate to $y_*$ and the digits of $Z$ are those that associate to
		$z_*$.  On the other hand, if we only see one Ct value the whole time,
		then $y_*=z_*$.  We will utilize the last $2k-3$ tests as the checksums
		of a systematic Reed--Solomon code to help sorting.  Here is how.

		Without loss of generality, we may assume $y_*=0=z_*$.
		Then the decoding boils down to the following task:  Suppose
		\[Y≔ω^{-1}(y₁,\dotsc,y_k,u₁,\dotsc,u_{2k-3})\label{cod:y}\]
		and
		\[Z≔ω^{-1}(z₁,\dotsc,z_k,v₁,\dotsc,v_{2k-3})\label{cod:z}\]
		are two codewords of a $[3k-3,k,2k-2]$-Reed--Solomon code.  Suppose
		that we know $\{y₁,z₁\}$ to $\{y_k,z_k\}$.  Suppose we also know
		$\min(u₁,v₁)$ to $\min(u_{2k-3},v_{2k-3})$.  To recover $Y$ and $Z$,
		make a guess of two codewords
		\[A≔ω^{-1}(a₁,\dotsc,a_k,e₁,\dotsc,e_{2k-3})\]
		and
		\[B≔ω^{-1}(b₁,\dotsc,b_k,f₁,\dotsc,f_{2k-3})\]
		such that $\{a_i,b_i\}=\{y_i,z_i\}$ for $1≤i≤k$ and
		$\min(e_j,f_j)=\min(u_j,v_j)$ for $1≤j≤2k-3$.
		If remains to show that $\{A,B\}=\{Y,Z\}$.

		Let $I_{AY}$ be the set of coordinates $i$ where $a_i=y_i$.
		Let $J_{AY}$ be the set of coordinates $j$ where
		$e_j=\min(e_j,f_j)=\min(u_j,v_j)=u_j$.  Define $I_{AZ}$,
		$J_{AZ}$, $I_{BY}$, $J_{BY}$, $I_{BZ}$, and $J_{BZ}$ similarly.
		Then $|I_{AY}|+|I_{AZ}|+|I_{BY}|+|I_{BZ}|≥2k$ and
		$|J_{AY}|+|J_{AZ}|+|J_{BY}|+|J_{BZ}|≥2k-3$.  Hence at least one of
		$|I_{AY}|+|J_{AY}|$, $|I_{AZ}|+|J_{AZ}|$, $|I_{BY}|+|J_{BY}|$, and
		$|I_{BZ}|+|J_{BZ}|$ is $≥k$.  Say $|I_{BZ}|+|J_{BZ}|≥k$.  Then the
		Reed--Solomon code being $[3k-3,k,2k-2]$ forces $B=Z$.  This then
		forces $A=Y$ and we finish proving that schedule \cref{mat:SkQk}
		is a valid tropical code.

		Now assume that $S$ BITEs.  Then $n/4>2^t$.  This implies
		$n^k/4>2^{kt+2k-3}$, hence schedule \cref{mat:SkQk} BITEs.
	\end{proof}

\section{Deep Searching (Proof of Theorem \ref{thm:deep})}\label{pf:deep}

	Let $ℓ$ be the largest available delay.  We want to prove that the number
	of tests needed is at most $T=4D⌈㏒_ℓN⌉+1$.  Here is an overview of the
	strategy.
	
	Instead of \cref{mat:hinge}, we can only afford this schedule matrix
	\[\bma{
		1⋯1	&	•••	&	ℓ⋯ℓ	\\
		ℓ⋯ℓ	&	•••	&	1⋯1		
	}\label{mat:logNDab}\]
	where each unique column repeats $≈N/ℓ$ times.  In other words, we divide
	$N$ people into $ℓ$ piles, treat each pile as one person, and apply
	searching \cref{mat:7DTa,mat:7DTb}.
	
	Now suppose that the diff-lay $a-b$ points to the $j$th pile.  That is,
	$j≔(a-b+ℓ-1)/2$.  Similar to \cref{lem:dichotomy}, there are three
	possibilities.
	\begin{enumerate}
		\item[(i)] 	The $j$th pile contains a patient and her Ct value is $a$.
		\item[(ii)]	The $j$th pile is healthy.  Instead, the first $j-1$ piles
					contains $≥1$ patient and the last $ℓ-j$ piles contains
					$≥1$ patient.
		\item[(iii)]The $j$th pile does contain some patients but their Ct
					values are $>a$.  Meanwhile, the first $j-1$ piles
					contain $≥1$ patient and the last $ℓ-j$ piles contain
					$≥1$ patient.
	\end{enumerate}
	Apply the verify--prefetching \cref{mat:3DTc}.
	\[\bma{
		c^♯
	}≔\bma{
		j-1⋯j-1	&	•••	&	1⋯1	&	0⋯0	\\
	}⊙\bma{
		x_1	\\	⋮	\\	x_{÷{jN}{ℓ+1}}
	}\]
	If $c^♯$ equals $(a+b-ℓ-1)/2$, then (i) is the case and we can reduce
	our scope to the $j$th pile, which is one order of magnitude smaller.
	If $c^♯$ is greater then $(a+b-ℓ-1)/2$, then (ii) is the case and
	the remaining population splits into two super-piles of piles
	\[\bma{
		x₁	\\	⋮	\\	x_{÷{(j-1)N}{ℓ}}
	}† and †\bma{
		x_{÷{jN}{ℓ}+1}	\\	⋮	\\	x_N
	}\]
	each having at least one infected person.  Now it suffices
	to apply the recursive algorithm to them separately.

	Overall, our strategy is a digit-by-digit $ℓ$-ary searching.  In sunny
	days we can confirm a digit and reduce the candidates to a smaller
	population.  In rainy days we split the population into two halves.

	\begin{proof}[Proof of \cref{thm:deep}]
		Let there be $N$ persons to be tested.  Partition them into $ℓ$
		almost-equal piles, treat them as $ℓ$ persons and apply \cref{thm:nim}.
		If there is no patient, the protocol will end the moment the first test
		comes out negative.  Hereafter, we let $D≥1$.  Let there be $D^•$
		infected piles.  Then it takes $3D^•+1≤4D$ test to single out the
		infected piles.  To each and every infected pile, apply \cref{thm:nim}
		recursively.  That will single out some infected sub-piles, followed by
		some infected sub-sub-piles.  And so on and so forth.  It takes at most
		$4D⌈㏒_ℓN⌉$ tests to split the population down to atomic individuals.
		This finishes the proof.
	\end{proof}

\section{Probabilistic Model and Simulation}\label{app:simulate}

	Alongside the development of the theoretical aspects of tropical group
	testing, we devote the very last appendix to benchmarking the performance
	of matching and delaying in a practical setup.  These simulations justify
	why tropical arithmetic---minimum and addition---are good approximation
	of the reality.  They also clarify what design elements makes a scheme
	clinically-usable.

\subsection{The prior}

	Let $p∈[0,1]$ be the prevalence rate.  Let $N$ denote the number of people
	to be tested.  For every individual $j∈[N]$, we toss a Bernoulli coin with
	mean $p$.  If the outcome of the toss is $0$, then individual $j$ is healthy
	and her Ct value $x_j$ is set to be $99$; if the outcome of the toss is $1$,
	then individual $j$ is infected and her Ct value $x_j$ is drawn continuously
	uniformly from the interval $[16,32]$ \cite{JBMVSBBTSSKMSZHKSECD21,
	JKKLGLM21, JGJO20, BMR21, BHF21}.  Note that $x_j$ is not necessarily an
	integer.  Under this setup, the viral load $x_j$ is set to be $2^{-x_j}$.
	Denote by
	\[2^{-𝐱}≔\bma{
		2^{-x₁}	\\	⋮	\\	2^{-x_N}
	}\]
	the column vector that represents the number of virus particles of the
	specimens from each of the $N$ individuals.
	
\subsection{The encoder}
	
	Given a block design $ℱ$ and a bijection $j：ℱ→[N]$, we will construct the
	schedule matrix $S$ as we have always been:
	\[S_{tj(B)}≔\cas{
		δ	&	if $t∈B$,	\\
		∞	&	if $t∉B$,	
	}\]
	where $δ=ℓ·†Bernoulli†(1/2)∈\{0,ℓ\}$ is a random delay generated unbiasedly
	and independently.   Let $2^{-S}$ be the result of, entry-wisely, $2$ raised
	to the power of $-S$.  That is,
	\[2^{-S}≔\bma{
		2^{-S_{11}}	&	⋯	&	2^{-S_{1N}}	\\
		⋮			&	⋰2	&	⋮			\\
		2^{-S_{T1}}	&	⋯	&	2^{-S_{TN}}
	}.\]
	One sees that we implement delaying by $δ$ cycles by diluting by a factor of
	$2^δ$.

	Now perform the PCR tests
	\[𝐯≔2^{-S}2^{-𝐱}.\]
	Its $t$th row, denoted by $v_t$, is the viral load of the $t$th tube given
	the delay schedule $S$ and the viral loads $2^{-𝐱}$.  Let, entry-wisely,
	\[𝐜≔\min(40,⌊-㏒₂(𝐯)⌋)=\bma{
		\min(40,⌊-㏒₂(v₁)⌋)	\\
		⋮					\\
		\min(40,⌊-㏒₂(v_T)⌋)
	}\]
	be the collection of the capped, integral Ct values of the tubes.  Its
	$t$th row is denoted by $c_t$.  The decoder will be given $S$ and $𝐜$, from
	which it shall determine who are infected and how severe their infection
	are.

\subsection{The decoder}

\tikzset{
	claw/.style={xscale=-1.4142,yscale=.7071,rotate=-90},
	test/.style={draw,inner sep=3},
	person/.style={ellipse,draw,inner sep=1,align=center},
	so/.style=PMS3015,
	delta/.style=auto,
}
\begin{figure}
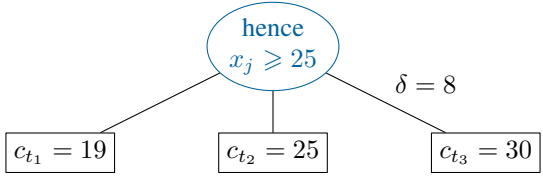

	$$\tikz{
		\draw[claw]
			                                           (1,2)node(t3)[test]{$c_{t₁}=19$}
			(-1,0)node(p1)[person,so]{hence\\$x_j≥25$} (1,0)node(t2)[test]{$c_{t₂}=25$}
			                                           (1,-2)node(t1)[test]{$c_{t₃}=30$}
			(p1)--(t3)
			(p1)--(t2)
			(p1)--node[delta]{$δ=8$}(t1)
		;
	}$$
	\caption{
		Phase I---underestimate:  Suppose that the $j$th person participates in
		tests $t₁,t₂,t₃∈[T]$.  Then $x_j$ has lower bounds $c_{t₁}-S_{t₁j}$,
		$c_{t₂}-S_{t₂j}$, and $c_{t₃}-S_{t₃j}$.  They are $19$, $25$, and
		$30-8=22$, respectively. The strongest bound is $x_j≥25$ so we let
		$u_j≔25$.
	}\label{fig:PhaseI}
\end{figure}

	Phase I---underestimate:  For any $t∈[T]$ and $j∈[N]$, the amount of virus
	in the $t$th tube is at least what was contributed by the $j$th person,
	i.e., $v_t≥2^{-S_{tj}}2^{-x_j}$.  This yields inequalities
	$c_t≤-㏒₂(v_t)≤S_{tj}+x_j$.  We therefore define
	\[u_j≔\max_tc_t-S_{tj}\]
	to be an underestimate of $x_j$.  See \cref{fig:PhaseI} for an example.

\begin{figure}
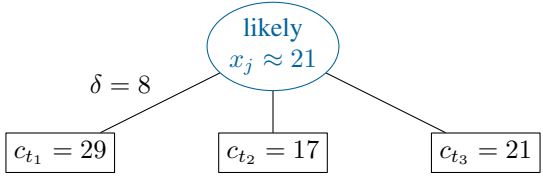

	$$\tikz{
		\draw[claw]
			                                           (1,2)node(t3)[test]{$c_{t₁}=29$}
			(-1,0)node(p1)[person,so]{likely\\$x_j≈21$} (1,0)node(t2)[test]{$c_{t₂}=17$}
			                                           (1,-2)node(t1)[test]{$c_{t₃}=21$}
			(p1)--node[delta,']{$δ=8$}(t3)
			(p1)--(t2)
			(p1)--(t1)
		;
	}$$
	\caption{
		Phase II---match:  If the $j$th person is the main contributor of some
		two tubes $t₁,t₂∈[T]$, we will see $c_{t₁}-S_{t₁j}=x_j=c_{t₂}-S_{t₂j}$.
		Conversely, whenever we see that the greatest two $c_t-S_{tj}$ coincide,
		we can guess with confidence that $x_j=u_j$.  In this figure,
		$c_{t₁}-S_{t₁j}=29-8=21=c_{t₃}-S_{t₃j}$, so $x_j$ is likely $21$.
	}\label{fig:PhaseII}
\end{figure}

	Phase II---match:  For any person with Ct value $x_j$ and any tube $t$,
	either $j$ has contributed the majority of the virus and hence
	$c_t≈S_{tj}+x_j$, or someone else contributed significantly more and
	$c_t≪S_{tj}+x_j$.  Thinking backward, each $c_t-S_{tj}$ is either
	$x_j$ or less than that.  We thus look for person $j$ where
	\[\bigl|\{t｜c_t-S_{tj}=u_j\}\bigr|≥2.\]
	For every such $j$, the decoder declares that $j$ is infected and the
	inferred Ct value is $u_j$.  See \cref{fig:PhaseII} for an example.

\begin{figure}
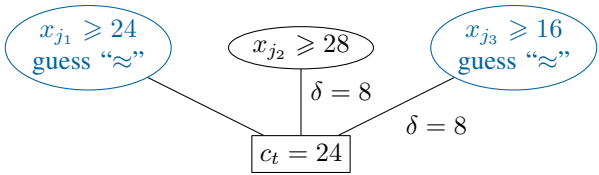

	$$\tikz{
		\draw[claw]
			(-1,2)node(p3)[person,so]{$x_{j₁}≥24$\\guess “$≈$”}
			(-1,0)node(p2)[person]{$x_{j₂}≥28$}            (1,0)node(t1)[test]{$c_t=24$}
			(-1,-2)node(p1)[person,so]{$x_{j₃}≥16$\\guess “$≈$”}
			(p3)--(t1)
			(p2)--node[delta,pos=1/3]{$δ=8$}(t1)
			(p1)--node[delta]{$δ=8$}(t1)
		;
	}$$
	\caption{
		Phase III---explain:  If the $t$th tube is positive but not explained
		by any patients reported by the first two phases we will find in $t$'s
		participants the most likely people and report them infected.  In the
		figure, both $j₁$ and $j₃$ can contribute $2^{-24}$, which is what $t$
		has right now, so both are declared infected.
	}\label{fig:PhaseIII}
\end{figure}

	Phase III---explain:  Motivated by the SCOMP algorithm \cite{ABJ14}, we want
	to make sure that all positive tubes are explained by some positive person.
	For any unexplained tube, we look at its participants and look for the
	one(s) that could have been the main contributor.  In detail, recall that
	the $t$th tube has about $2^{-c_t}$ and the $j$th person could have
	contributed at most $2^{-S_{tj}-u_j}$.  Therefore, the more negative the
	deficit $c_t-S_{tj}-u_j$, the less likely $j$ is responsible for tube $t$.
	More concisely, we look for the subset of suspects
	\[\{j｜c_t-S_{tj}-u_j=\max_kc_t-S_{tk}-u_k\}.\]
	We will report everyone in this subset infected, each with her own $u_j$ as
	the speculated Ct value.  See \cref{fig:PhaseIII} for an example.

\begin{figure}
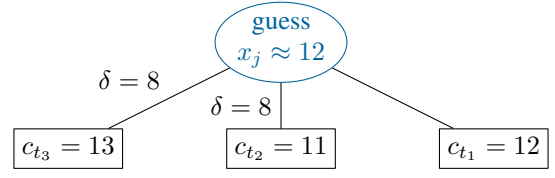

	$$\tikz{
		\draw[xscale=1.4142,yscale=.7071,rotate=-90]
			                                           (1,2)node(t3)[test]{$c_{t₁}=12$}
			(-1,0)node(p1)[person,so]{guess\\$x_j≈12$} (1,0)node(t2)[test]{$c_{t₂}=11$}
			                                           (1,-2)node(t1)[test]{$c_{t₃}=13$}
			(p1)--(t3)
			(p1)--node[delta,']{$δ=8$}(t2)
			(p1)--node[delta,']{$δ=8$}(t1)
		;
	}$$
	\caption{
		Phase IV---dark room:  Fix a $ρ$; say $ρ=14$.  If we see $u_j<ρ$,
		then the $j$th person could have had $x_j∈[ρ,32]$ and we would not
		notice due to the masking effect.  We declare that she is infected with
		Ct value $u_j$.  In general, the diagnose of each person is a function
		in $ρ$.  Set a low $ρ$ then she is infected; set a high $ρ$ then she is
		healthy (unless the previous phases found clear evidence of infection).
	}\label{fig:PhaseIV}
\end{figure}

	Phase IV---dark room:  Consider the following scenario.  A person is 
	infected with Ct value $31$.  However, against our favor, the tubes she is
	in have Ct values $11$, $12$, and $13$.  By no means we can infer whether
	she is infected or not.  For this type of “patients in the dark room”, we
	set a bar $ρ$ and report anyone whose underestimate $u_j$ is lower than $ρ$.
	This way, the set of people being diagnosed infected is a function in $ρ$.
	For lower $ρ$, fewer people are diagnosed infected, so the specificity is
	high, but the sensitivity is low.  For higher $ρ$, it is the other way
	around.  This $ρ$ parametrizes an \emph{receiver operating characteristic}
	(ROC) curve.
	
\subsection{The simulation result}

\newcount\accumulatenumplots
\pgfplotsset{
	cycle multiindex* list={
		PMS116,PMS3015,PMS1245,PMS3115,PMS144\nextlist
		mark=o,mark size=1.5\\mark=+\\mark=triangle\\mark=*,mark size=1\\mark=x\\
		every mark/.append style={yscale=-1},mark size=2,mark=triangle*\\\nextlist
	},
	legend image code/.code={
		\draw[mark repeat=2,mark phase=2,#1]plot coordinates{
			(-4pt,-6pt)(0pt,2pt)(8pt,6pt)
		};
	},
	every axis/.style={cycle list shift=\the\accumulatenumplots},
	percentage/.style={
		yticklabel=\pgfmathprintnumber\tick\%,
		xticklabel=\pgfmathprintnumber\tick\%,
		xticklabel style={rotate=90},
		max space between ticks=24
	},
	RoC/.style={
		percentage,
		xlabel=false positive rate ($1-{}$specificity),
		ylabel=true positive rate (sensitivity),
		legend pos=south east
	}
}
\pgfplotstableread{
	5fp    5sen   6fp    6sen   7fp    7sen   8fp    8sen   9fp    9sen   10fp   10sen
	0.463  99.323 0.805  98.907 1.273  98.394 1.873  97.793 2.627  97.088 3.531  96.314
	0.463  99.323 0.805  98.907 1.273  98.394 1.873  97.793 2.627  97.088 3.531  96.314
	0.464  99.325 0.808  98.91  1.278  98.4   1.879  97.801 2.636  97.099 3.545  96.329
	0.471  99.331 0.82   98.924 1.296  98.42  1.905  97.83  2.673  97.139 3.591  96.38 
	0.485  99.347 0.842  98.95  1.335  98.455 1.959  97.885 2.745  97.213 3.685  96.478
	0.51   99.367 0.883  98.989 1.397  98.513 2.048  97.97  2.865  97.33  3.843  96.637
	0.545  99.399 0.943  99.043 1.49   98.592 2.177  98.087 3.043  97.488 4.074  96.839
	0.596  99.443 1.027  99.109 1.611  98.695 2.351  98.228 3.277  97.686 4.376  97.102
	0.66   99.491 1.137  99.193 1.768  98.814 2.571  98.401 3.57   97.908 4.759  97.405
	0.741  99.545 1.269  99.286 1.967  98.955 2.839  98.594 3.924  98.167 5.214  97.737
	0.843  99.604 1.431  99.385 2.192  99.102 3.155  98.798 4.347  98.438 5.746  98.084
	0.961  99.668 1.619  99.488 2.46   99.263 3.518  99.003 4.824  98.719 6.344  98.444
	1.105  99.746 1.828  99.596 2.762  99.421 3.941  99.221 5.362  99.006 7.011  98.793
	1.256  99.821 2.069  99.703 3.11   99.575 4.406  99.435 5.955  99.286 7.747  99.131
	1.432  99.879 2.332  99.81  3.495  99.725 4.911  99.644 6.602  99.548 8.522  99.447
	1.634  99.944 2.635  99.899 3.922  99.865 5.474  99.821 7.307  99.784 9.365  99.728
	1.855 100.    2.983 100.    4.391 100.    6.09  100.    8.08  100.   10.291 100.   
}\tablepreval
\begin{figure}
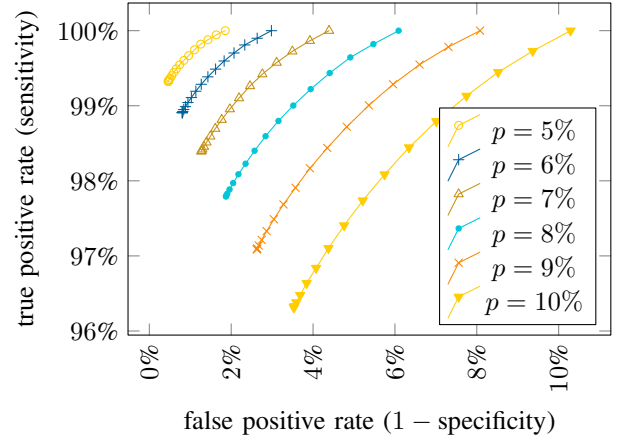

	$$\tikz{\linlin[RoC]{
		\addplot table[x=5fp,y=5sen]{\tablepreval};\addlegendentry{$p=5\%$}
		\addplot table[x=6fp,y=6sen]{\tablepreval};\addlegendentry{$p=6\%$}
		\addplot table[x=7fp,y=7sen]{\tablepreval};\addlegendentry{$p=7\%$}
		\addplot table[x=8fp,y=8sen]{\tablepreval};\addlegendentry{$p=8\%$}
		\addplot table[x=9fp,y=9sen]{\tablepreval};\addlegendentry{$p=9\%$}
		\addplot table[x=10fp,y=10sen]{\tablepreval};\addlegendentry{$p=10\%$}
	}}$$
	\global\advance\accumulatenumplots6
	\caption{
		Assume uniform Ct values on the interval $[16,32]$, $15×35$ Kirkman
		triple system, and no delay ($ℓ=0$).  We vary the prevalence rate $p$
		and plot the ROC curves.
	}\label{fig:ROCpreval}
\end{figure}

	\Cref{fig:ROCpreval} shows how different prevalence rates translate into
	performances.

\pgfplotstableread{
	8fp    8sen    12fp   12sen   16fp   16sen   20fp   20sen   24fp   24sen
	3.291  96.837  3.391  96.608  3.538  96.338  3.745  95.816  4.111  94.889
	3.291  96.837  3.391  96.608  3.538  96.338  3.745  95.816  4.111  94.892
	3.295  96.841  3.398  96.616  3.55   96.354  3.774  95.856  4.204  95.04 
	3.311  96.855  3.422  96.643  3.597  96.404  3.876  95.979  4.513  95.462
	3.344  96.89   3.475  96.695  3.694  96.502  4.086  96.208  5.093  96.161
	3.396  96.941  3.564  96.776  3.854  96.66   4.418  96.54   5.972  97.082
	3.478  97.017  3.691  96.894  4.084  96.866  4.884  96.96   7.128  98.093
	3.584  97.11   3.865  97.048  4.387  97.122  5.482  97.454  8.557  99.071
	3.727  97.231  4.086  97.225  4.767  97.413  6.21   97.991 10.275 100.   
	3.899  97.364  4.357  97.437  5.223  97.737  7.062  98.547    nan     nan
	4.105  97.516  4.684  97.667  5.755  98.089  8.022  99.078    nan     nan
	4.344  97.685  5.057  97.914  6.35   98.442  9.089  99.547    nan     nan
	4.616  97.869  5.477  98.172  7.016  98.804 10.282 100.       nan     nan
	4.926  98.066  5.936  98.437  7.749  99.146    nan     nan    nan     nan
	5.265  98.268  6.447  98.706  8.54   99.474    nan     nan    nan     nan
	5.644  98.48   6.992  98.966  9.382  99.74     nan     nan    nan     nan
	6.055  98.682  7.578  99.211 10.299 100.       nan     nan    nan     nan
	6.491  98.884  8.197  99.439    nan     nan    nan     nan    nan     nan
	6.963  99.084  8.852  99.641    nan     nan    nan     nan    nan     nan
	7.457  99.273  9.544  99.836    nan     nan    nan     nan    nan     nan
	7.976  99.452 10.281 100.       nan     nan    nan     nan    nan     nan
	8.516  99.612    nan     nan    nan     nan    nan     nan    nan     nan
	9.081  99.753    nan     nan    nan     nan    nan     nan    nan     nan
	9.663  99.882    nan     nan    nan     nan    nan     nan    nan     nan
	10.283 100       nan     nan    nan     nan    nan     nan    nan     nan
}\tablerange
\begin{figure}
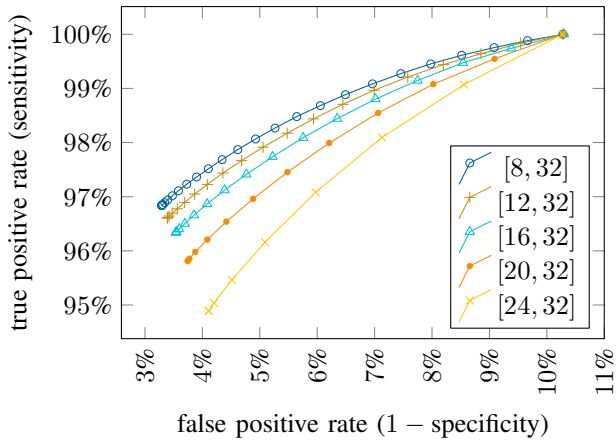

	$$\tikz{\linlin[RoC]{
		\addplot table[x=8fp,y=8sen]{\tablerange};\addlegendentry{$[8,32]$}
		\addplot table[x=12fp,y=12sen]{\tablerange};\addlegendentry{$[12,32]$}
		\addplot table[x=16fp,y=16sen]{\tablerange};\addlegendentry{$[16,32]$}
		\addplot table[x=20fp,y=20sen]{\tablerange};\addlegendentry{$[20,32]$}
		\addplot table[x=24fp,y=24sen]{\tablerange};\addlegendentry{$[24,32]$}
	}}$$
	\global\advance\accumulatenumplots5
	\caption{
		Assume prevalence rate $p=10\%$, uniform Ct values, $15×35$ Kirkman
		triple system, and no delay ($ℓ=0$).  We vary the range of the Ct values
		and plot the ROC curves.  Surprisingly, larger interval (consequently
		larger variance) is easier to decode.
	}\label{fig:ROCrange}
\end{figure}

	\Cref{fig:ROCrange} shows how different ranges of Ct values affect the
	performance.  A larger range leads to a better performance.  We infer that
	this is because more possible Ct values lead to fewer “collisions” and make
	matching easier.  While this could be counterintuitive, it shows that the
	tropical framework is specialized at handling data that span a large range;
	the larger, the better.

\pgfplotstableread{
	8fp    8sen    6fp    6sen    4fp    4sen    2fp    2sen    0fp    0sen
	1.988  97.433  2.326  97.014  2.725  96.596  3.139  96.2    3.549  96.339
	2.023  97.471  2.346  97.039  2.734  96.608  3.142  96.203  3.549  96.339
	2.083  97.535  2.388  97.085  2.763  96.639  3.156  96.221  3.563  96.355
	2.175  97.63   2.458  97.157  2.814  96.692  3.203  96.278  3.611  96.409
	2.303  97.744  2.562  97.259  2.895  96.773  3.3    96.379  3.709  96.508
	2.466  97.884  2.701  97.38   3.057  96.927  3.462  96.549  3.871  96.659
	2.664  98.047  2.873  97.532  3.287  97.134  3.693  96.774  4.104  96.87 
	2.902  98.229  3.18   97.771  3.592  97.385  4.003  97.058  4.409  97.124
	3.18   98.421  3.568  98.049  3.979  97.68   4.389  97.389  4.785  97.42 
	3.559  98.644  4.027  98.357  4.444  98.023  4.854  97.755  5.247  97.758
	3.987  98.87   4.574  98.682  4.983  98.386  5.391  98.139  5.773  98.098
	4.465  99.092  5.077  98.928  5.599  98.75   5.998  98.535  6.368  98.459
	4.989  99.308  5.621  99.155  6.29   99.113  6.676  98.926  7.036  98.814
	5.547  99.513  6.207  99.396  6.884  99.374  7.409  99.277  7.771  99.152
	6.147  99.696  6.823  99.619  7.509  99.613  8.205  99.605  8.554  99.468
	6.792  99.854  7.472  99.821  8.175  99.818  8.857  99.816  9.392  99.741
	7.474 100.     8.16  100.     8.86  100.     9.544 100.    10.306 100.   
}\tablelimit
\begin{figure}
	$$\tikz{\linlin[RoC]{
		\addplot table[x=8fp,y=8sen]{\tablelimit};\addlegendentry{$ℓ=8$}
		\addplot table[x=6fp,y=6sen]{\tablelimit};\addlegendentry{$ℓ=6$}
		\addplot table[x=4fp,y=4sen]{\tablelimit};\addlegendentry{$ℓ=4$}
		\addplot table[x=2fp,y=2sen]{\tablelimit};\addlegendentry{$ℓ=2$}
		\addplot table[x=0fp,y=0sen]{\tablelimit};\addlegendentry{$ℓ=0$}
	}}$$
	\global\advance\accumulatenumplots5
	\caption{
		Assume prevalence rate $p=10\%$, uniform Ct values on the interval
		$[16,32]$, $15×35$ Kirkman triple system, and $ℓ·†Bernoulli†(1/2)$
		delay.  We vary the limit of delay $ℓ$ and plot the ROC curves.
	}\label{fig:ROClimit}
\end{figure}

	\Cref{fig:ROClimit} shows how the delay facilitates matching.  Delaying
	further reduces the collision probability and improves the ROC tradeoff.

\pgfplotstableread{
	Bfp    Bsen     Ufp    Usen     0fp    0sen
	1.99   97.42    2.4    96.751   3.543  96.327
	2.023  97.456   2.415  96.768   3.543  96.327
	2.082  97.519   2.446  96.803   3.556  96.342
	2.173  97.607   2.504  96.864   3.605  96.395
	2.298  97.718   2.601  96.967   3.704  96.496
	2.457  97.854   2.754  97.109   3.861  96.654
	2.659  98.017   2.974  97.3     4.093  96.866
	2.897  98.196   3.272  97.552   4.394  97.122
	3.172  98.386   3.653  97.843   4.772  97.425
	3.555  98.615   4.117  98.178   5.225  97.758
	3.985  98.846   4.646  98.52    5.756  98.116
	4.46   99.068   5.233  98.858   6.356  98.466
	4.977  99.284   5.848  99.173   7.026  98.823
	5.542  99.495   6.479  99.45    7.755  99.162
	6.15   99.682   7.095  99.685   8.527  99.471
	6.793  99.851   7.679  99.863   9.365  99.75 
	7.472 100.      8.193 100.     10.286 100.   
}\tabledistri
\begin{figure}
	$$\tikz{\linlin[RoC]{
		\addplot table[x=Bfp,y=Bsen]{\tabledistri};\addlegendentry{Bernoulli}
		\addplot table[x=Ufp,y=Usen]{\tabledistri};\addlegendentry{uniform}
		\addplot table[x=0fp,y=0sen]{\tablelimit};\addlegendentry{no delay}
	}}$$
	\global\advance\accumulatenumplots3
	\caption{
		Assume prevalence rate $p=10\%$, uniform Ct values on the interval
		$[16,32]$, $15×35$ Kirkman triple system, and $ℓ=8$.  We vary the
		distribution of the random delay $δ$ and plot the ROC curves.
	}\label{fig:ROCdistri}
\end{figure}

	\Cref{fig:ROCdistri} shows how the distribution of the random delay is
	correlated to the performance.  Bernoulli (uniform on $\{0,ℓ\}$) is
	apparently better than uniform (uniform on $\{0,\dotsc,ℓ\}$).  We believe
	that this is because the former assumes a greater variance and makes
	collision rarer.

\pgfplotstableread{
	405fp  405sen   105fp  105sen   45fp   45sen    15fp   15sen
	2.588  96.606   2.602  96.61    2.787  96.555   3.534  96.319
	2.588  96.606   2.602  96.61    2.787  96.555   3.534  96.319
	2.601  96.618   2.616  96.622   2.8    96.567   3.547  96.333
	2.652  96.661   2.667  96.666   2.851  96.613   3.595  96.386
	2.759  96.747   2.774  96.75    2.955  96.7     3.695  96.484
	2.935  96.881   2.949  96.885   3.128  96.838   3.854  96.64 
	3.19   97.066   3.204  97.066   3.377  97.026   4.082  96.844
	3.53   97.3     3.544  97.295   3.713  97.261   4.385  97.107
	3.955  97.571   3.973  97.565   4.128  97.537   4.766  97.405
	4.471  97.879   4.49   97.87    4.634  97.839   5.227  97.738
	5.072  98.201   5.091  98.197   5.228  98.177   5.755  98.093
	5.76   98.537   5.776  98.531   5.892  98.515   6.354  98.447
	6.521  98.871   6.533  98.869   6.631  98.857   7.019  98.811
	7.356  99.192   7.368  99.192   7.44   99.181   7.747  99.141
	8.258  99.487   8.269  99.491   8.324  99.487   8.536  99.463
	9.23   99.751   9.237  99.754   9.27   99.746   9.372  99.74 
	10.284 100     10.288 100.     10.296 100.     10.292 100.   
}\tablekirkman
\begin{figure}
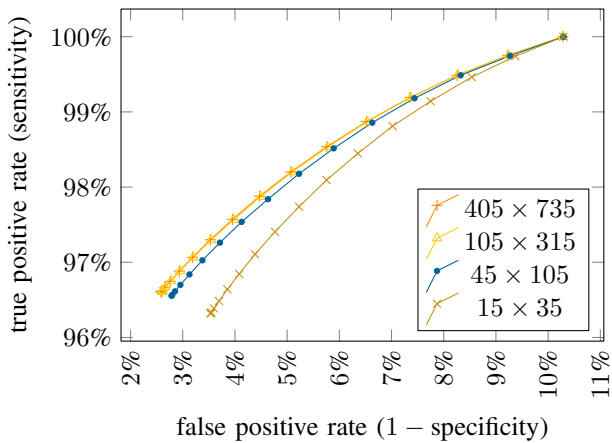

	$$\tikz{\linlin[RoC]{
		\addplot table[x=405fp,y=405sen]{\tablekirkman};\addlegendentry{$405×735$}
		\addplot table[x=105fp,y=105sen]{\tablekirkman};\addlegendentry{$105×315$}
		\addplot table[x=45fp,y=45sen]{\tablekirkman};\addlegendentry{$45×105$}
		\addplot table[x=15fp,y=15sen]{\tablekirkman};\addlegendentry{$15×35$}
	}}$$
	\global\advance\accumulatenumplots4
	\caption{
		Assume prevalence rate $p=10\%$, uniform Ct values on $[16,32]$, and
		no delay ($ℓ=0$).  We consider Kirkman triple systems of different
		size (after truncation so that the code rate $N/T=7/3$ is fixed) and
		plot the ROC curves.
	}\label{fig:ROCkirkman}
\end{figure}
	
	\Cref{fig:ROCkirkman} shows how the size of the matrix can have impact on
	the performance.  At the same code rate $N/T=7/3$, a truncation of a larger
	Kirkman triple system performs better because, presumably, the girth can be
	improved.  That being said, there appeared to be a ceiling on how much the
	girth can help.  All Kirkman systems used here are those returned by the
	\texttt{kirkman\_triple\_system} function in SageMath the mathematical
	software \cite{Sage}.

\pgfplotstableread{
	T0fp    T0sen T1fp   T1sen  T2fp   T2sen  T3fp   T3sen  T4fp   T4sen  T5fp   T5sen  T6fp   T6sen  T7fp   T7sen  T8fp   T8sen  T9fp   T9sen
	1.826  97.984 1.809  97.96  1.823  98.022 1.871  98.007 1.837  97.99  1.857  98.007 1.842  98.054 1.852  98.005 1.866  98.008 1.851  97.983
	1.826  97.984 1.809  97.96  1.823  98.022 1.871  98.007 1.837  97.99  1.857  98.007 1.842  98.054 1.852  98.005 1.866  98.008 1.851  97.983
	1.834  97.99  1.816  97.965 1.833  98.031 1.882  98.011 1.846  97.996 1.869  98.014 1.851  98.056 1.861  98.01  1.875  98.012 1.861  97.989
	1.869  98.004 1.855  97.985 1.867  98.055 1.917  98.033 1.881  98.02  1.904  98.044 1.888  98.067 1.893  98.031 1.909  98.036 1.898  98.008
	1.951  98.04  1.925  98.034 1.949  98.097 1.999  98.082 1.957  98.061 1.978  98.087 1.957  98.122 1.963  98.076 1.979  98.082 1.972  98.058
	2.074  98.116 2.057  98.108 2.08   98.168 2.131  98.146 2.084  98.12  2.114  98.145 2.089  98.185 2.093  98.128 2.112  98.158 2.103  98.128
	2.27   98.209 2.238  98.206 2.269  98.269 2.313  98.236 2.273  98.226 2.295  98.245 2.276  98.272 2.273  98.224 2.3    98.257 2.288  98.22 
	2.535  98.357 2.498  98.352 2.528  98.405 2.562  98.369 2.528  98.347 2.545  98.353 2.523  98.397 2.541  98.351 2.559  98.39  2.548  98.348
	2.857  98.494 2.823  98.521 2.865  98.551 2.894  98.533 2.848  98.509 2.888  98.51  2.86   98.554 2.874  98.507 2.889  98.518 2.878  98.493
	3.284  98.661 3.241  98.704 3.265  98.703 3.305  98.693 3.257  98.67  3.293  98.689 3.282  98.736 3.276  98.683 3.301  98.687 3.286  98.684
	3.761  98.875 3.737  98.906 3.737  98.897 3.785  98.865 3.734  98.85  3.773  98.879 3.767  98.933 3.761  98.873 3.779  98.834 3.776  98.868
	4.31   99.074 4.304  99.097 4.316  99.08  4.343  99.051 4.293  99.052 4.34   99.067 4.318  99.142 4.338  99.072 4.344  99.046 4.357  99.067
	4.964  99.283 4.916  99.3   4.962  99.273 4.98   99.284 4.94   99.235 4.978  99.267 4.98   99.322 4.967  99.263 4.981  99.248 4.992  99.241
	5.674  99.489 5.647  99.493 5.683  99.487 5.709  99.473 5.668  99.425 5.706  99.459 5.687  99.489 5.714  99.483 5.689  99.445 5.712  99.412
	6.481  99.671 6.432  99.702 6.473  99.658 6.525  99.649 6.454  99.607 6.492  99.645 6.482  99.657 6.534  99.66  6.457  99.648 6.513  99.617
	7.353  99.832 7.331  99.867 7.358  99.831 7.356  99.833 7.363  99.791 7.365  99.82  7.369  99.825 7.432  99.845 7.324  99.811 7.388  99.816
	8.331 100.    8.267 100.    8.347 100.    8.297 100.    8.33  100.    8.307 100.    8.315 100.    8.389 100.    8.281 100.    8.339 100.   
}\tabletapestry
\begin{figure}
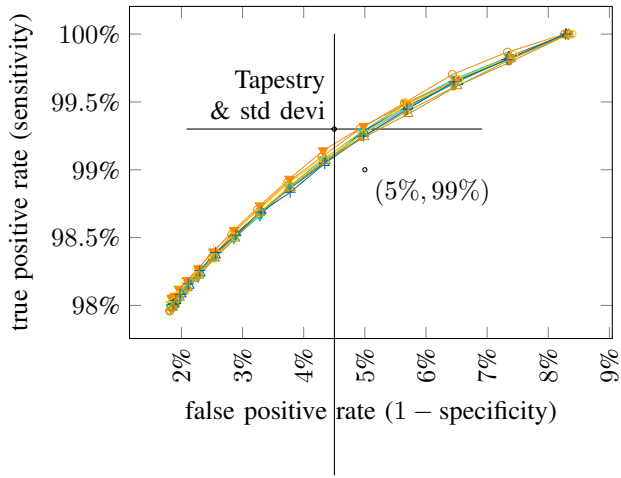

	$$\tikz{\linlin[RoC,every axis plot/.append style={line width=.1}]{
		\draw(4.5,99.3)circle(.8pt)
			node[above left,align=right]{Tapestry \\ \& std devi}
			(4.5+2.41,99.3)--(4.5-2.41,99.3)
			(4.5,100)coordinate(T+y)(4.5,99.3-2.55)coordinate(T-y){}
			(5,99)circle(.8pt)node[below right]{$(5\%,99\%)$};
		\addplot table[x=T0fp,y=T0sen]{\tabletapestry};
		\addplot table[x=T1fp,y=T1sen]{\tabletapestry};
		\addplot table[x=T2fp,y=T2sen]{\tabletapestry};
		\addplot table[x=T3fp,y=T3sen]{\tabletapestry};
		\addplot table[x=T4fp,y=T4sen]{\tabletapestry};
		\addplot table[x=T5fp,y=T5sen]{\tabletapestry};
		\addplot table[x=T6fp,y=T6sen]{\tabletapestry};
		\addplot table[x=T7fp,y=T7sen]{\tabletapestry};
		\addplot table[x=T8fp,y=T8sen]{\tabletapestry};
		\addplot table[x=T9fp,y=T9sen]{\tabletapestry};
		}\draw(T+y)--(T-y);
	}$$
	\global\advance\accumulatenumplots10
	\vskip-12pt
	\caption{
		Assume $D=10$ patients within $N=105$ persons (infection rate
		$9.52\%$), uniform Ct values on $[16,32]$, $45×105$ Kirkman triple
		system (truncation of a $45×330$ Kirkman triple system), and no
		delay ($ℓ=0$).  We plot $10$ ROC curves.  Each curve is $10{,}000$
		encoding--decodings, i.e., $450{,}000$ tubes, $100{,}000$ patients,
		and $1{,}050{,}000$ test takers.  Compare this to Tapestry's data
		point and its standard deviations $(4.50\%±2.41\%,99.30\%±2.55\%)$
		(Table S.XII of the preprint version \cite{GARPAGCGGR20}).
	}\label{fig:ROCtapestry}
\end{figure}

	\Cref{fig:ROCtapestry} shows a comparison of tropical group testing to
	Tapestry \cite{GARPAGCGRGRG21}.  Under the same number of patients ($D=10$),
	the same matrix dimension ($45×105$), and no delay ($ℓ=0$), we can also
	achieve $99\%$ sensitivity and $95\%$ specificity.  What's more, our setting
	is more hazard.  We assume uniform Ct value on $[16,32]$; Tapestry assumes
	uniform viral load on $[1,32768]$.  We assume that $c_t$ is $㏒₂(v_t)$
	rounded up to an integer; Tapestry assumes that $c_t$ is $㏒₂(v_t)$ shifted
	by $0.1Z㏒_2(1.95)≈0.096Z$, where $Z$ is a standard Gaussian.

\pgfplotstableread{
	97fp   97sen    49fp   49sen    17fp   17sen    Pfp    Psen
	0.892  96.809   1.015  96.099   1.862  94.293   2.9    96.004
	0.892  96.809   1.015  96.099   1.862  94.303   2.9    96.004
	0.893  96.809   1.022  96.108   1.922  94.424   2.901  96.005
	0.899  96.815   1.045  96.141   2.037  94.625   2.907  96.013
	0.916  96.832   1.095  96.204   2.203  94.906   2.927  96.037
	0.953  96.869   1.179  96.31    2.416  95.242   2.972  96.091
	1.018  96.934   1.301  96.462   2.671  95.626   3.056  96.191
	1.12   97.037   1.469  96.661   2.963  96.066   3.193  96.347
	1.268  97.179   1.682  96.92    3.297  96.515   3.395  96.565
	1.468  97.378   1.944  97.211   3.659  96.973   3.672  96.852
	1.727  97.629   2.254  97.55    4.047  97.436   4.039  97.206
	2.047  97.936   2.619  97.937   4.463  97.907   4.501  97.631
	2.436  98.294   3.025  98.338   4.9    98.355   5.065  98.096
	2.891  98.69    3.482  98.76    5.362  98.806   5.721  98.588
	3.414  99.12    3.988  99.176   5.834  99.233   6.482  99.093
	4.008  99.557   4.536  99.596   6.327  99.626   7.334  99.561
	4.69  100.      5.135 100.      6.837 100.      8.296 100.   
}\tabelpbest
\begin{figure}
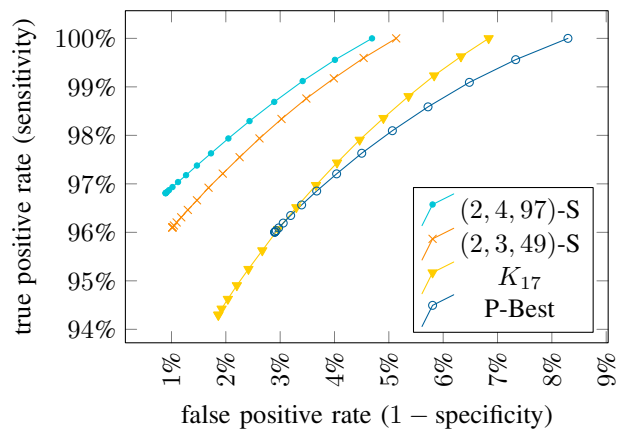

	$$\tikz{\linlin[RoC]{
		\addplot table[x=97fp,y=97sen]{\tabelpbest};\addlegendentry{$(2,4,97)$-S}
		\addplot table[x=49fp,y=49sen]{\tabelpbest};\addlegendentry{$(2,3,49)$-S}
		\addplot table[x=17fp,y=17sen]{\tabelpbest};\addlegendentry{$K_{17}$}
		\addplot table[x=Pfp,y=Psen]{\tabelpbest};\addlegendentry{P-Best}
	}}$$
	\global\advance\accumulatenumplots4
	\caption{
		Assume prevalence rate $p=2\%$, uniform Ct values on $[16,32]$,
		and no delay ($ℓ=0$).  We consider $(2,4,97)$-Steiner system
		(aka $2$-$(97,4,1)$ design), $(2,3,49)$-Steiner system (aka
		$2$-$(49,3,1)$ design), complete graph on $17$ vertices, and P-BEST
		\cite{SLWSSOGSEGGMSFNSPH20}.  They all have code rate $N/T=8$.
		We plot their ROC curves.
	}\label{fig:ROCpbest}
\end{figure}

	\Cref{fig:ROCpbest} shows a comparison of Steiner systems to the block
	design used in P-BEST \cite{SLWSSOGSEGGMSFNSPH20}.  Per the report, P-BEST
	is applicable when the prevalence rate is $<1.3\%$.\footnote{ They claimed
	so because their experiments and simulations show that P-BEST, with high
	probability, can identify $5$ patients among $384$ suspects.  It it unclear
	from their paper what happens when there are more than $5$ patients.}  We
	found that, under tropical group testing, the same matrix performs rather
	good til $p=2\%$.  That being said, there are block designs at the same code
	rate ($N/T=8$) that perform better.  For instance, at $p=2\%$, the complete
	graph on $17$ vertices, which is by definition a $(2,2,17)$-Steiner system,
	has a better sensitivity--specificity tradeoff with fewer vertices and
	simpler pipetting rules.  The $(2,3,49)$-Steiner system (returned by the
	\texttt{steiner\_triple\_system} function in \cite{Sage}) has a better
	tradeoff when $p∈[1\%,2\%]$.  The $(2,4,97)$-Steiner system
	\cite[Theorem~2.2]{RR12} has a better tradeoff across all $p∈[0\%,2\%]$.  

\pgfplotstableread{
	183fp  183sen   15fp   15sen    61fp   61sen
	0.65   91.622   3.018  96.277   1.182  86.561
	0.65   91.622   3.019  96.287   1.182  86.583
	0.656  91.633   3.052  96.356   1.244  86.87 
	0.679  91.679   3.12   96.477   1.365  87.367
	0.727  91.784   3.227  96.664   1.54   88.01 
	0.81   91.995   3.37   96.865   1.766  88.772
	0.931  92.29    3.552  97.125   2.039  89.643
	1.095  92.696   3.775  97.407   2.351  90.582
	1.304  93.19    4.028  97.709   2.709  91.595
	1.564  93.803   4.326  98.038   3.092  92.62 
	1.875  94.507   4.657  98.358   3.508  93.683
	2.234  95.298   5.015  98.689   3.946  94.744
	2.643  96.161   5.405  98.98    4.42   95.845
	3.099  97.062   5.822  99.282   4.921  96.933
	3.611  98.033   6.272  99.567   5.441  98.   
	4.169  99.029   6.744  99.792   5.982  99.028
	4.775 100.      7.236 100.      6.546 100.   
}\tablehyper
\begin{figure}
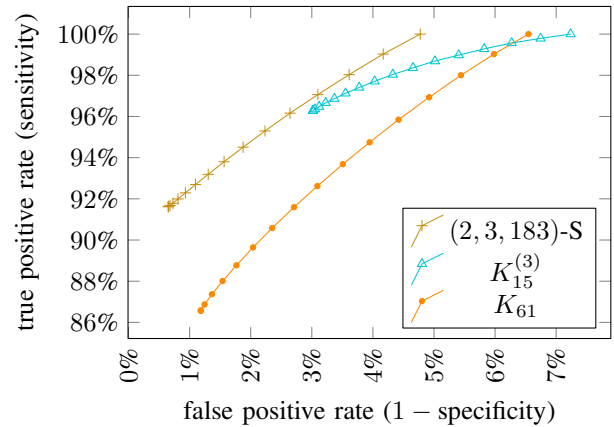

	$$\tikz{\linlin[RoC]{
		\addplot table[x=183fp,y=183sen]{\tablehyper};\addlegendentry{$(2,3,183)$-S}
		\addplot table[x=15fp,y=15sen]{\tablehyper};\addlegendentry{$K^{(3)}_{15}$}
		\addplot table[x=61fp,y=61sen]{\tablehyper};\addlegendentry{$K_{61}$}
	}}$$
	\global\advance\accumulatenumplots3
	\caption{
		Assume prevalence rate $p=0.5\%$, uniform Ct values on $[16,32]$, and no
		delay ($ℓ=0$).  We consider Kirkman triple system on $183$ vertices,
		complete $3$-uniform hypergraph on $15$ vertices, and complete graph on
		$61$ vertices.  The first two have code rate $N/T=30+1/3$; the last one
		has code rate $N/T=30$.  We plot their ROC curves.
	}\label{fig:ROChyper}
\end{figure}

	\Cref{fig:ROChyper} shows a comparison of three designs at $p=0.5\%$.
	Kirkman triple system on $183$ vertices performs better than the other two
	at the cost of reasonable subpacketization---it uses $183$ tests on $5551$
	persons.  Complete $3$-uniform hypergraph on $15$ vertices, using $15$ tests
	on $\binom{15}3=455$ persons, is easier to implement and still achieves
	$98\%$ sensitivity and $95\%$ specificity.  The last one, the complete graph
	on $61$ vertices, uses only two pipets per person and achieves a comparable
	performance.

\hbadness10000
\def\bibsetup{%
	\interlinepenalty=1000\relax
	\widowpenalty=1000\relax
	\clubpenalty=1000\relax
	\frenchspacing
	\biburlsetup
}
\printbibliography

\end{document}